\numberwithin{equation}{section}
\newtheorem{theorem}{Theorem}[section]
\newtheorem{assumption}{Assumption}[section]
\newtheorem{corollary}{Corollary}[section]
\newtheorem{lemma}{Lemma}[section]
\newtheorem{proposition}{Proposition}[section]
\newtheorem{definition}{Definition}[section]
\newtheorem{remark}{Remark}[section]
\newcommand{\propref}{Proposition~\ref}
\newcommand{\thmref}{Theorem~\ref}
\renewcommand{\P}{\mathbb{P}}
\newcommand{\R}{\mathbb{R}}
\newcommand{\E}{\mathbb{E}}
\newcommand{\N}{\mathbb{N}}
\newcommand{\F}{\mathcal{F}}
\newcommand{\cC}{\mathcal{C}}
\newcommand{\A}{\mathcal{A}}
\newcommand{\fL}{\mathfrak{L}}
\newcommand{\eps}{\varepsilon}
\newcommand{\nada}[1]{}
\title{Consumption, Investment, and Healthcare with Aging\thanks{We thank for helpful comments seminar participants at Collegio Carlo Alberto, ETH Z\"urich, University of Limerick, Alfred Renyi Institute, National Central University in Taiwan, the QMF conference at UTS Sydney, the Congress of the Bachelier Finance Society, and the University of Colorado at Boulder.}
}
\author{Paolo Guasoni\thanks{Boston University, Department of Mathematics and Statistics, 111 Cummington Mall, Boston, MA 02215, USA, and Dublin City University, School of Mathematical Sciences, Glasnevin, Dublin 9, Ireland, email: \texttt{paolo.guasoni@dcu.ie}. Partially supported by the ERC (278295), NSF (DMS-1412529), and SFI (16/SPP/3347 and 16/IA/4443).} \and Yu-Jui Huang \thanks{Department of Applied Mathematics, University of Colorado, Boulder, CO 80309, USA, email: \texttt{yujui.huang@colorado.edu}. Partially supported by NSF (DMS-1715439) and the University of Colorado (11003573).}
}
\date{}
\begin{document}
\maketitle

\nada{
\begin{center}
\Large
Preliminary and incomplete.\\
Please do not quote or redistribute without permission.
\end{center}
\bigskip
}

%\vspace{-1cm}
\begin{abstract}
This paper solves the problem of optimal dynamic consumption, investment, and healthcare spending with isoelastic utility, when natural mortality grows exponentially to reflect Gompertz' law and investment opportunities are constant. Healthcare slows the natural growth of mortality, indirectly increasing utility from consumption through longer lifetimes. 
Optimal consumption and healthcare imply an endogenous mortality law that is asymptotically exponential in the old-age limit, with lower growth rate than natural mortality. Healthcare spending steadily increases with age, both in absolute terms and relative to total spending. 
The optimal stochastic control problem reduces to a nonlinear ordinary differential equation with a unique solution, which has an explicit expression in the old-age limit. The main results are obtained through a novel version of Perron's method.
%Differential access to healthcare can account for observed longevity gains across cohorts. 

%Health-care slows the natural growth of mortality, indirectly increasing utility from consumption through longer lifetimes. This paper solves the problem of optimal dynamic consumption and healthcare spending with isoelastic utility, when natural mortality grows exponentially to reflect the Gompertz' law. Optimal consumption and healthcare imply an endogenous mortality law that is asymptotically exponential in the old-age limit, with lower growth rate than natural mortality. Health spending steadily increases with age, both in absolute terms and relative to total spending. Differential access to healthcare can account for observed longevity gains across cohorts. 
%The optimization problem reduces to a nonlinear ordinary differential equation with a unique solution, which has an explicit expression in the old-age limit. 
\end{abstract}

\textbf{JEL:} E21, I12
\smallskip

\textbf{MSC (2010):} 91G80, 49L25
\smallskip

\textbf{Keywords:} healthcare, consumption-investment, Gompertz' law, viscosity solutions, Perron's method.

\newpage

\section{Introduction}
%\red{(We should mention optimal investment somewhere in the introduction.)}\\
The steady rise in both incomes and life expectancy over the past century 
%-- and the more recent increase in the proportion of health expenditures -- 
alludes to tantalizing links between healthcare, wealth, and mortality, which are the subject of heated debate. Understanding the relative roles of healthcare, wealth, and medical progress in explaining longevity gains is as important as it is difficult (see \cite{CutlerDeatonLleras-Muney2006} for a survey). Yet, with few exceptions, models of optimal consumption and investment have largely shied away from mortality and healthcare, leaving a wide gap between idealized theoretical settings and realistic empirical studies.

Healthcare is different from consumption in typical goods and services: it often causes immediate pain, and is justified only by its expected effect in reversing or delaying the onset of disease and, ultimately, death. Healthcare does not directly generate utility, but, by reducing mortality risk, it extends the lifetime over which consumption yields utility.

Mortality (the probability that someone alive today dies next year) displays an approximate exponential growth with age, an observation that has remained remarkably stable since its discovery by \cite{Gompertz1825}, even as mortality has steadily declined at all age groups (Figure \ref{fig:mortality}). A central question is to which extent such decline can be ascribed to the availability and the optimal use of healthcare, and a satisfactory answer hinges on the predictions of a model in which healthcare choices and their resulting mortality are endogenous.
This is the goal of this paper.

Our model focuses on a representative household that makes consumption, investment, and healthcare spending decisions to maximize welfare. 
%The household's wealth,  which represents its net worth inclusive of discounted future income, earns a constant safe rate. 
Consumption generates utility, healthcare reduces the growth of mortality below its natural, constant rate of Gompertz' law, while investment helps increasing wealth. Optimality is reached when their marginal values are the same. 

A careful representation of the impact of death is a critical issue in models with endogenous mortality: the ostensibly natural approach of expected lifetime utility leads to  potential preference for death over life and violates the invariance of preferences to affine transformations. To avoid this pitfall, we assume that a death leaves the surviving household with a fraction of its previous wealth, but the same mortality rate. This representation is consistent with the interpretation of wealth as present value of current assets and future cash flows, and of the surviving household as a spouse in roughly the same age group. It also preserves affine invariance and unconditional preference of life over death.

Equally important is the impact of healthcare on mortality growth -- the \emph{efficacy} function. An unwise specification, in which sufficiently high spending can arrest and reverse aging, can lead to the implausible (and counterfactual) conclusion that early healthcare interventions can bring mortality to zero, leaving a household immortal albeit less wealthy. We exclude such dubious outcomes through two assumptions: first, the efficacy depends on healthcare spending relative to wealth, thereby emphasizing the opportunity costs (such as foregone income) of healthcare and healthy behaviors. Second, we posit that the same amount of healthcare is more effective when health is worse, i.e., when mortality is higher.

This paper contributes to the literature both mathematically and economically. On the technical side, the model leads to an optimal control problem which features, in addition to the usual consumption and investment, an arrival rate of jumps (deaths) that is partially controlled (by healthcare) through a state variable (mortality). As such a control problem with jumps, it normally corresponds to a Hamilton-Jacobi-Bellman (HJB) integro-differential equation. Yet, exploiting a scaling property of the value function, we are able to reduce the HJB equation to a nonlinear ordinary differential equation. Although such an equation does not admit explicit solutions, we prove that it has a single global solution through an unconventional version of Perron's method.  

Perron's method generally refers to the construction of a viscosity solution by interposition between a supersolution and a subsolution, which are typically more tractable (cf. \cite{JS12, BZ15}). By contrast, we apply Perron's method not to a single pair of super- and subsolutions, but to a collection thereof (Definition~\ref{def:Pi}). In addition to identifying the value function as the unique solution to the reduced nonlinear HJB equation (Theorem~\ref{thm:main 3}), this approach also delivers powerful estimates on the value function (Theorem~\ref{thm:main 4}), which in turn yield that Gompertz' law holds asymptotically in the old-age limit. Such additional insights entail some challenges, such as verifying the subsolution property (Propositions \ref{prop:subsolution} and \ref{prop:strict concave}): as the equation involved is of the first order, the regularizing property of a second-order term is not available to establish strict concavity, in contrast to standard viscosity applications.

\begin{figure}[t]
\centering
\includegraphics[width=.9\textwidth]{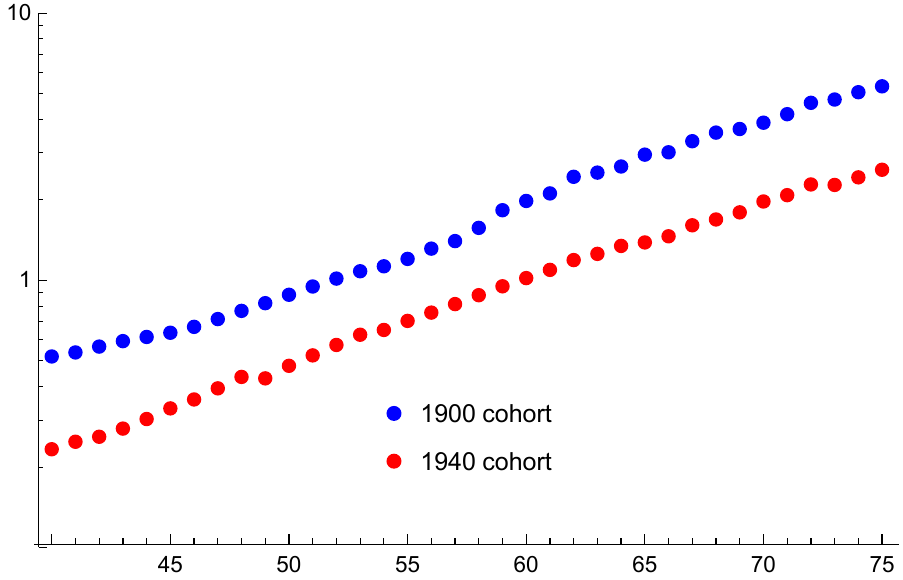}
\caption{\label{fig:mortality}
Mortality rates (vertical axis, in logarithmic scale) at adults ages for the birth cohorts of years 1900 and 1940 (from top to bottom) in the United States (both males and females). Source: Berkeley Human Mortality Database.
}
\end{figure}

Proving that the solution to the differential equation is indeed the value function (verification) also requires new tools. Standard results do not apply, as they do not support controllable jump rates, and, in extant results that support them \cite[chapters 20 and 21]{cohen2015stochastic}, jump rates depend on a control variable in a compact set, while in the present problem they depend on a \emph{state} variable, and the control has unbounded support. The technical challenges of such an extension are overcome in Theorems \ref{thm:verification} and \ref{thm:verification with stock}, which establish a verification result that exploits the properties of the law of the candidate optimal process.\footnote{A related potential approach to dealing with controlled jumps is stochastic Perron's method (a
verification type result without smoothness required), as recently employed in \cite{BL17}.}

%the growing literature that studies the interplay between investment, consumption, healthcare, and mortality. 
On the economic side, the effect of (exogenous) mortality on household decisions has long been recognized. \cite{Yaari1965} solves the problem of consumption and investment in a complete market and finds that annuitization is optimal in the absence of a bequest motive. Vice versa, \cite{Richard1975} proves that buying life insurance is optimal as a hedge against loss of future earnings. In the optimal consumption framework, \cite{rosen1988value} investigates the welfare changes from gains in life expectancy, observing the potential preference for death implied by some utility functions, while \cite{ShepardZeckhauser1984} estimate the reservation price of reduced mortality risk, which hints at the demand for healthcare. Overall, this literature investigates the marginal value of extending life, yet without incorporating this possibility endogenously.
%without incorporating this possibility explicitly.

Endogenous mortality dates back to \cite{Grossman1972}, who introduces the concept of health capital, assuming that its depreciation rate increases with age, and that death occurs when health drops below a given threshold. The limits of this approach are that death is a deterministic, perfectly foreseen event, and that health has constant returns to scale, so with bounded health depreciation a wealthy individual could choose to live forever. \cite{EhrlichChuma1990} overcome the latter issue assuming decreasing returns to scale, while \cite{Ehrlich2000} makes death uncertain by replacing health capital with ``self-protection'' expenditures that reduce mortality.
\cite{Yogo2016} and \cite{HugonnierPelgrinSt-Amour2012} combine these approaches, modeling mortality as a function of health capital that depreciates over time, and specifying the accumulation of health capital as a concave function of health-expenditures \emph{relative} to health capital. By contrast, \cite{hall2007value} emphasize the dual role of health in determining mortality and additively increasing utility from consumption.

As this literature has evolved toward richer and more realistic models, Gompertz' law of exponential mortality growth, however, has remained a conspicuous absence. This paper makes this feature a central element: taking the mortality rate as sole state variable in addition to wealth, determined jointly by Gompertz' law and healthcare choices, we investigate the mutual response between healthcare spending and mortality growth, and the lower mortality rates that result.
%focusing on the mortality rate as sole state variable in addition to wealth, investigating the response of healthcare spending to mortality growth, and the lower mortality rates that result.

%The model, based on isoelastic preferences, has several implications. 
Several important implications are brought by our analysis.
First, it identifies the marginal efficacy of optimal health spending as inversely proportional to the elasticity of consumption with respect to mortality. Because in the model such elasticity increases with mortality, and healthcare spending has diminishing returns (decreasing marginal efficacy), it follows that health spending relative to wealth increases with age and mortality, converging to a maximum finite rate in the old-age limit.
Second, health spending is nearly negligible in youth, but it rises rapidly with age, outpacing the growth in consumption and taking a larger share of total spending. At very old ages the trend reverses, as health spending rate stabilizes while consumption continues to rise with mortality. The latter effect, however, becomes visible only at ages that are not reached by most individuals.
Third, the model generates an endogenous mortality curve in which its natural growth is reduced by healthcare.  Importantly, endogenous mortality is also close to exponential, and asymptotically exponential in the old-age limit, thereby confirming the empirical observation that Gompertz' law has survived two centuries of medical progress.
The reduction in mortality growth depends on the efficacy of healthcare as well as the elasticity of intertemporal substitution (EIS), but not on other quantities.

The rest of the paper is organized as follows: Section 2 first discusses in detail the assumptions on preferences, mortality, and healthcare, then provides rigorous definitions of the model and its probabilistic structure. Section 3 presents the main results in order of complexity, first in a baseline model with neither aging nor healthcare, then adding aging, and finally in the complete setting with both aging and healthcare. {Section 4 incorporates risky assets into the main model, while Section 5 calibrates the main model and discusses the implications. Section 6 concludes.}

\section{The Model}

The main model aims to understand optimal consumption and healthcare spending in relation to mortality, %with a model that focuses 
with a focus on a household seeking to maximize total welfare. {Section~\ref{sec:risky} extends the analysis to include investment in risky assets.}

\subsection{Preferences and Lifetimes}
A natural starting point are the familiar time-additive preferences, in which expected welfare has the representation
\begin{equation}\label{eq:simobj}
\E\left[\int_0^\tau e^{-\delta t}U(X_t c_t)dt \right]
,
\end{equation}
where $\tau$ denotes the lifetime,  $U:\mathbb R_+ \to \mathbb R$ is a utility function (i.e., increasing and concave), the parameter $\delta\ge 0$ captures time-preference, and $c_t$ represents the rate of consumption per unit of time, as a fraction of current wealth $X_t$, interpreted as the household's net worth, which includes the present value of future income, not specified separately.

%While intuitively appealing, such an approach to model finite lifetimes is problematic. 
The limit of this approach is that it tacitly reduces death to a change in preferences, assuming that utility equals zero in the afterlife $[\tau,\infty)$, and implying that death is preferable to negative utility, as recognized by \cite{ShepardZeckhauser1984}, \cite{rosen1988value}, \cite{BommierRochet2006}, \cite{hall2007value}, \cite{Bommier2010}. 

\nada{
In addition,  a shift in the utility function by $k$ changes welfare to
\begin{equation}\label{eq:addutconst}
\E\left[\int_0^\tau e^{-\delta t}U(X_t c_t)dt \right] + 
k \E\left[\frac{1-e^{-\delta \tau}}\delta\right]
,
\end{equation}
an objective that is no longer equivalent to the previous one when the household can affect the distribution of $\tau$ -- as it can through healthcare, which extends life expectancy. Thus, this ostensibly natural objective violates the usual translation-invariance of expected utility, as recognized by \cite{ShepardZeckhauser1984}, \cite{rosen1988value}, \cite{BommierRochet2006}, \cite{hall2007value}, \cite{Bommier2010}.

To address this issue, some authors have adopted the constant $k$ in the optimization problem as a ``preference for life'' parameter. The drawback of such a choice is that 
%$U$ and $k$ remain entangled to each other, as adding a constant to $U$ while subtracting it to $k$ leaves the model unchanged. At a deeper level, even normalizing $U$ (for example, setting $U(1)=0$ and $U'(1)=1$) would leave 
the effect of $k$ is entangled with the utility function and on the time-preference parameter $\delta$, making its role hard to interpret. 
%By contrast, \cite{HugonnierPelgrinSt-Amour2012} adopt a recursive specification of preferences that identifies death as a state of permanent zero consumption. Resolving this issue begins with acknowledging that even simple objectives like \eqref{eq:simobj}, which purport to focus on lifetime consumption, implicitly specify a tradeoff between lifetime and after-life consumption, the former costly and with increasing utility, the latter costless and constant. 
}

Yet, a more appealing approach than costless afterlife utility is to note that households have concrete bequest motives which center on the welfare of similar individuals. For example, upon his death a man may leave behind a wife in a similar age group, hence with a similar mortality rate. Such a household makes consumption and healthcare spending choices that account for the welfare of both spouses over their lifetimes. Larger households face even more complex choices, which involve the lives of several people.

Striking a balance between realism and tractability, suppose that a household experiences a sequence of deaths at times $(\tau_n)_{n\in\N_0}$ with $0:=\tau_0<\tau_1<\dots<\tau_n\uparrow \infty$ a.s. and that after each death the surviving household members inherit a fraction $\zeta\in [0,1]$ of wealth, while retaining the same mortality. This assumption is clearly a simplification, as in reality households include only a few members, but this flaw is mitigated by the time-preference parameter $\delta$, whereby the first few lifetimes account for most of the expected utility.
%The approximation considered is thus more accurate at young than at old ages.

With this assumption, denoting by $X_t$ the household wealth at time $t$ if no deaths have occurred, the actual household wealth after the $n$-th death is $\zeta^n X_t$, and total welfare becomes
%\footnote{Note that this specification preserves the usual invariance with respect to scaling and translation in the utility function because $\tau_n\uparrow\infty$ a.s., which means that healthcare cannot bring immortality. This property is satisfied endogenously by the model in this paper.}
% $X_{\tau_n-}$ to drop to $X_{\tau_n}=\zeta X_{\tau_n-}$
\begin{equation}
\E\left[ \sum_{n=0}^\infty \int_{\tau_{n}}^{\tau_{n+1}} e^{-\delta t} U(\zeta^n X_t c_t) dt\right]
.
\end{equation}
Furthermore, the discussion henceforth focuses on the isoelastic class
\begin{equation}\label{eq:power_utility}
U(x) = \frac{x^{1-\gamma}}{1-\gamma}
\qquad 0<\gamma\ne 1
\end{equation}
which, in the absence of healthcare and mortality, generates consumption policies proportional to wealth. Such a property is attractive because empirical consumption-wealth ratios do not display any significant secular trend. 

%With the additive preferences considered here, the parameter $\gamma$ plays the dual role of relative risk aversion and reciprocal of the elasticity of intertemporal substitution (EIS) $1/\gamma$, but it is the latter that is more relevant to the model's implications, as no risky investment opportunities are present. \red{(Here, do we still want to mention ``risk aversion'' and ``EIS'', and point out that ``EIS'' is the proper interpretation? If we still do so, readers may get confused in Section 4 when we add risky assets: should $\gamma$ then be ``risk aversion'' or ``reciprocal of EIS''? In the previous version, this is not a problem, as we introduced recursive utility to separate the roles of ``risk aversion'' and ``EIS''. Since we now drop the recursive utility part, maybe it's better not to say too much, if not at all, about ``risk aversion'' versus ``EIS''. Then readers very likely would always think of $\gamma$ as risk aversion without confusion.  There is however one minor issue: if we don't say EIS, then we cannot use Bansal and Yaron (2004)'s estimate of EIS to justify our choice of $\gamma$ in Section 5) .}

%In the present setting, where deaths lead to major wealth shocks that can be mitigated only through healthcare, but cannot be hedged with life-insurance contracts, 

In contrast to the lifetime-horizon approach described earlier, this model does not equate death to a change in preferences, but rather to a loss for the surviving household, while leaving preferences unchanged. The parameter $\zeta$ controls the severity of the loss: $\zeta=1$ implies immortality (common in the literature as uncommon in reality), as the arrivals of $\tau_n$ are inconsequential; at the other extreme, with $\zeta=0$ death implies a total loss, after which only zero spending is possible, leading to a constant utility rate of $U(0)$.% as in \cite{HugonnierPelgrinSt-Amour2012}. 

In general, $\zeta\in[0,1]$ crudely summarizes the combined economic effects of death, which include inheritance and estate taxes, loss of pensions and annuities, foregone future income, and a myriad of other actual or opportunity costs. % implied by the loss of a family member
(A loss of future cash flow is equivalent to a loss in wealth, assuming that the cash flow is replicable.) 
Although the model does not include explicitly life-insurance and annuity contracts, the parameter $\zeta$ can also be thought of as a measure of protection of the household wealth against mortality losses, with full protection for $\zeta=1$ and no protection for $\zeta=0$.

The model does not distinguish between the relative impact of household components with different ages, as the focus of this paper is not on household structure but rather on the tradeoff between consumption and healthcare, which is now introduced.

\subsection{Healthcare and Mortality}

The household is homogeneous, in that all members share the same mortality $M_t$ starting at the initial level $M_0 := m_0$. In the absence of healthcare, mortality grows exponentially, consistently with the classical \cite{Gompertz1825} law
\begin{equation}
dM_t = \beta M_t dt 
.
\end{equation}
Healthcare spending reduces mortality growth according to an \emph{efficacy} function $g: \mathbb R_+ \to \mathbb R_+$ of $h_t$, the spending rate in healthcare as a fraction of household wealth:
\begin{equation}\label{eq:mortdyn}
dM_t = (\beta - g(h_t))M_t dt
.
\end{equation}

The efficacy function $g$ is assumed {strictly increasing and} concave, which reflects the diminishing returns from increased health expenditure. In addition, $g(0)=0$, which identifies $\beta$ as the \emph{natural} rate at which mortality grows in the absence of healthcare expenditures. Finally $g$ is defined only on the positive real line, consistent with the interpretation of health investment as irreversible.\footnote{See for example \cite{Grossman1972}, \cite{EhrlichChuma1990}, \cite{hall2007value}.}
%Chang (1996); Picone et al. (1998); Ehrlich (2000); Edwards (2008); 

The assumption that healthcare expenses affect mortality growth relative to wealth rather than in absolute terms emphasizes the lost income and earning opportunities resulting from healthcare usage.\footnote{For example, \citeauthor{Smith1999} (\citeyear{Smith1999}, \citeyear{Smith2005}) reports ill health as a leading cause of early retirement.}
For example, for households whose wealth is dominated by the value of future income, the lost income from healthcare usage is approximately proportional to wealth. In addition, means-tested subsidies and income taxes on health-insurance premiums effectively make the same medical procedures cheaper for poorer households, and the assumption of proportionality approximates this dependence with a linear relation. 
Likewise, \cite{chetty2016association} recently find that life expectancy is significantly correlated with health behaviors but not with access to medical care.\footnote{In their words, \emph{geographical differences in life expectancy for individuals in the lowest income quartile were significantly correlated with health behaviors such as smoking [...], but were not significantly correlated with access to medical care, physical environmental factors, income inequality, or labor market conditions.}}
In reality the determinants of healthcare spending on mortality are complex \citep{CutlerDeatonLleras-Muney2006}, and the relative importance of proportional and absolute components is largely an empirical question. The present simplification offers a plausible and parsimonious approximation that focuses on proportional costs.

A related important reason to consider proportional costs is to avoid the unrealistic implication that wealth buys immortality. Indeed, if $h_t$ in \eqref{eq:mortdyn} were to represent an absolute amount of health expenditures, wealthy individuals could effectively reduce mortality to zero through early health expenses while maintaining non-zero consumption. In reality, life expectancy in the top 1\% income percentile is about five years higher than for median incomes \citep[Figure 2]{chetty2016association}.
%(But, how do we justify ourselves against the case where $h_t$ is the absolute amount and $g$ is assumed to satisfy $\max_{x}g(x) < \beta$?)

\subsection{Savings}

In the basic version of the model, the household leaves savings in a safe asset which earns a constant rate $r$, with no other financial or insurance contracts available. In particular, the household does not have access to life-insurance contracts that pay out in the event of death. This assumption is consistent with the interpretation of wealth as inclusive of future income, which in practice can be hedged only in rather limited amounts.

With these assumptions, at each time $t$ the household spends at rates $c_t$ in consumption and $h_t$ in healthcare, while earning a constant interest rate $r$ on wealth.
If $N_t$ denotes the number of deaths up to time $t$, regulated by the mortality dynamics in \eqref{eq:mortdyn}, household wealth $\Xi_t = X_t \zeta^{N_t}$ incorporating death losses evolves as:
\begin{equation}
\frac{d\Xi_t}{\Xi_t} = (r - c_t - h_t)dt  - (1-\zeta)dN_t
.
\end{equation}
%{\color{red}(According to (2.3), $X$ is the wealth process without considering the effect of deaths, so its dynamics should be simply $\frac{dX_t}{X_t} = (r - c_t - h_t)dt$. Maybe we can add a line here to specify that the wealth process under the effect of deaths is then $\zeta^{N_t} X_t$, with $N_t$ as defined below).}
Note that the only source of randomness is the arrival of deaths, without which the model reverts to a deterministic consumption-investment problem, in which the optimal policy is to consume at a rate proportional to wealth.
%This equation defines the main budget equation in the model.

After the description and motivation of the main model provided here, the next section proceeds with the mathematical details required for the precise statement of the main result.

\subsection{Definitions and Notation}\label{subsec:setup}

The rigorous formulation of the model starts with the probability space $(\Omega,\F,\P)$, which supports a sequence $\{Z_n\}_{n\in\N}$ of independent, identically distributed random variables with an exponential law $\P(Z_n> z) = e^{-z}$ for all $z\ge 0$ and $n\in\N$. %Set $\cG_0:=\{\emptyset,\Omega\}$ and $\cG_n:=\sigma(Z_1,\cdots,Z_n)$ for all $n\in\N$. 
(These random times are interpreted as mortality-adjusted times of death, as defined below.)

Denote by $L^{1,+}_\text{loc}$ the collection of all nonnegative locally integrable functions $f:\R_+\to\R_+$, and note that $L^{1,+}_\text{loc}$ is metrizable, thus a Borel space. Define also
\[
\mathfrak L := \left\{\{f_n\}_{n\in \N_0}\ \middle|\  f_0\in  L^{1,+}_\text{loc},\ f_n = f(Z_1,...,Z_n)\ \hbox{for some Borel}\ f:\R^n_+\to L^{1,+}_\text{loc}\right\},
\]
which represents the family of sequences of consumption-healthcare policies, with the policy after the $n$-th death depending possibly on the previous $n$ times, in addition to calendar time.

Consider a nondecreasing concave function $g:\R_+\to\R_+$ with $g(0)=0$. For any $(t,m)\in \R^2_+$ and $h\in L^{1,+}_\text{loc}$, let $M^{t,m,h}$ be the deterministic process defined by 
\begin{equation}\label{M with h}
dM^{t,m,h}_s = M^{t,m,h}_s \left[\beta  - g(h(s))\right]ds,\quad M^{t,m,h}_t=m,
\end{equation}
where $\beta\ge 0$ is a fixed constant. 
Next, the arrival times of deaths are defined in terms of the random variables $\{Z_n\}_{n\in\N}$. For any $m\ge 0$ and $\{h_n\}\in \mathfrak L$, construct recursively a sequence $\{\tau_n\}_{n\in \N_0}$ of random times as follows: first, set $\tau_{0} := 0$ and $m_0:=m$; then, for each $n\ge 0$, define
%\begin{itemize}
%\item [1.] Set $\tau_{0} := 0$ and $m_0:=m$. 
%\item [2.] For each $n\ge 0$, define
\begin{equation}\label{tau's}
\tau_{n+1} := \inf\left\{t\ge \tau_{n}\ \middle|\ \int_{\tau_{n}}^t M^{\tau_{{n}},m_{n},h_{n}}_s ds \ge Z_{n+1}\right\},\quad m_{n+1}:= M^{\tau_{n},m_{n},h_{n}}_{\tau_{{n+1}}}.
\end{equation}
%\end{itemize}
Now introduce the counting process $\{N_t\}_{t\ge 0}$:
\begin{equation}\label{counting process}
N_t := n\quad \hbox{for}\ t\in[\tau_{{n}},\tau_{{n+1}}),
%t\in\llbracket\tau^{\theta_{n}},\tau^{\theta_{n+1}}\llbracket.
\end{equation}
and observe from the construction of $\{\tau_{n}\}_{n\in \N_0}$ that
\begin{equation}\label{tau>t}
\P\left(N_t =n\ \middle|\ Z_1,...,Z_n \right) = \P\left(t\in [\tau_{n},\tau_{{n+1}})\ \middle|\ Z_1,...,Z_n\right) = \exp\left(-\int_{\tau_{n}}^{t} M^{\tau_{n},m_n,h_n}_s ds \right)1_{\{t\ge \tau_{n}\}}
,
\end{equation}
which means that the mortality rate of $\tau_{n+1}$ at time $t$ is precisely $M^{\tau_{n},m_n,h_n}_t$, as required.

Consider now the collection of processes:
\begin{equation}\label{A}
\A := \left\{c_t = \sum_{n=0}^\infty c_n(t) 1_{\{\tau_{n}\le t<\tau_{n+1}\}},\ h_t = \sum_{n=0}^\infty h_n(t) 1_{\{\tau_{n}\le t<\tau_{n+1}\}}\ \middle|\ \{c_n\}, \{h_n\}\in\mathfrak L\right\}.
\end{equation}
The construction of $\A$ is understood as follows: first, use $\{h_n\}\in\mathfrak L$ to construct $\{\tau_n\}$ as in \eqref{tau's}; then, use $\{c_n\},\{h_n\}\in\mathfrak L$ to define the processes $c_t$ and $h_t$.  Then, the process $M^{t,m,h}$ is defined as in \eqref{M with h} for any $h$ as in \eqref{A}.

\subsection{Problem Formulation} 

%Consider a financial market that contains only a risk-free asset with interest rate $r>0$. 
With initial wealth $x\ge 0$ and initial mortality rate $m\ge 0$ at time $t\ge 0$, an household at each time $s\ge t$ chooses the rates of spending in consumption ($c_s\ge 0$) and healthcare ($h_s\ge 0$).  With savings earning the safe rate $r$, household wealth $X^{t,x,c,h}_s$ before mortality losses evolves as
\begin{equation}\label{wealth}
dX^{t,x,c,h}_s = X^{t,x,c,h}_s[r - (c_s+h_s)] ds,\quad  X^{t,x,c,h}_t =x.
\end{equation}
%While consumption generates immediate utility, healthcare spending reduces the growth of mortality, thereby extending the random lifetime. 
The consumption and healthcare policies $(c,h)\in\A$ describe planned expenditures depending on calendar time and past and current events, as follows. At time $0$, the household chooses deterministic policies $c_0(t)$ and $h_0(t)$, and mortality $M^{0,m,h_0}$  evolves accordingly as in \eqref{M with h}. Upon the first death at time $\tau_{1}$, the surviving household carries on with wealth $\zeta X^{0,x,c_0,h_0}_{\tau_{1}}$ and mortality $M^{0,m,h_0}_{\tau_{1}}$, switching to the deterministic policies $c_1(t)$ and $h_1(t)$. In general, if $n$ deaths have occurred by time $t$, the spending policies are $c_n(t)$ and $h_n(t)$.
%$\fc^\te$ and $\fh^\te$ denote the corresponding stochastic processes of instantaneous spending rates.

Healthcare makes mortality partially endogenous, and its effect is summarized by the efficacy function $g$, with which the houehold reduces the growth of mortality $M^{0,m,h}$ by selecting appropriate $\{h_n\}\in\fL$. In the absence of healthcare (i.e. $g\equiv 0$), $M^{0,m}_t =m e^{\beta t}$ follows Gompertz' law with parameter $\beta$.
%Given $m\ge 0$, we will in particular pick the healthcare policy $\fh$ from the collection
%\begin{equation*}
%\fL(m):=\left\{\fh\in \fL\ \middle|\  M^{0,m,\fh^\te}_t\ge 0\ \hbox{a.s.},\ \hbox{for all}\ t\ge 0\right\}.\quad \hbox{{\color{red}(just for now)}}
%\fL(m):=\left\{\fh\in \fL\ \middle|\ g(\fh^\te_t) \le \beta M^{0,m,\fh^\te}_t\ \hbox{a.s.},\ \hbox{for all}\ t\ge 0\right\}, 
%\end{equation*}
%which reflects the reality that medical technology can reduce only the \emph{growth} of the mortality rate, but not mortality rate itself. 
%Also, we set $A(m) := H\times H(m)$.
The household's objective at time $0$ is to maximize expected utility from intertemporal consumption
\begin{equation}\label{problem}
V(x,m):= \sup_{(c,h)\in \A}\E\left[\int_0^\infty e^{-\delta t}U\left(c_t\zeta^{N_t} {X}^{0,x,c,h}_t\right)dt\right],
\end{equation}
%where the adjusted wealth process $\{\widetilde{X}_t\}_{t\ge 0}$ is given by 
%\begin{equation*}
%\widetilde{X}_t = \zeta^n X_t \quad \hbox{for}\ t\in[\tau^h_n,\tau^h_{n+1}).
%\end{equation*}
where the utility function $U$ is of the isoelastic form in \eqref{eq:power_utility}.
\nada{
\begin{equation}\label{power U}
U(z) := \frac{z^{1-\gamma}}{1-\gamma}\quad z\ge 0
\qquad\text{for}\qquad
1\ne \gamma > 0
,
\end{equation}
}

%%%%%%%%%%%%%%%%%%%%%%%%%%%%%%%%%%%%%%%%%%%%%%%%%%%%%%%%%%%%%%%%%%%%%

\section{Main Results}\label{sec:main results}

The optimization problem considered in this paper departs from the classical consumption-investment problem in two aspects: \emph{aging}, whereby natural mortality follows Gompertz' law, and \emph{healthcare}, which slows down mortality growth. 

To understand the separate effects of each aspect, henceforth we present the main results in order of complexity: First (Section 3.1), with neither aging nor healthcare -- a minor variation of the classical setting. Second (Section 3.2), with aging but without healthcare -- a partially new setting that provides a reference for the general model with both aging and healthcare (Section 3.3).

\subsection{Neither Aging nor Healthcare ($\beta =0$ and $g \equiv 0$)}\label{subsec:beta=0 g=0}
When the mortality rate is constant ($\beta=0$), and healthcare is unavailable ($g \equiv 0$), the household is essentially \emph{forever young}. The arrival times of death $\{\tau_n\}_{n\in\N}$ are simply 
\[
\tau_0 = 0,\quad \tau_{n+1} = \inf\{t\ge \tau_{n} \mid (t-\tau_{n})\cdot m\ge Z_{n+1}\}\ \ \forall n\ge 0,
\]
where $m\ge 0$ is the constant mortality rate.
%We will therefore write the filtration $\mathbb{F}^\te=\{\F^\te_t\}_{t\ge 0}$ simply as $\mathbb{F}^m =\{\F^m_t\}_{t\ge 0}$. 
The counting process $N$ defined in \eqref{counting process} is a Poisson process with intensity $m\ge 0$, and the collection of controls $\A$ in \eqref{A} reduces to 
\begin{equation}\label{Ac}
\cC:= \left\{c_t = \sum_{n=0}^\infty c_n(t) 1_{\{\tau_{n}\le t<\tau_{n+1}\}}\ \middle|\ \{c_n\}\in\mathfrak L\right\}.
\end{equation}
The value function in \eqref{problem} is then
\begin{equation}\label{problem_0}
V(x,m)= \sup_{c\in \cC}\E\left[\int_0^\infty e^{-\delta t}U\left(c_t\zeta^{N_t} {X}^{0,x,c}_t\right)dt\right].
\end{equation}

The next proposition describes the optimal policy in this basic setting:
\begin{proposition}\label{prop:main 1}
Let $m\ge 0$ satisfy
\begin{equation}\label{assumption_0}
\delta+(1-\zeta^{1-\gamma})m-(1-\gamma) r >0.
\end{equation}
Then, for all $x\ge 0$, 
$
V(x,m)= \frac{x^{1-\gamma}}{1-\gamma} c_0(m)^{-\gamma},
$
where
\begin{equation}\label{c_0}
c_0(m) := \frac{\delta+(1-\zeta^{1-\gamma})m}{\gamma}+\left(1-\frac1\gamma\right) r.
\end{equation}
Furthermore, $\hat c_t :=  c_0(m)$, for all $t\ge 0$, is an optimal control of \eqref{problem_0}.
\end{proposition}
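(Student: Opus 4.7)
I would use the standard guess-and-verify approach for the HJB equation associated with \eqref{problem_0}. Since $\beta = 0$ and $g \equiv 0$, mortality $M_t \equiv m$ is constant and $N_t$ is a Poisson process of intensity $m$ independent of any control, so the value function $V(x,m)$ should satisfy the first-order integro-differential equation
\[
\delta V(x,m) = \sup_{c\ge 0}\Big\{\tfrac{(cx)^{1-\gamma}}{1-\gamma} + V_x(x,m)\,x(r-c) + m\big[V(\zeta x,m) - V(x,m)\big]\Big\}.
\]
Plugging in the natural homogeneous ansatz $V^*(x,m) = \tfrac{x^{1-\gamma}}{1-\gamma}A(m)$ eliminates the $x$-dependence. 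The interior first-order condition $c^{-\gamma} = A(m)$ identifies the candidate maximizer as $\hat c = A(m)^{-1/\gamma}$, and substituting back reduces the HJB to the algebraic identity $\gamma A(m)^{-1/\gamma} = \delta + (1-\zeta^{1-\gamma})m - (1-\gamma)r$. Under \eqref{assumption_0} the right-hand side is strictly positive and has the unique positive root $A(m) = c_0(m)^{-\gamma}$, so the candidate optimizer is exactly $\hat c = c_0(m)$ as in \eqref{c_0}.

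Next I would verify achievability directly. Under $\hat c_t \equiv c_0(m)$, pre-mortality wealth is deterministic, $X_t^{0,x,\hat c} = xe^{(r-c_0(m))t}$, and $N_t$ is Poisson of intensity $m$. Using $\E[\zeta^{(1-\gamma)N_t}] = \exp\!\big(m(\zeta^{1-\gamma}-1)t\big)$ and Fubini,
\[
\E\!\left[\int_0^\infty e^{-\delta t}U\big(c_0(m)\zeta^{N_t}X_t\big)dt\right] = \frac{c_0(m)^{1-\gamma}x^{1-\gamma}}{1-\gamma}\int_0^\infty e^{-(K+(1-\gamma)c_0(m))t}dt,
\]
with $K := \delta + (1-\zeta^{1-\gamma})m - (1-\gamma)r$. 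The defining identity $\gamma c_0(m) = K$ collapses the exponent to $-c_0(m)$, which is positive by \eqref{assumption_0}, so the integral equals $c_0(m)^{-1}$ and the expression reduces to $\tfrac{x^{1-\gamma}}{1-\gamma}c_0(m)^{-\gamma} = V^*(x,m)$. This already yields the lower bound $V(x,m) \ge V^*(x,m)$.

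For the matching upper bound, for any $c \in \cC$ I would apply It\^o's formula for pure-jump processes to $e^{-\delta t}V^*(\Xi_t,m)$, where $\Xi_t := \zeta^{N_t}X_t^{0,x,c}$ solves $d\Xi_t/\Xi_{t-} = (r-c_t)dt - (1-\zeta)dN_t$. Because $V^*$ solves the HJB and the supremum is attained at $c = c_0(m)$, the pointwise drift inequality $-\delta V^* + V^*_x\Xi_t(r-c_t) + m[V^*(\zeta\Xi_t,m)-V^*(\Xi_t,m)] \le -U(c_t\Xi_t)$ holds for every $c_t \ge 0$, so after compensating $dN_t$ by $m\,dt$ the process $e^{-\delta t}V^*(\Xi_t,m) + \int_0^t e^{-\delta s}U(c_s\Xi_s)ds$ is a local supermartingale. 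Stopping at the jump times and a deterministic horizon $T$, taking expectation, and then passing $T \to \infty$, I would obtain $V^*(x,m) \ge \E\int_0^\infty e^{-\delta s}U(c_s\Xi_s)\,ds$ for every $c \in \cC$, whence $V \le V^*$. Combined with the previous step, this proves $V = V^*$ and the optimality of $\hat c_t \equiv c_0(m)$.

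The main obstacle, as usual in infinite-horizon CRRA verification, is the transversality $\lim_{T\to\infty}\E[e^{-\delta T}V^*(\Xi_T,m)] = 0$ together with the upgrade from local to true supermartingale. The case $\gamma < 1$ is benign because $V^*, U \ge 0$ and the crude envelope $\Xi_t \le \zeta^{N_t}xe^{rt}$, combined with $\E[\zeta^{(1-\gamma)N_t}] = e^{m(\zeta^{1-\gamma}-1)t}$ and \eqref{assumption_0}, provides an integrable domination and makes monotone convergence applicable. The case $\gamma > 1$ is more delicate since $V^* < 0$ can blow up when $\Xi_T$ becomes small under aggressive consumption; I would restrict attention to policies with $J(c) := \E\int_0^\infty e^{-\delta t}U(c_t\Xi_t)dt > -\infty$ (the inequality $J(c) \le V^*$ being trivial otherwise) and then use a standard concatenation argument, replacing $c$ on $(T,\infty)$ by the constant $\hat c$ and invoking the Markov structure to conclude $\E[e^{-\delta T}V^*(\Xi_T,m)] \to 0$ along a subsequence.
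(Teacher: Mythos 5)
Your HJB derivation, the identification of $c_0(m)$ via the first-order condition, and the direct computation showing that $\hat c_t\equiv c_0(m)$ attains $\frac{x^{1-\gamma}}{1-\gamma}c_0(m)^{-\gamma}$ (using $\E[\zeta^{(1-\gamma)N_t}]=e^{m(\zeta^{1-\gamma}-1)t}$) are all correct, and for $0<\gamma<1$ your argument closes: the candidate supermartingale is nonnegative, so the local-to-true upgrade is automatic and the $V^*$-term can simply be dropped before letting $T\to\infty$. This is the same verification the paper runs in its Case I, except that the paper iterates the dynamic programming inequality conditionally over the death times $\tau_n$ (Theorem~\ref{thm:verification}) instead of invoking It\^o's formula for jump processes.

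The genuine gap is the case $\gamma>1$. There the process $e^{-\delta t}V^*(\Xi_t,m)+\int_0^t e^{-\delta s}U(c_s\Xi_s)\,ds$ is nonpositive, so neither the supermartingale upgrade nor transversality is free, and the concatenation argument you invoke points the wrong way: writing $J(c^T)=\E\big[\int_0^T e^{-\delta t}U(c_t\Xi_t)\,dt\big]+\E[e^{-\delta T}V^*(\Xi_T,m)]$ and using $J(c^T)\le V(x,m)$ only bounds the nonpositive quantity $\E[e^{-\delta T}V^*(\Xi_T,m)]$ from \emph{above}, whereas the supermartingale inequality $V^*(x,m)\ge \E\big[\int_0^T e^{-\delta t}U(c_t\Xi_t)\,dt\big]+\E[e^{-\delta T}V^*(\Xi_T,m)]$ requires the \emph{lower} bound $\liminf_{T\to\infty}\E[e^{-\delta T}V^*(\Xi_T,m)]\ge 0$; the identity $\E[e^{-\delta T}V^*(\Xi_T,m)]=J(c^T)-\E[\int_0^T\cdots]$ cannot supply that without an a priori lower bound on $J(c^T)$, and $J(c)>-\infty$ alone does not obviously provide one (wealth, hence $V^*(\Xi_T,m)=\frac{\Xi_T^{1-\gamma}}{1-\gamma}c_0(m)^{-\gamma}$, can be driven toward $-\infty$ by aggressive consumption). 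The paper's device, which you should adopt, is the $\eps$-perturbation of Proposition~\ref{prop:verification}: endow the household with an extra, never-spent $\eps$ of wealth so that, as in \eqref{eps-policies}, $X^{0,x+\eps,c_\eps}_t=X^{0,x,c}_t+\eps e^{rt}\ge\eps e^{rt}$; this yields the pathwise bound $0\ge w(\zeta^n X^{0,x+\eps,c_\eps}_t,m)\ge \frac{(\zeta^n\eps)^{1-\gamma}}{1-\gamma}e^{(1-\gamma)rt}c_0(m)^{-\gamma}$, from which \eqref{t to infty'} and \eqref{n to infty'} follow under \eqref{assumption_0}, and one then sends $\eps\to0$ via \eqref{eps to 0}. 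One must still verify the transversality conditions \eqref{t to infty} and \eqref{n to infty} for the candidate $\hat c$ itself so that the inequality becomes an equality along $\hat c$; this is where \eqref{assumption_0} enters through $c_0(m)+m\zeta^{1-\gamma}>0$, and your explicit computation essentially covers it.
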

\begin{proof}
See Section~\ref{subsec:proof beta=0 g=0}.
\end{proof}

First, note that the parametric restriction in \eqref{assumption_0} is a well-posedness condition, which requires that the time-preference rate is large enough to prevent consumption from being deferred indefinitely. In fact, this restriction is equivalent to a positive consumption rate \eqref{c_0}.

The main message of this proposition is that, in the absence of both aging and healthcare, the optimal consumption rate is a constant proportion of wealth, resulting from a weighted sum of the interest rate $r$ and of the discount rate $\delta+(1-\zeta^{1-\gamma})m$, which captures the effects of time-preference $\delta$ and of mortality $m$, weighted for its impact via $\zeta$. In particular, for a total loss ($\zeta=0$), mortality adds one-to-one to time-preference $\delta$, and therefore it is equivalent to a higher $\delta$, as in \cite{Yaari1965}.

Importantly, higher mortality implies a higher consumption rate for EIS $1/\gamma>1$, while the opposite holds for $1/\gamma<1$. This dependence is explained in terms of the usual income and substitution effects in response of negative wealth shocks. On one hand, higher mortality rate spurs the household to consume before wealth is reduced by deaths (substitution effect). On the other hand, mortality shocks mean less future consumption, which in turn encourages savings to alleviate the consumption shock (income effect). Either of these countervailing effects prevails above or below  $\gamma=1$. At this threshold, which corresponds to logarithmic utility, the two effects perfecly offset each other, and the consumption rate reduces to the time preference $\delta$, regardless of mortality $m$, its impact $\zeta$, and the safe rate $r$.

\subsection{Aging without Healthcare ($g\equiv 0$)}\label{subsec:beta>0 g=0}

The next conceptual step is to add aging to the optimization problem, assuming that mortality grows according to Gompertz' law ($\beta>0$), with no healthcare available ($g \equiv 0$). Thus, for an initial mortality $m\ge 0$, $M_t = me^{\beta t}$ for all $t\ge 0$. The times of death $\{\tau_{n}\}_{n\in\N}$ then become
\begin{equation}\label{tau's beta>0}
\tau_0=0,\quad \tau_{n+1}=\inf\left\{t\ge \tau_{n}\ \middle|\ (e^{\beta t} - e^{\beta\tau_{n}}) \frac{m}{\beta}\ge Z_{n+1}\right\}\ \forall n\ge 0
, 
\end{equation}
%We will simply write the filtration $\mathbb{F}^\te=\{\F^\te_t\}_{t\ge 0}$ as $\mathbb{F}^m =\{\F^m_t\}_{t\ge 0}$, the counting process $N^{\te}$ as $N^m$, and the consumption policy $\fc^\te$ as $\fc^m$. Then, 
while the set of controls $\A$ in \eqref{A} again reduces to $\cC$ in \eqref{Ac}, and the value function in \eqref{problem} to the form \eqref{problem_0}.

The next proposition describes the effect of aging on the optimal consumption-savings problem:
\begin{proposition}\label{prop:main 2}
Assume either one of the two conditions: (i) $\gamma, \zeta \in (0,1)\ \hbox{and}\ \delta + (\gamma-1) r >0$; (ii)  $\gamma, \zeta>1$.
%\begin{equation}
%\hbox{(i)} \gamma, \zeta \in (0,1)\ \hbox{and}\ \delta + (\gamma-1) r >0, \quad\hbox{(ii)}\  \gamma, \zeta>1. 
%\end{equation}
Then, for any $(x,m)\in\R^2_+$, 
$
V(x,m) = \frac{x^{1-\gamma}}{1-\gamma} u_0(m)^{-\gamma},
$
where
\begin{equation}\label{u0}
u_0(m):= \left[\frac{1}{\beta}\int_0^\infty e^{-\frac{(1-\zeta^{1-\gamma})my}{\beta\gamma}} (y+1)^{-\left(1+ \frac{\delta+(\gamma-1)r}{\beta\gamma}\right)} dy\right]^{-1}> 0
\end{equation}
is a strictly increasing function on $(0,\infty)$ satisfying 
\begin{itemize}
\item [(a)] $u_0(0) = c_0(0) = \frac{\delta+(\gamma-1)r}{\gamma}>0$,  
$\lim_{m\to \infty} \left( u_0(m)- ( c_0(m)+\beta) \right) = 0$, and
\begin{align}
\label{u0 bounds}
 c_0(m)< u_0(m) <  c_0(m)+\beta\quad \hbox{for all}\ m\in(0,\infty). 
\end{align}
\item [(b)] 
$u_0'(0+) = \infty$, $u_0'(\infty) = \frac{1-\zeta^{1-\gamma}}{\gamma}$.
\end{itemize}
Furthermore, $\hat c_t := u_0(me^{\beta t})$, for all $t\ge 0$, is an optimal control of \eqref{problem_0}. 
\end{proposition}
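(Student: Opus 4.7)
The plan is to reduce \eqref{problem_0} to a one-dimensional ODE via the isoelastic scaling of $U$, solve that ODE explicitly, and verify optimality by a martingale argument tailored to the pure-jump structure of the death counting process.

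Inserting the ansatz $V(x,m)=\frac{x^{1-\gamma}}{1-\gamma}u(m)^{-\gamma}$ into the HJB equation for \eqref{problem_0} (wealth evolves deterministically between jumps, mortality as $M_t=me^{\beta t}$, and jumps of intensity $M_t$ multiply wealth by $\zeta$) and performing the pointwise maximization over $c$, attained at $c^{\star}=u(m)$, reduces the HJB equation to the first-order ODE
\begin{equation}\label{eq:plan-ode}
\beta m\,u'(m)=u(m)\bigl(u(m)-c_0(m)\bigr),\qquad m>0,
\end{equation}
with $c_0$ as in \eqref{c_0}. Set $\alpha:=(1-\zeta^{1-\gamma})/(\beta\gamma)$ and $\nu:=(\delta+(\gamma-1)r)/(\beta\gamma)$, so that $c_0(m)=\beta(\nu+\alpha m)$. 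The substitution $w=1/u$ linearizes \eqref{eq:plan-ode} to $\beta m w'(m)-c_0(m)w(m)=-1$, and integrating the identity $\tfrac{d}{dy}[-e^{-\alpha m y}(y+1)^{-\nu}]=e^{-\alpha m y}(y+1)^{-(1+\nu)}[\alpha m(y+1)+\nu]$ over $y\in[0,\infty)$ shows that $w(m):=\tfrac{1}{\beta}\int_0^\infty e^{-\alpha m y}(y+1)^{-(1+\nu)}dy$ is a solution. Under either (i) or (ii) one has $\alpha>0$ and $\nu>0$, so the integral converges and yields the positive smooth solution $u_0=1/w$ coinciding with \eqref{u0}.

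For the verification, set $\varphi(x,m):=\frac{x^{1-\gamma}}{1-\gamma}u_0(m)^{-\gamma}$ and, for arbitrary $c\in\cC$, apply It\^o's formula to $\Phi_t:=e^{-\delta t}\varphi(\zeta^{N_t}X^{0,x,c}_t,M_t)$ between jumps, then compensate $N_t$ by $M_t\,dt$. The drift of
\[
\Psi_t:=\Phi_t+\int_0^t e^{-\delta s}U\bigl(c_s\zeta^{N_s}X^{0,x,c}_s\bigr)ds
\]
equals $\Phi_t$ times the bracket $c_t^{1-\gamma}u_0^\gamma-(1-\gamma)c_t+(1-\gamma)r-\delta-\gamma\beta M_t u_0'/u_0-(1-\zeta^{1-\gamma})M_t$; the pointwise critical point of the bracket in $c_t$ is $c_t=u_0(M_t)$, and along that choice the bracket vanishes by virtue of \eqref{eq:plan-ode}. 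Accounting for the sign of $\Phi_t$ (positive if $\gamma<1$, negative if $\gamma>1$), $\Psi$ is a local supermartingale for every $c\in\cC$ and a local martingale for $\hat c_t=u_0(me^{\beta t})$. A standard localization combined with a transversality estimate, based on the asymptotic $u_0(m)\sim\alpha\beta m$ derived below and on the explicit dynamics of $X^{0,x,\hat c}$, upgrades $\Psi$ to a genuine (super)martingale and permits passage to $t\to\infty$, yielding $V(x,m)\le\varphi(x,m)$ with equality at $\hat c$. The main obstacle is precisely this transversality step: since $M_t$ grows exponentially and $c\in\cC$ is a priori unbounded, the required uniform integrability must be extracted from the explicit representation of $u_0$ and the joint law of $(X^{0,x,\hat c},M)$, rather than from any compactness on the set of controls.

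Properties (a) and (b) then follow from direct analysis of \eqref{u0}. First, $\int_0^\infty(y+1)^{-(1+\nu)}dy=1/\nu$ gives $u_0(0)=\beta\nu=c_0(0)$. For $m\to\infty$, the substitution $z=\alpha m y$ and the expansion $(1+z/(\alpha m))^{-(1+\nu)}=1-(1+\nu)z/(\alpha m)+O(1/m^2)$ inside the integral yield $u_0(m)=c_0(m)+\beta+O(1/m)$, from which $u_0(m)-(c_0(m)+\beta)\to 0$ and $u_0'(\infty)=\alpha\beta=(1-\zeta^{1-\gamma})/\gamma$ are immediate, while $u_0'(0+)=\infty$ is obtained by differentiating under the integral and analyzing the behavior of $\int_0^\infty y(y+1)^{-(1+\nu)}dy$ as $m\downarrow 0$. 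Finally, the strict bounds $c_0<u_0<c_0+\beta$ on $(0,\infty)$ are established by a phase-line argument on \eqref{eq:plan-ode}: any zero of $u_0-c_0$ at some $m_\ast>0$ would force $u_0-c_0$ to decrease monotonically thereafter (since $u_0'<0$ and $c_0'>0$ past $m_\ast$), contradicting $u_0-c_0\to\beta$; and at a putative point where $u_0=c_0+\beta$, \eqref{eq:plan-ode} gives $u_0'=(c_0+\beta)/m>c_0'$, so $u_0-c_0-\beta$ can only cross zero upward, again contradicting the asymptotics.
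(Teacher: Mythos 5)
Your reduction to the ODE $\beta m u'=u(u-c_0)$ is exactly the paper's equation \eqref{HJB_u_1}, and your linearization $w=1/u$ with the explicit integration identity is a clean, correct way to confirm that \eqref{u0} solves it; the asymptotic expansion giving $u_0(m)=c_0(m)+\beta+O(1/m)$, hence $u_0'(\infty)=\frac{1-\zeta^{1-\gamma}}{\gamma}$, also matches the paper's argument. Two smaller remarks on part (a): the strict lower bound is immediate from the ODE identity $u_0-c_0=\beta m u_0'/u_0$ once you note $u_0'>0$ (differentiate under the integral sign), so the phase-line argument is unnecessary and, as written, incomplete — at a putative touching point of $u_0$ with $c_0+\beta$ you only conclude that $u_0-c_0-\beta$ stays positive thereafter, and ruling that out against the asymptotics needs one more step (e.g., that $u_0/m$ would then be increasing and bounded away from $\frac{1-\zeta^{1-\gamma}}{\gamma}$). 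The paper's direct estimate via $y+1<e^y$ in the integral is cleaner for the upper bound.

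The genuine gap is the verification, which you flag as "the main obstacle" and then do not resolve; in the paper this is the bulk of the proof. The local (super)martingale structure of $\Psi$ is correct in both cases, but "standard localization plus a transversality estimate" does not close the argument under condition (ii). There, $\gamma>1$ makes $\varphi<0$ and unbounded below as wealth vanishes, and an admissible $c\in\cC$ may consume wealth down to zero, so $\E[e^{-\delta\tau_n}\varphi(\zeta^nX_{\tau_n},M_{\tau_n})]$ need not tend to zero and the supermartingale inequality yields nothing. The paper handles this by padding wealth with $\eps e^{rt}$ (the policies $c_\eps$ in \eqref{c eps}, giving $X^{0,x+\eps,c_\eps}_t=X^{0,x,c}_t+\eps e^{rt}\ge\eps e^{rt}$), proving $w(x+\eps,m)\ge\E[\int_0^\infty e^{-\delta t}U(c_t\zeta^{N_t}X_t)\,dt]$ first and then sending $\eps\to0$ — the whole apparatus of Proposition~\ref{prop:verification}. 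Moreover, even for the candidate $\hat c_t=u_0(me^{\beta t})$, the condition \eqref{n to infty} in case (ii) is delicate: the paper needs the explicit gamma law of $\tau_n$ in \eqref{density beta>0}, the asymptotic \eqref{u_0<m}, and the choice of $\alpha$ in \eqref{alpha range} to obtain the geometric bound \eqref{need gamma<2}, where $\zeta>1$ is what finally forces the decay. None of this follows from the asymptotic $u_0(m)\sim\frac{1-\zeta^{1-\gamma}}{\gamma}m$ alone, so your proof is complete only for condition (i), where $\varphi\ge0$ lets you discard the terminal term and conclude by monotone convergence.
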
 
\begin{proof}
See Section~\ref{subsec:proof beta>0 g=0}.
\end{proof}

\begin{figure}[t]
\label{fig:imm_for_gom}
\centering
\includegraphics[width=0.9\textwidth]{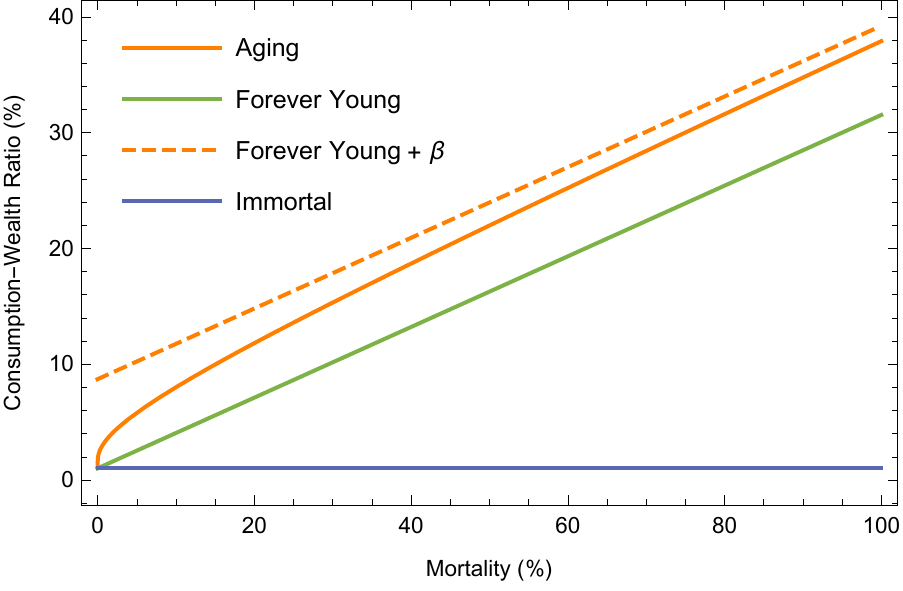}
\caption{ 
Consumption-wealth ratios (vertical, in percent) against instantaneous mortality rate (horizontal, in percent), with zero, constant, and exponentially growing mortality (solid, from bottom to top). The dashed line represents the asymptotic linear consumption rate with high, exponentially-growing mortality, which is also an upper bound at any mortality rate.
Parameters as in Section 4.
}
\end{figure}

Although this setting is known in the actuarial literature (see for example \cite{huang2012optimal}), the above result presents a few novel aspects, starting from the more complicated parametric restrictions for well-posedness, which require that either $\gamma, \zeta \le1$ or $\gamma, \zeta>1$. In fact, the growing mortality rate generates a further motive for indefinite deferral of consumption, and a resulting ill-posed problem. Indeed, if $\gamma>1$ but $\zeta<1$, with a growing mortality rate the household anticipates so much future misery that it would attempt to reduce current consumption to zero. Indeed, only a hypothetically positive mortality shock ($\zeta>1$) would lead to a well-posed problem. Such a case $\zeta>1$, which would correspond to a life-insurance policy above the value of future income, is not realistic and not pursued further: it is only discussed here to point out the source of ill-posedness for $\gamma>1$ with aging. As a result, the only economically relevant setting with aging corresponds to a EIS $1/\gamma>1$.

The optimal consumption rate in \eqref{u0} explicitly depends on mortality, though it does not admit a closed-form expression. Yet, part (a) in the Proposition makes clear comparisons to the consumption rate derived before in the case of constant mortality. In particular, for $\gamma<1$ aging implies always a higher consumption rate, but not higher by more than the growth of aging itself, and this upper bound is asymptotically reached in the old age limit, as mortality increases indefinitely. In other words, the mortality growth rate $\beta$ is the maximal increase in the consumption-wealth ratio resulting from aging, compared to another household with the same mortality but without aging. Of course, the increase in consumption rates results from the dominant substitution effect, which responds to higher mortality with earlier consumption.

Finally, part (b) in the Proposition establishes that the consumption rate with respect to mortality increases very steeply near immortality ($m=0$), while becoming asymptotically linear in the old age limit ($m=\infty$), and reaching the same slope $\frac{1-\zeta^{1-\gamma}}{\gamma}$ as in the case without aging.

Figure \ref{fig:imm_for_gom} summarizes the attributes of the settings discussed so far: the flat bottom line identifies the constant consumption rate for a household with zero mortality. The line intersecting at $m=0$ describes the consumption of a ``forever young'' household, for which mortality remains fixed at $m$ over time. The curve above this line, which also intersects at $m=0$, plots the consumption rate of an aging household, which is higher in view of the substitution effect. The top dashed line, parallel to the constant mortality line, describes the asymptotic consumption rate of the aging household for large $m$, which is also an upper bound.

Importantly, these simplified settings provide a range in which the consumption rate of the full model should lie. As healthcare curbs mortality growth, implied consumption rate should lie below the solution with aging but without healthcare, in anticipation of slower mortality growth. At the same time, consumption should be higher than with a constant mortality, at least if healthcare cannot reverse aging, as it does not in reality.

%\begin{figure}[h!]
 % \centering
%\begin{tikzpicture}[scale=1]
%\begin{axis}[legend pos=south east, xlabel={$s$}, xmin=0,	ymin=0,
	%extra x ticks={3,5},
	%extra tick style={grid=major}]
    %%\addplot[My Line Style, color=blue, domain=0:3] (\x,{(1+\x)^(1/2)});
		%\addplot[color=blue, domain=0:15] (\x,{(1+\x)^(1/2)});
		%\addlegendentry{$t=0$}
		%\addplot[color=red, domain=3:15] (\x,{(1+\x-3)^(1/2)});
		%\addlegendentry{$t=3$}
		%\addplot[color=green, domain=5:15] (\x,{(1+\x-5)^(1/2)});
		%\addlegendentry{$t=5$}
%\end{axis}
%\end{tikzpicture}
%\caption{The free boundary $s\mapsto \sqrt{1+(s-t)}$ with different initial times $t$.}
%\label{fig:free boundary}
%\end{figure}

\subsection{Aging with Healthcare}\label{subsec:beta>0 g>0}

The full model incorporates both aging, with mortality increasing naturally with Gompertz' law ($\beta>0$), and healthcare ($g \ge 0$), which can slow down its growth. 

The first theorem considers the case of a general efficacy function $g$ that satisfies the well-posedness condition \eqref{g<beta} below, which stipulates that even arbitrarily high amounts of healthcare cannot arrest or reverse mortality growth, in addition to the following regularity conditions:
\begin{assumption}\label{ass:gfunct}
Let $g:\mathbb R_+\to \mathbb R_+$ be twice-differentiable with $g(0)=0$, $g'(h)>0$ and $g''(h)<0$ for $h>0$, and satisfy the Inada condition: 
\begin{equation}\label{Inada}
g'(0+)=\infty\quad \hbox{and}\quad g'(\infty)=0
.
\end{equation}
\end{assumption}
The restrictions on the other parameter values ensure well-posedness by excluding indefinite deferral of consumption.

%The value function is of the general form in \eqref{problem}.
\begin{theorem}\label{thm:main 3}
Let Assumption \ref{ass:gfunct} hold, and let $0 < \gamma < 1$ and $\bar c:= \frac{\delta}{\gamma} + \big(1-\frac1\gamma\big) r >0$. If 
\begin{equation}\label{g<beta}
g\left(I\left(\frac{1-\gamma}{\gamma}\right)\right) < \beta\quad \hbox{with}\quad I := (g')^{-1},
\end{equation}
then the value function in \eqref{problem} satisfies
$
V(x,m)=\frac{x^{1-\gamma}}{1-\gamma} u^*(m)^{-\gamma}
%\quad \hbox{for all}\ (x,m)\in\R^2_+, 
$
where $u^*:\R_+\to\R_+$ is the unique nonnegative, strictly increasing solution to the equation
\begin{equation}\label{HJB_u_2'}
%\begin{split}
\mathcal{L}u(m) := 
u^2(m) - c_0(m)u(m)+ m u'(m) \left(\sup\limits_{h\ge 0}\left\{g(h) - \frac{1-\gamma}{\gamma} \frac{u(m)}{mu'(m)} h\right\}-\beta\right) = 0.%\\
%u(0) &= \bar c.
%\end{split}
\end{equation}
Furthermore, $u^*$ is strictly concave, and $(\hat c,\hat h)$ defined by
\begin{equation}\label{eq:firstorder}
\hat c_t := u^*(M_t)\quad \hbox{and} \quad \hat h_t := I\left(\frac{1-\gamma}{\gamma}\frac{u^*(M_t)}{M_t\cdot (u^*)'(M_t)}\right),\quad \hbox{for all}\ t\ge 0,
\end{equation}
optimizes \eqref{problem}. 
\end{theorem}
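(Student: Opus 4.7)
The plan is to proceed in three stages: reduce the HJB to a one-dimensional ODE via the homogeneity in wealth, establish existence and uniqueness of a strictly increasing solution through Perron's method, and verify optimality via the candidate feedback controls.

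First, the isoelastic utility \eqref{eq:power_utility} and the linearity of \eqref{wealth} in $x$ imply the scaling $V(\lambda x, m) = \lambda^{1-\gamma} V(x,m)$, which motivates the ansatz $V(x,m) = \frac{x^{1-\gamma}}{1-\gamma} u(m)^{-\gamma}$. The HJB integro-differential equation associated with \eqref{problem} contains a jump term $m\bigl(V(\zeta x,m) - V(x,m)\bigr) = -m(1-\zeta^{1-\gamma}) V(x,m)$ recording the wealth loss at each mortality arrival. Plugging the ansatz in, the supremum in $c$ is attained at $\hat c = u(m)$ and contributes $\gamma u(m)^{1-\gamma}$, while the supremum in $h$ reduces to the one-variable problem inside \eqref{HJB_u_2'} with interior maximizer $\hat h = I\bigl(\tfrac{1-\gamma}{\gamma}\tfrac{u(m)}{m u'(m)}\bigr)$, well-defined under the Inada condition \eqref{Inada} whenever $u'(m) > 0$. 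After multiplying through by $u(m)^{\gamma+1}/\gamma$ and regrouping the constants into $c_0(m)$ of \eqref{c_0}, the HJB collapses to $\mathcal L u = 0$.

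Second, I would construct $u^*$ through the nonstandard version of Perron's method announced in the introduction. Rather than sandwiching a single pair of sub- and supersolutions, I would consider the collection $\Pi$ of Definition~\ref{def:Pi}, define $u^*$ as its pointwise envelope, and prove that $u^*$ is simultaneously a viscosity sub- and supersolution of $\mathcal L u = 0$. The main obstacle, as flagged by Propositions~\ref{prop:subsolution} and \ref{prop:strict concave}, is the subsolution property together with strict concavity of $u^*$: because $\mathcal L$ is first-order, no regularizing second-order term is available to force strict concavity, so for any test function touching $u^*$ from above one must construct explicitly a perturbation inside $\Pi$ that yields the required contradiction. The supersolution property should follow by a more standard bump-and-contradict argument. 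Uniqueness among nonnegative, strictly increasing solutions then follows from a comparison principle within this class, using the natural brackets $c_0(m) < u^*(m) < c_0(m) + \beta$ inherited from \propref{prop:main 2} (now valid because \eqref{g<beta} guarantees that no level of healthcare can arrest aging). Strict concavity together with positivity of $(u^*)'$ make the feedback $\hat h$ in \eqref{eq:firstorder} well-defined and locally bounded.

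Third, for verification I would set $W(x,m) := \frac{x^{1-\gamma}}{1-\gamma} u^*(m)^{-\gamma}$, define $(\hat c, \hat h)$ as in \eqref{eq:firstorder} driven by the mortality process from \eqref{M with h}, and first check that $(\hat c,\hat h) \in \A$. For an arbitrary admissible $(c,h) \in \A$, applying It\^o's formula with jumps to $e^{-\delta t} W(\Xi_t, M_t)$ and using that $W$ solves the HJB classically yields a local supermartingale; taking expectations, letting $t \to \infty$, and exploiting a transversality argument delivers $V(x,m) \le W(x,m)$. The matching lower bound follows by replacing $(c,h)$ with $(\hat c, \hat h)$, which turns the HJB inequality into an equality. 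The main technical obstacle, and the reason Theorems~\ref{thm:verification} and \ref{thm:verification with stock} are proved separately, is this transversality at infinity: the jump intensity $M_t$ is itself a state variable driven by the unbounded feedback $\hat h$, so the standard jump-control verification of \cite[Chapters 20--21]{cohen2015stochastic}, which requires jump rates in a compact control set, does not apply. The remedy is to exploit the explicit construction of $\{\tau_n\}$ in \eqref{tau's}, the linear-in-$m$ bound on $u^*$, and the positivity of $c_0(m)$ inherited from $\bar c > 0$, to show that $\mathbb{E}[e^{-\delta t} W(\Xi_t, M_t)] \to 0$ along the candidate trajectory.
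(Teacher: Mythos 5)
Your outline follows the paper's architecture almost exactly: the homogeneity reduction to $\mathcal L u=0$, a Perron construction over the collection $\Pi$ of Definition~\ref{def:Pi} with the subsolution property and strict concavity as the hard steps (Propositions~\ref{prop:subsolution} and \ref{prop:strict concave}), and a bespoke verification theorem handling the state-dependent jump intensity. Two points of divergence are worth flagging. First, and most substantively, your uniqueness step is not what the paper does and is the weakest link in your plan: you assert that uniqueness among nonnegative, strictly increasing solutions ``follows from a comparison principle within this class,'' but no such comparison principle is proved (or needed) in the paper, and for a first-order, fully nonlinear ODE on the unbounded domain $(0,\infty)$ with no boundary condition pinning down the solution, a comparison principle in this class is far from automatic. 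The paper instead gets uniqueness for free from the stochastic control problem: Proposition~\ref{prop:uniqueness} shows that \emph{every} nonnegative, strictly increasing, concave classical solution $u$ yields $V(x,m)=\frac{x^{1-\gamma}}{1-\gamma}u(m)^{-\gamma}$, and since $V$ is unique, all such solutions coincide (Corollary~\ref{coro:indep. of p,q}). If you insist on a PDE comparison argument you would have to supply it; otherwise you should route uniqueness through verification as the paper does. Second, two smaller inaccuracies: the supersolution property of $u^*_{p,q}$ is not obtained by a ``bump-and-contradict'' argument but by the standard stability of viscosity supersolutions under pointwise infima (via \cite[Lemma 4.2]{CrandallIshiiLions1992}) together with concavity supplying continuity; and your transversality discussion only addresses the limit $t\to\infty$, whereas the verification also requires the limit over the number of deaths, $\E\big[e^{-\delta\tau_n}w(\zeta^n X_{\tau_n},M_{\tau_n})\big]\to 0$ as $n\to\infty$ (condition \eqref{n to infty}), since the paper's argument proceeds by induction over death times rather than a single application of It\^o's formula with jumps. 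Neither of the smaller points is fatal, but the comparison-principle step as written is a genuine gap.
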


\begin{proof}
This result is a consequence of Proposition~\ref{prop:uniqueness} and Corollary \ref{coro:indep. of p,q} below.
\end{proof}

This result identifies the optimal consumption policy as the solution of \eqref{HJB_u_2'}, a first-order, nonlinear ODE, in which the effect of healthcare is captured by the last nonlinear term. Such a solution does not have a closed-form expression even in relatively simple settings, such as the one discussed next in Corollary~\ref{coro:g isoe}, but it is nonetheless straightforward to calculate numerically, and so are its quantitative implications.

A delicate point is that the solution to equation \eqref{HJB_u_2'} is uniquely identified without any additional boundary conditions, because the equation has only one increasing solution defined for all $m\in\mathbb R_+$, while all others explode for finite $m$ or start decreasing for $m$ large enough. Note  that natural boundary condition $u(0) = c_0(0)$ holds for any local solution in $[0,\varepsilon)$, and therefore does not identify the one defined for all $m\in\mathbb R_+$.

Also, the above result establishes the optimality condition for healthcare expenditure in \eqref{eq:firstorder}, whereby the marginal efficacy of optimal healthcare is inversely proportional to $m u'(m)/u(m)$, the elasticity of consumption with respect to mortality, where the constant of proportionality depends on preferences.

The next results provides a deeper insight on the impact of healthcare on the growth rate of mortality:
\begin{theorem}\label{thm:main 4}
Let Assumption \ref{ass:gfunct} and condition \eqref{g<beta} hold, and let $0 < \gamma < 1$, $\frac{\delta}{\gamma} + \big(1-\frac1\gamma\big) r >0$. Define
\begin{equation}\label{betag}
\beta_g := \beta - \sup\limits_{h\ge 0}\left\{g(h) - \frac{1-\gamma}{\gamma} h\right\}\in (0,\beta).
\end{equation}
As \eqref{u0} defines $u_0(m)$, define $u^g_0(m)$ analogously with $\beta_g$ in place of $\beta$.
Then, for any $m>0$,
\begin{equation}
u^g_0(m) \le u^*(m)\le \min\{u_0(m), c_0(m)+\beta_g\}
\end{equation}
and
\begin{equation}
\lim_{m\rightarrow\infty} ( c_0(m) - u^*(m) ) =  \beta_g.
\end{equation}
\end{theorem}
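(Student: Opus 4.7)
The plan is to establish the three pointwise bounds $u_0^g(m)\le u^*(m)\le u_0(m)$ and $u^*(m)\le c_0(m)+\beta_g$, and then to obtain the limit by squeezing $u^*-c_0$ between $u_0^g-c_0$ and $\beta_g$: the upper squeeze is the bound itself, while $u_0^g(m)-c_0(m)\to\beta_g$ as $m\to\infty$ by Proposition~\ref{prop:main 2}(a) applied with $\beta_g$ in place of $\beta$. The three bounds themselves will be proved in different ways, with one coming directly from the value-function interpretation and the other two from sub/super-solution arguments that plug into the comparison machinery already developed for Theorem~\ref{thm:main 3}.

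For the easy upper bound $u^*\le u_0$, I would note that $h\equiv 0$ is an admissible control, so $V(x,m)\ge V^{h\equiv 0}(x,m)=\frac{x^{1-\gamma}}{1-\gamma}u_0(m)^{-\gamma}$ by Proposition~\ref{prop:main 2}; combined with the representation in Theorem~\ref{thm:main 3} and the fact that $u\mapsto u^{-\gamma}$ is strictly decreasing for $\gamma\in(0,1)$, this gives $u^*\le u_0$. The remaining two bounds will be obtained by showing, respectively, that $u_0^g$ is a strict subsolution and $\phi(m):=c_0(m)+\beta_g$ is a strict supersolution of $\mathcal{L}u=0$, and then invoking the Perron comparison argument behind Theorem~\ref{thm:main 3}. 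For $u_0^g$, substituting the defining ODE $(u_0^g)^2-c_0u_0^g=\beta_g m(u_0^g)'$ reduces $\mathcal{L}u_0^g$ to $m(u_0^g)'\bigl[G_{u_0^g}(m)-(\beta-\beta_g)\bigr]$, where $G_w(m):=\sup_{h\ge 0}\{g(h)-\tfrac{1-\gamma}{\gamma}\tfrac{w}{mw'}h\}$. Proposition~\ref{prop:main 2}(a) applied with $\beta_g$ gives $c_0<u_0^g<c_0+\beta_g$, hence the elasticity $\eta_0^g:=m(u_0^g)'/u_0^g=(u_0^g-c_0)/\beta_g<1$; since the map $c\mapsto\sup_h\{g(h)-ch\}$ is strictly decreasing (by the Inada condition and strict concavity of $g$), this forces $G_{u_0^g}<\beta-\beta_g$, whence $\mathcal{L}u_0^g<0$.

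For $\phi=c_0+\beta_g$, a short algebraic identity using the linearity of $\phi$ yields $\phi^2-c_0\phi=\beta_g\phi$, so
\[
\mathcal{L}\phi(m)=\beta_g(\bar c+\beta_g)-am\bigl[(\beta-\beta_g)-G_\phi(m)\bigr],\qquad a:=\tfrac{1-\zeta^{1-\gamma}}{\gamma}.
\]
Because $\phi/(m\phi')=1+(\bar c+\beta_g)/(am)>1$, an envelope-type estimate for the Legendre transform---using that its maximizer $h^*(c)=I(c)$ is decreasing in $c$ and bounded above by $\bar h:=I(\tfrac{1-\gamma}{\gamma})$---gives $am[(\beta-\beta_g)-G_\phi]\le \tfrac{1-\gamma}{\gamma}\bar h(\bar c+\beta_g)$. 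Condition~\eqref{g<beta} then yields $\beta_g-\tfrac{1-\gamma}{\gamma}\bar h=\beta-g(\bar h)>0$, so $\mathcal{L}\phi\ge(\bar c+\beta_g)(\beta-g(\bar h))>0$, as required. The comparison results from the Perron construction of $u^*$ (with the initial values matching $u^*(0)=c_0(0)$ and $\phi(0)>c_0(0)$, which are automatic from the algebraic collapse of $\mathcal{L}$ at $m=0$) then deliver $u_0^g\le u^*\le\phi$.

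The main obstacle is the supersolution verification for $\phi$: unlike $u_0$ and $u_0^g$, which satisfy known ODEs and reduce via direct substitution, $\phi$ is an ad hoc affine candidate, and the required sign of $\mathcal{L}\phi$ rests on the nontrivial envelope estimate above, which in turn relies essentially on the precise definition of $\beta_g$ as the natural growth rate minus the Legendre maximum with opportunity-cost slope $\tfrac{1-\gamma}{\gamma}$. Once all three bounds are secured, the limit statement follows from $u_0^g(m)-c_0(m)\le u^*(m)-c_0(m)\le\beta_g$ and Proposition~\ref{prop:main 2}(a) applied with $\beta_g$.
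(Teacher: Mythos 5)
Your proposal is correct and reaches all three bounds, but by partly different routes than the paper. For $u^*\le u_0$ you argue probabilistically (restricting to $h\equiv 0$ and comparing value functions via Proposition~\ref{prop:main 2} and Theorem~\ref{thm:main 3}), whereas the paper stays inside the Perron framework: $u_0$, being an increasing solution of \eqref{HJB_u_1}, is a supersolution of \eqref{HJB_u_2'}, so $(c_0,u_0)\in\Pi$ and $u^*=u^*_{c_0,u_0}\le u_0$. Both work; yours is more transparent economically, the paper's keeps everything in one formalism. Your subsolution argument for $u_0^g$ is essentially the paper's: substitute the ODE and use $u_0^g/(m(u_0^g)')\ge 1$ (you derive this from the ODE plus $u_0^g<c_0+\beta_g$; the paper gets it from concavity via Lemma~\ref{lem:u(m)/m decrease}). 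The genuine divergence is the supersolution property of $c_0+\beta_g$: the paper invokes Lemma~\ref{lem:c0+alpha}, whose proof is a convexity analysis of $\theta(m)=\mathcal{L}(c_0+\alpha)(m)$ locating its global minimum $m^*$; your envelope estimate $L(c_1)-L(c_2)\le I(c_1)(c_2-c_1)$ for $L(c)=\sup_h\{g(h)-ch\}$ yields directly $\mathcal{L}(c_0+\beta_g)\ge(\bar c+\beta_g)\,(\beta-g(\bar h))>0$ with $\bar h=I(\tfrac{1-\gamma}{\gamma})$, which is shorter and gives a uniform positive lower bound rather than only positivity. Two points to tighten: (1) to conclude $u_0^g\le u^*\le c_0+\beta_g$ from Corollary~\ref{coro:indep. of p,q} you must verify full membership of $(u_0^g,c_0+\beta_g)$ and $(c_0,c_0+\beta_g)$ in $\Pi$, which also requires strict monotonicity and concavity of $u_0^g$ --- available from Lemma~\ref{lem:properties of u0} and Corollary~\ref{coro:u0 concave} with $\beta_g$ in place of $\beta$, but this should be cited; (2) the aside about matching initial values at $m=0$ is unnecessary, since the comparison used here is not a boundary-value comparison principle but simply $p\le u^*_{p,q}\le q$ together with independence of the pair. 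The limit then follows from the squeeze exactly as you describe (a step the paper leaves implicit).
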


\begin{proof}
See Section~\ref{subsec:proof beta>0 g>0}.
\end{proof}

The message of this result is that even if the exact effect of healthcare on consumption is complicated, it does admit simple upper and lower bounds (Figure \ref{fig:u* asymptotics}). The lower bound is the one obtained from $u_0^g(m)$, the consumption rate in a model in which healthcare is not available, but mortality grows at the lower rate $\beta_g<\beta$. Indeed, the household would gladly give up access to healthcare in exchange for such a lower rate of mortality growth.

Upper bounds are consumption rates under the same mortality growth rate $\beta$ but with no access to healthcare (i.e. $u_0(m)$), and for a forever young household, augmented by the adjusted growth rate $\beta_g$ (i.e. $c_0(m)+\beta_g$). The former estimate is more accurate at younger ages, while the latter at older ages. In all cases, the minimum consumption rate, i.e. the lower bound $u_0^g(m)$, yields the sharpest estimate.

\begin{figure}[t]
\centering
\includegraphics[width=0.9\textwidth]{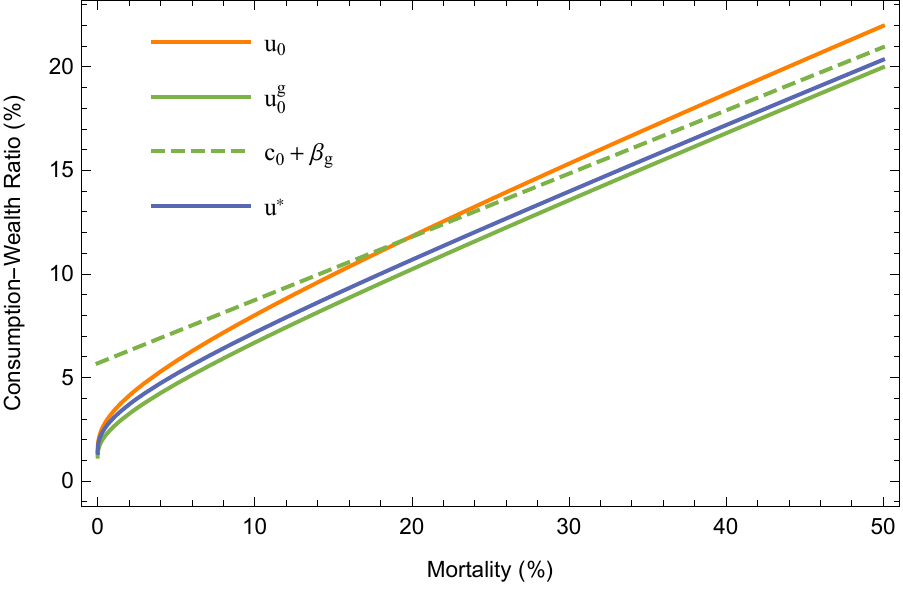}
\caption{ 
\label{fig:u* asymptotics}
Consumption-wealth ratios (vertical axis, in percent) against instantaneous mortality rates (horizontal). Solid curves correspond to aging households with natural mortality growth $\beta$, with (center) and without (top) healthcare, and with adjusted mortality growth $\beta_g$ (bottom). The dashed line represents the asymptotic linear consumption ratio, which is also an upper bound.}
\end{figure}

%For the special case where $g$ is of the form
%\[
%g(z) := a \frac{z^{q}}{q}\qaud z\ge 0,\qaud \hbox{for some}\ a> 0,q\in(0,1),
%\]
%we have $I(y):= (g')^{-1}(y) = \left(\frac{y}{a}\right)^{\frac{1}{q-1}}$. 

The next result specializes the analysis to a concrete model, assuming an efficacy function of isoelastic type, thereby enabling comparative statics and parameter estimation.

\begin{corollary}\label{coro:g isoe}
Let $0 < \gamma < 1$, $\bar c:= \frac{\delta}{\gamma} + \big(1-\frac1\gamma\big) r >0$, and $g:\R_+\to\R_+$ of the form
\[
g(z) := a \frac{z^{q}}{q},  
\]
for some $a> 0$ and $q\in(0,1)$. If 
%\begin{equation}\label{g<beta power}
$\frac{a^{\frac{1}{1-q}}}{q}\left(\frac{1-\gamma}{\gamma}\right)^{\frac{-q}{1-q}} < \beta$,
%\end{equation}
then
$
V(x,m)=\frac{x^{1-\gamma}}{1-\gamma} u^*(m)^{-\gamma} \hbox{for all}\ (x,m)\in\R^2_+, 
$
where $u^*:\R_+\to\R_+$ is the unique nonnegative, strictly increasing solution to the equation
\begin{equation}\label{eq:odepower}
u^2(m) - c_0(m)u(m) - \beta m u'(m) + \frac{1-q}{q} a^{\frac{1}{1-q}} \left(\frac{1-\gamma}{\gamma} u(m)\right)^{\frac{-q}{1-q}} \left(mu'(m)\right)^{\frac{1}{1-q}}=0,
\end{equation}
Furthermore, $(\hat c,\hat h)$ defined by
\[
\hat c_t := u^*(M_t)\quad \hbox{and} \quad \hat h_t :=  a^{\frac{1}{1-q}} \left(\frac{1-\gamma}{\gamma}\frac{u^*(M_t)}{M_t\cdot (u^*)'(M_t)}\right)^{\frac{-1}{1-q}},\quad \hbox{for all}\ t\ge 0,
\]
is an optimal control of \eqref{problem}. 
\end{corollary}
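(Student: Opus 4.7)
The plan is to verify that the isoelastic specification $g(z) = a z^{q}/q$ falls within the scope of Theorem~\ref{thm:main 3}, and then to rewrite the abstract ODE \eqref{HJB_u_2'} and optimal control \eqref{eq:firstorder} in closed algebraic form using this specific $g$. No new PDE or probabilistic argument is needed; the work is purely computational.

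First I would verify \thmref{ass:gfunct} for $g(z) = az^q/q$ with $a>0$, $q\in(0,1)$: clearly $g(0)=0$, $g'(z)=az^{q-1}>0$, $g''(z)=a(q-1)z^{q-2}<0$, and the Inada conditions $g'(0+)=\infty$ and $g'(\infty)=0$ follow directly from $q\in(0,1)$. Next I would invert $g'$ to obtain
\[
I(y) = (g')^{-1}(y) = \left(\frac{a}{y}\right)^{\frac{1}{1-q}},
\]
and therefore
\[
g\!\left(I\!\left(\tfrac{1-\gamma}{\gamma}\right)\right) = \frac{a}{q}\left(\frac{a\gamma}{1-\gamma}\right)^{\frac{q}{1-q}} = \frac{a^{\frac{1}{1-q}}}{q}\left(\frac{1-\gamma}{\gamma}\right)^{\frac{-q}{1-q}}.
\]
Thus the parametric assumption stated in the corollary is exactly the well-posedness condition \eqref{g<beta}. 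Hence \thmref{thm:main 3} applies.

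The second step is to evaluate the inner supremum in \eqref{HJB_u_2'} for this $g$. Writing $K := \frac{1-\gamma}{\gamma}\frac{u(m)}{mu'(m)}>0$, the first-order condition for $\sup_{h\ge 0}\bigl\{\tfrac{a}{q}h^q - Kh\bigr\}$ gives the maximizer $h^* = (a/K)^{1/(1-q)}$, and substituting back yields
\[
\sup_{h\ge 0}\left\{\tfrac{a}{q}h^q - Kh\right\} = \frac{1-q}{q}\, a^{\frac{1}{1-q}} K^{-\frac{q}{1-q}}.
\]
Multiplying by $mu'(m)$ and unpacking $K$, the third term of $\cL u(m)$ becomes
\[
mu'(m)\Bigl(-\beta\Bigr) + \frac{1-q}{q}\, a^{\frac{1}{1-q}}\!\left(\tfrac{1-\gamma}{\gamma}u(m)\right)^{\!\!-\frac{q}{1-q}}\!(mu'(m))^{\frac{1}{1-q}},
\]
which, once added to $u^2(m) - c_0(m)u(m)$, reproduces \eqref{eq:odepower} verbatim.

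Finally, I would read off the optimal healthcare rule by substituting the same $I$ into \eqref{eq:firstorder}:
\[
\hat h_t = I\!\left(\tfrac{1-\gamma}{\gamma}\tfrac{u^*(M_t)}{M_t(u^*)'(M_t)}\right) = a^{\frac{1}{1-q}}\!\left(\tfrac{1-\gamma}{\gamma}\tfrac{u^*(M_t)}{M_t(u^*)'(M_t)}\right)^{\!\!\frac{-1}{1-q}},
\]
while $\hat c_t = u^*(M_t)$ is inherited unchanged from \thmref{thm:main 3}. The existence, uniqueness, monotonicity and strict concavity of $u^*$, as well as the verification that $(\hat c,\hat h)$ is optimal, come for free from \thmref{thm:main 3}. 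There is no real obstacle here: the only point deserving care is the algebra in the elimination of $K$, since the fractional exponent $-q/(1-q)$ must be handled consistently so that the final ODE matches \eqref{eq:odepower} exactly, but this is routine.
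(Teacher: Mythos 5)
Your proposal is correct and matches the paper's (implicit) argument exactly: the corollary is a direct specialization of Theorem~\ref{thm:main 3}, obtained by checking Assumption~\ref{ass:gfunct} and condition \eqref{g<beta} for $g(z)=az^q/q$ and then computing $I$ and the inner supremum in closed form, and your algebra (in particular $1+\tfrac{q}{1-q}=\tfrac{1}{1-q}$ and the evaluation $\sup_h\{\tfrac{a}{q}h^q-Kh\}=\tfrac{1-q}{q}a^{1/(1-q)}K^{-q/(1-q)}$) is accurate.
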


%%%%%%%%%%%%%%%%%%%%%%%%%%%%%%%%%%%%%%%%%%%%%%%%%%%%%%%%%%%%%%%%%%%%

With this result at hand, we investigate in Section~\ref{sec:calibration} the model's implications for the optimal policies and their resulting endogenous mortality.

\section{Risky Assets}\label{sec:risky}

The main model in the paper assumes that households' savings are confined to a safe investment.
%, and that their preferences are described by time-additive utility from consumption, thereby linking relative risk aversion $\gamma$ to the elasticity of intertemporal substitution $1/\gamma$. 
This section discusses %these assumptions by 
extending the model to include risky assets. %and two separate parameters for risk aversion and intertemporal substitution. 
The main question is to what extent healthcare and endogenous mortality are sensitive to risky investments and -- conversely -- whether portfolio allocation is sensitive to the mortality rate. Theorem~\ref{thm:additive} below argues that the presence of risky assets with constant investment opportunities is equivalent to an increase in the safe rate, and that the resulting optimal portfolio is independent of mortality (cf. equations \eqref{eq:odepower_risky asset} and \eqref{bar c_0} below). Thus, the main results in the paper remain valid with the addition of risky assets, up to reinterpreting the safe rate parameter as an \emph{equivalent safe rate} that accounts for additional investment opportunities.

Specifically, consider a risky asset $S$ satisfying the dynamics
\begin{equation}\label{stock}
dS_t = S_t (\mu+r) dt + S_t\sigma dW_t,\quad t\ge 0,
\end{equation}
where $\mu\in\R$ and $\sigma>0$ are constants and $W$ is a standard Brownian motion independent of $\{Z_n\}_{n\in\N}$. The independence assumption is appropriate for most individuals, as death is unlikely to affect, or result from, price changes of publicly traded securities. 
Denoting by $\pi_t$ the fraction of wealth that the household invests in the risky asset at time $t$, when no deaths have occurred the wealth process is
\begin{align}\label{X}
\frac{dX_t}{X_t} &= [(1-\pi_t) r -c_t -h_t] dt + \pi_t \frac{d S_t}{ S_t}= [r+\mu\pi_t-c_t-h_t] dt + \sigma\pi_t dW_t,\quad t\ge 0.
\end{align}
To properly define the value function and derive the associated HJB equation, our probability space $(\Omega,\F,\P)$ in Section~\ref{subsec:setup} needs to be enlarged to account for the additional Brownian motion $W$. Specifically, let  $(\Omega_1,\F_1,\P_1)$ be the probability space supporting the i.i.d. exponential random variables $\{Z_n\}_{n\in\N}$, as specified in Section~\ref{subsec:setup}. Let  $(\Omega_2,\F_2,\P_2)$ be another probability space that supports the Brownian motion $W$. Then, we take $(\Omega,\F,\P)$ to be the product probability space of $(\Omega_1,\F_1,\P_1)$ and $(\Omega_2,\F_2,\P_2)$, endowed with the filtration $\{\F_t\}_{t\ge 0}$ generated by $\{Z_n\}_{n\in\N}$ and $W$. We denote by $\E_1$, $\E_2$, and $\E$ the expectations taken under $\P_1$, $\P_2$, and $\P$, respectively. 

Between two consecutive death times, $c_t$ and $h_t$ are no longer deterministic as in \eqref{A}, because they may depend on the evolution of $S$ in \eqref{stock} (or, the Brownian motion $W$). More precisely, let $L^{1}_\text{loc}(\Omega_2)$ denote the collection of processes $f:\R_+\times \Omega_2\to\R$ such that $\E_2[\int_{0}^t f(s) ds]<\infty$ for all $t\ge 0$. Also consider $L^{1,+}_\text{loc}(\Omega_2)$, the subspace of $L^{1}_\text{loc}(\Omega_2)$ containing nonnegative processes. Define 
\[
\mathfrak L' := \left\{\{f_n\}_{n\ge 0}\ \middle|\  f_0\in  L^{1}_\text{loc}(\Omega_2),\ f_n = f(Z_1,...,Z_n)\ \hbox{for some Borel}\ f:\R^n_+\to L^{1}_\text{loc}(\Omega_2)\right\}.
\]
We also consider $\mathfrak L'_{+}$, defined as $\mathfrak L'$ with $L^{1}_\text{loc}(\Omega_2)$ replaced by $L^{1,+}_\text{loc}(\Omega_2)$. Now, for any $\{h_n\}\in \mathfrak L'_+$, the death times in \eqref{tau's} can be formulated as follows: for each $\omega=(\omega_1,\omega_2)\in\Omega$,  
\begin{equation*}%\label{tau's}
\tau_{n+1}(\omega) := \inf\left\{t\ge \tau_{n}(\omega)\ \middle|\ \int_{\tau_{n}}^t M^{\tau_{{n}},m_{n},h_{n}}_s(\omega_2) ds \ge Z_{n+1}(\omega_1)\right\},\quad m_{n+1}:= M^{\tau_{n},m_{n},h_{n}}_{\tau_{{n+1}}}(\omega).
\end{equation*}
Thus, \eqref{tau>t} can be rewritten as, for any fixed $\omega_2\in\Omega_2$, 
\begin{align}\label{tau>t'}
\P_1&\left(t\in [\tau_{n}(\cdot,\omega_2),\tau_{{n+1}}(\cdot,\omega_2))\ \middle|\ Z_1,...,Z_n\right) (\omega_1)\notag \\
&= \exp\left(-\int_{\tau_{n}(\omega_1,\omega_2)}^{t} M^{\tau_{n},m_n,h_n}_s(\omega_2) ds \right)1_{\{t\ge \tau_{n}\}}(\omega_1,\omega_2)
,\quad \forall n\ge 0.
\end{align}
Similarly to \eqref{A}, the collection $\A'$ of admissible controls contains all processes $(c_t, h_t,\pi_t)$ where \begin{equation}\label{A'}
c_t = \sum_{n=0}^\infty c_n(t) 1_{\{\tau_{n}\le t<\tau_{n+1}\}},\ h_t = \sum_{n=0}^\infty h_n(t) 1_{\{\tau_{n}\le t<\tau_{n+1}\}},\ \pi_t = \sum_{n=0}^\infty \pi_n(t) 1_{\{\tau_{n}\le t<\tau_{n+1}\}},
\end{equation}
with $\{c_n\}, \{h_n\}\in\mathfrak L'_+$ and $\{\pi_n\}\in\mathfrak L'$.

Now, define the value function as
\begin{equation}\label{problem_0_risky asset}
V(x,m) := \sup_{c,h,\pi} \E\left[\int_0^\infty e^{-\delta t}U(\zeta^{N_t} X_t c_t) dt \right] = \sup_{c,h,\pi} \sum_{n=0}^\infty\E\left[\int_{\tau_{n}}^{\tau_{n+1}} e^{-\delta t}U(\zeta^{n} X_t c_t) dt\right].
\end{equation}

\subsection{Derivation of the HJB Equation and Optimal Policies}\label{subsec:heuristics}
In the following, we will first derive {\it heuristically} the HJB equation for $V(x,m)$ and the candidate optimal policies. These heuristic guesses turn out to be truly optimal, as verified in Theorem~\ref{thm:additive}.  

%To derive the HJB equation for $V(x,m)$, first note that
By the definition of $V(x,m)$ in \eqref{problem_0_risky asset},
\begin{align*}
V(x,m) &= \sup_{c,h,\pi} \E\left[\int_0^{\tau_1} e^{-\delta t} U(c_t X_t) dt +e^{-\delta\tau_1}\int_{\tau_1}^\infty e^{-\delta(t-\tau_1)}U(\zeta^{N_t} c_t X_t) dt\right]\\
&= \sup_{c,h,\pi} \E\left[\int_0^{\tau_1} e^{-\delta t} U(c_t X_t) dt +e^{-\delta\tau_1}V(\zeta X_{\tau_1}, M_{\tau_1})\right],
\end{align*}
where in the second line we assume heuristically that a dynamic programming principle for \eqref{problem_0_risky asset} holds.
By Fubini's theorem and $\P_1[\tau_1(\cdot,\omega_2)>t] = e^{-\int_0^t M^{0,m,h_0}_s(\omega_2) ds}$ from \eqref{tau>t'}, the above equation yields
\begin{align*}
V(x,m) &=  \sup_{c,h,\pi} \E_2\left[\int_0^\infty e^{-\int_0^t M_s ds} e^{-\delta t} U(c_t X_t) dt +\int_0^\infty M_t e^{-\int_0^t M_s ds} e^{-\delta t}V(\zeta X_{t}, M_{t})dt\right]\\
&= \sup_{c,h,\pi} \E_2\left[\int_0^\infty e^{-\int_0^t (\delta + M_s) ds} [U(c_t X_t)+M_t V(\zeta X_{t}, M_{t})] dt \right].
\end{align*}
This shows that the value function can be viewed alternatively as an infinite-horizon problem with running payoff $U(c_t X_t)+M_t V(\zeta X_{t}, M_{t})$ and discount rate $\delta + M_t$. Suppose that a dynamic programming principle holds for this alternative formulation, i.e.
%If the dynamic programming principle (DPP) holds, then it would follow that
\begin{align}\label{DPP}
V(x,m) = \sup_{c,h,\pi} \E_2\bigg[\int_0^T  e^{-\int_0^t (\delta + M_s) ds} [U&(c_t X_t) +M_t V(\zeta X_{t}, M_{t})] dt\nonumber\\
 &+ e^{-\int_0^T (\delta + M_s) ds} V(X_T,M_T)\bigg],\quad \forall T>0.
\end{align}
Then, in view of 
\begin{align*}
d\left(e^{-\int_0^t (\delta + M_s) ds} V(X_t,M_t)\right) = e^{-\int_0^t (\delta + M_s) ds} \bigg[-&(M_t+\delta) V(X_t,M_t) + V_x(X_t,M_t) dX_t\\
& + V_m(X_t,M_t) dM_t +\frac12 V_{xx}(X_t,M_t) (dX_t)^2\bigg],
\end{align*}
%the DPP in \eqref{DPP} would imply that 
\eqref{DPP} implies that for all $T>0$,
\[
0 = \sup_{c,h,\pi} \E_2\bigg[\int_0^T  e^{-\int_0^t (\delta + M_s) ds} [U(c_t X_t) +M_t V(\zeta X_{t}, M_{t})+K(X_t,M_t,c_t,h_t,\pi_t)] dt\bigg],
\]
where
\begin{align*}
K(x,m,c,h,\pi) := &-(\delta+m)V(x,m)+[r+\mu\pi-c-h]xV_x(x,m)\\
&+(\beta-g(h))mV_m(x,m)+\frac12\sigma^2\pi^2x^2V_{xx}(x,m),
\end{align*}
thereby leading to the HJB equation 
\begin{align}\label{HJB_risky asset}
0\ =\ &\sup_{c\ge 0}\{U(cx)-cxV_x(x,m)\}+m V(\zeta x,m)-(\delta+m)V(x,m)+rxV_x(x,m)\nonumber\\
& +\sup_{\pi\in\R}\left\{\mu\pi x V_x(x,m)+\frac12\sigma^2\pi^2 x^2 V_{xx}(x,m)\right\}\\
&+\beta m V_m(x,m) +\sup_{h\ge 0}\{-g(h)m V_m(x,m)-hxV_x(x,m)\}.\nonumber 
\end{align}
Assuming heuristically that $V_{xx}<0$ and $V_m<0$, the above equation suggests the following candidate optimal policies $(\hat c,\hat\pi, \hat h)$ from the first-order conditions:
\begin{equation}\label{optimal policies}
\hat c =  \frac{V_x(x,m)^{-1/\gamma}}{x},\quad \hat \pi = -\frac{\mu}{\sigma^2 x}\frac{V_x(x,m)}{V_{xx}(x,m)},\quad \hat h = (g')^{-1}\left(-\frac{x V_x(x,m)}{m V_m(x,m)}\right), 
\end{equation}
which in turn imply that \eqref{HJB_risky asset} can be simplified as
\begin{align}\label{HJB_risky asset'}
0=\ &\frac{\gamma}{1-\gamma}(V_x(x,m))^{\frac{\gamma-1}{\gamma}}+mV(\zeta x,m)-(\delta+m)V(x,m)+rxV_x(x,m)+\beta m V_m(x,m)\nonumber\\
& -\frac{1}{2}\left(\frac{\mu}{\sigma}\right)^2\frac{V_{x}^2(x,m)}{V_{xx}(x,m)}-mV_m(x,m)\sup_{h\ge 0}\left\{g(h)+\frac{hxV_x(x,m)}{mV_m(x,m)}\right\}.
\end{align}
%In view of the scaling property 
Using the ansatz $V(x,m) = \frac{x^{1-\gamma}}{1-\gamma} u(m)^{-\gamma}$, the above  equation reduces to
\begin{equation}\label{eq:odepower_risky asset}
0 = u^2(m)-\bar c_0 u(m) -\beta m u'(m) + mu'(m)\sup_{h\ge 0}\left\{g(h)-\frac{1-\gamma}{\gamma}\frac{u(m)}{m u'(m)} h\right\},
\end{equation}
where 
\begin{equation}\label{bar c_0}
\bar c_0(m) := \frac{\delta+(1-\zeta^{1-\gamma})m}{\gamma} + \left(1-\frac{1}{\gamma}\right) \left(r+ \frac{1}{2\gamma}\left(\frac{\mu}{\sigma}\right)^2\right).
\end{equation}

The next result shows that the heuristic derivation above does lead us to truly optimal strategies.

\begin{theorem}\label{thm:additive}
Including the risky asset $S$ as in \eqref{stock}, Proposition~\ref{prop:main 1}, Proposition~\ref{prop:main 2}, and Theorem~\ref{thm:main 3} still hold true, with the interest rate $r$ replaced by $r+ \frac{1}{2\gamma}\left(\frac{\mu}{\sigma}\right)^2$. In particular, the optimal consumption and health spending are as specified therein, with $c_0$ replaced by $\bar c_0$ in \eqref{bar c_0}, while the optimal portfolio is
\[
\hat \pi_t \equiv \frac{\mu}{\gamma\sigma^2},\quad t\ge 0.
\]
\nada{With multiple assets, the optimal portfolio is $\pi_t = \frac{1}{\gamma}\Sigma^{-1}\mu$, where $\mu$ is the vector of expected excess returns and $\Sigma$ their covariation matrix. Accordingly, in equation \eqref{bar c_0} the term $(\mu/\sigma)^2$ is replaced by $\mu \Sigma^{-1}\mu$.}
\end{theorem}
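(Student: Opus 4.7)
The plan is to exploit the homothetic structure of the isoelastic utility together with the independence of the Brownian motion $W$ and the exponential random variables $\{Z_n\}$, reducing the problem with a risky asset to the already-solved jump-only problem with an effective safe rate $\bar r := r + \tfrac{1}{2\gamma}(\mu/\sigma)^2$. The key observation is that, under the ansatz $V(x,m) = \frac{x^{1-\gamma}}{1-\gamma} u(m)^{-\gamma}$, the portfolio term in the HJB equation \eqref{HJB_risky asset} is purely quadratic in $\pi$, so its supremum is attained at the constant Merton weight $\hat\pi = \mu/(\gamma\sigma^2)$, producing a contribution that is \emph{independent of $m$} and merges with the interest rate term. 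Consequently, the reduction \eqref{HJB_risky asset'} $\to$ \eqref{eq:odepower_risky asset} mirrors exactly the reduction carried out in Section~\ref{subsec:beta>0 g>0}, with $c_0$ replaced by $\bar c_0$ as in \eqref{bar c_0}.

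First, I would carry out the derivation of \eqref{HJB_risky asset'} and \eqref{eq:odepower_risky asset} rigorously by substituting the ansatz into the HJB equation and checking the first-order conditions in \eqref{optimal policies}. Since the resulting ODE is structurally identical to \eqref{HJB_u_2'} after the replacement $r \mapsto \bar r$, existence, uniqueness, monotonicity, and concavity of $u^*$, as well as the asymptotic bounds of Theorem~\ref{thm:main 4}, follow verbatim by applying the corresponding statements of Proposition~\ref{prop:main 1}, Proposition~\ref{prop:main 2} and Theorem~\ref{thm:main 3} with the modified safe rate $\bar r$. The only non-trivial parametric check is that the well-posedness conditions \eqref{assumption_0}, \eqref{g<beta} remain valid after enlarging $r$ to $\bar r$, which is immediate since $\bar r\ge r$ only strengthens the requirements in the relevant EIS regime $0<\gamma<1$.

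The main obstacle is the verification step: showing that the candidate policies $(\hat c,\hat h,\hat\pi)$ actually attain the supremum in \eqref{problem_0_risky asset} despite the simultaneous presence of diffusive randomness and controlled jump intensity. I would invoke Theorem~\ref{thm:verification with stock} applied to the candidate value function $V(x,m)= \frac{x^{1-\gamma}}{1-\gamma}u^*(m)^{-\gamma}$. The argument applies It\^o's formula to $e^{-\int_0^t(\delta+M_s)ds} V(X_t,M_t)$, using that under the constant Merton portfolio $\hat\pi$ the wealth $X_t$ is a geometric Brownian motion with drift $\bar r - c_t - h_t$, so standard integrability bounds are available. The crucial simplification is that, because $W$ is independent of $\{Z_n\}$, the expectation in \eqref{problem_0_risky asset} can be taken iteratively: first conditioning on a path of $W$, applying \eqref{tau>t'} to compute the survival factor, and then taking the outer expectation. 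This decouples the diffusive and jump components, reducing the verification to two separable transversality conditions: a Merton-type one for the diffusion (controlled by concavity in $x$), and a jump-rate one for the mortality dynamics handled as in the proof of Theorem~\ref{thm:main 3} via the estimates of Theorem~\ref{thm:main 4}. The formula $\hat\pi_t \equiv \mu/(\gamma\sigma^2)$ then falls out of the first-order condition in \eqref{optimal policies} applied to the ansatz.
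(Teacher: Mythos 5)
Your proposal is correct and follows essentially the same route as the paper: reduce the HJB equation with the risky asset to the same ODE with the effective rate $r+\tfrac{1}{2\gamma}(\mu/\sigma)^2$ via the isoelastic ansatz, read off $\hat\pi\equiv\mu/(\gamma\sigma^2)$ from the first-order condition, and conclude by the verification results Theorem~\ref{thm:verification with stock} and Proposition~\ref{prop:verification with stock}, whose proofs use exactly the decoupling of $W$ from $\{Z_n\}$ (via $\E_2$ and \eqref{tau>t'}) that you describe.
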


\begin{proof}
Comparing \eqref{c_0} and \eqref{bar c_0}, note that $\bar c_0$ differs from $c_0$, and thus equation \eqref{eq:odepower_risky asset} differs from \eqref{eq:odepower}, only in the interest rate: $r$ is now raised to $r+ \frac{1}{2\sigma}\left(\frac{\mu}{\sigma}\right)^2$. It follows that we can re-state all the results in Section~\ref{sec:main results}, with the raised interest rate, by using appropriate verification results Theorem~\ref{thm:verification with stock} and Proposition~\ref{prop:verification with stock}, extensions of Theorem~\ref{thm:verification} and Proposition~\ref{prop:verification} to account for the risky asset $S$.  The resulting optimal ratios in consumption and health spending are the same as in Section~\ref{sec:main results}, while the optimal ratio in investment is given by \eqref{optimal policies}:
\[
\hat \pi_t = -\frac{\mu}{\sigma^2 X_t} \frac{V_x(X_t,M_t)}{V_{xx}(X_t,M_t)} =    -\frac{\mu}{\sigma^2 X_t}\frac{X_t^{-\gamma} u(M_t)^{-\gamma}}{(-\gamma) X_t^{-\gamma-1} u(M_t)^{-\gamma}} =  \frac{\mu}{\gamma\sigma^2}.
\]
\end{proof}

%%%%%%%%%%%%%%%%%%%%%%%%%%%%%%%%%%%%%%%%%%%%%%%55

\begin{figure}[t]
\centering
\includegraphics[width=0.47\textwidth]{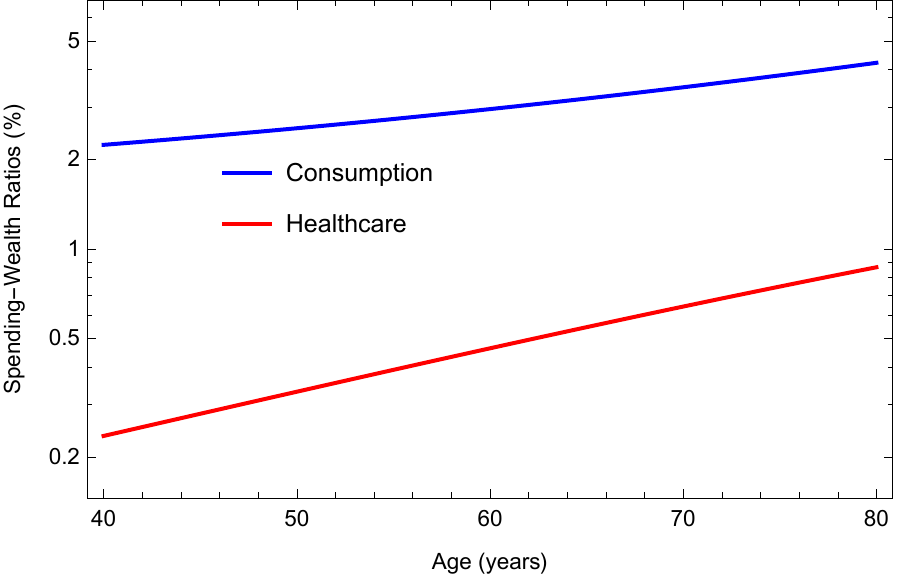}
\includegraphics[width=0.45\textwidth]{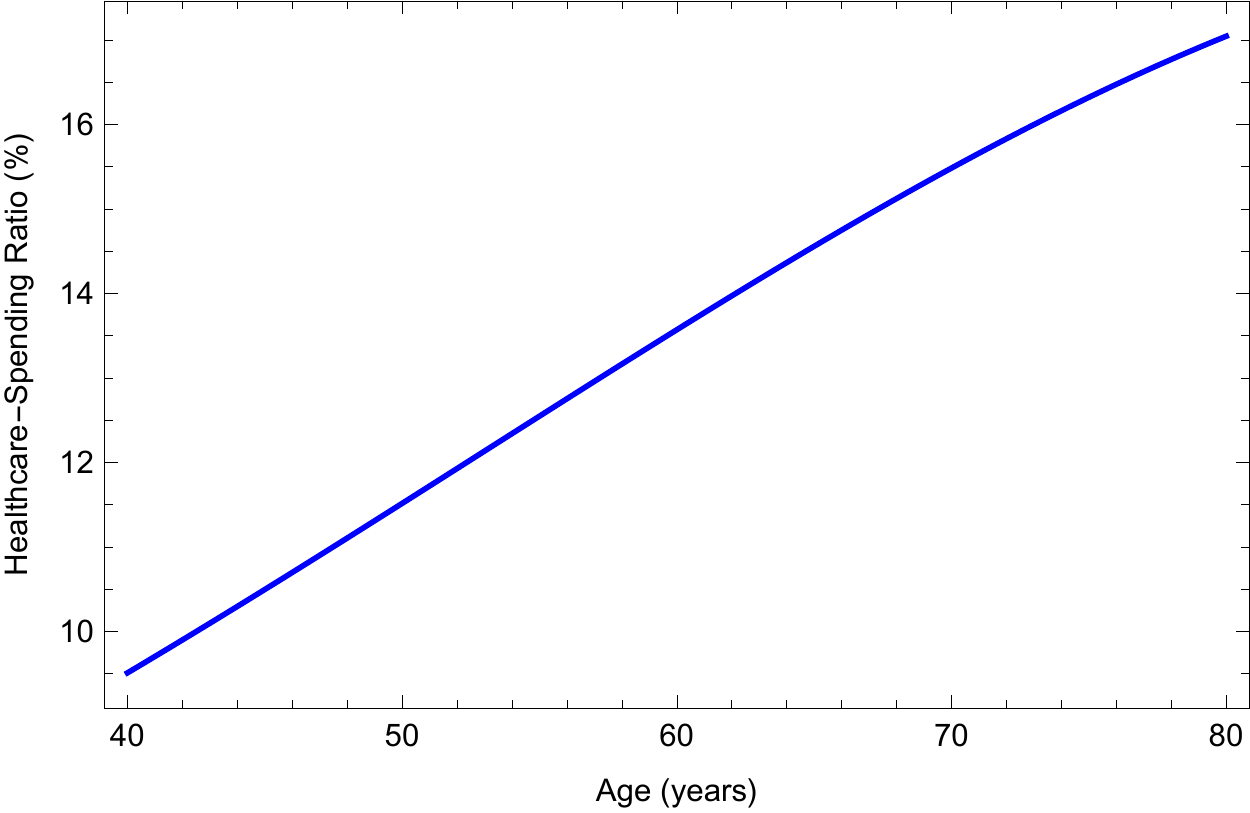}
\caption{
\label{fig:cons health}
Left: Consumption- and healthcare-wealth ratios (vertical) at adult ages (horizontal).
Right: Healthcare, as a fraction of total spending (vertical) at adult ages (horizontal).
}
\end{figure}

\section{Calibration and Implications}\label{sec:calibration}

This section discusses the implications of the isoelastic model in Corollary~\ref{coro:g isoe} for optimal policies, with the model's parameters calibrated to the values $r=1\%$, $\delta = 1\%$, $\beta=7.7\%$, $\gamma=0.67$, $\zeta=50\%$, $a=0.1$, $q=0.46$, and $m_0 = 0.019\%$.
A safe rate of $r=1\%$ approximates the long-term average real rate on Treasury bills reported by \citep{beeler2012long}, and our time preference $\delta = 1\%$ is also consistent with their estimates, while $\gamma =0.67$ corresponds to the estimates obtained by \cite{harrison2007estimating} in field experiments. %\cite{Harrison2007}
%$\gamma =0.67$ coincides with the value of EIS $1/\gamma = 1.5$ used by \cite{bansal2004risks} to reconcile stock prices with consumption and dividend dynamics. The interpretation of $\gamma$ as reciprocal of the EIS rather than risk aversion is consistent with the absence of risky assets in the model, which emphasizes its role in smoothing consumption over time rather than across states. 
%A time preference of $\delta = 1\%$ is broadly consistent with the typical range $[0,5\%]$ used in asset pricing (e.g. \cite{hansen1982generalized}), and  very close to the 1.3\% rate used by \cite{beeler2012long}. 
The conventional value of $\zeta=50\%$ implies that the household loses half of its wealth with each death, and describes a household in which future income (including pensions and annuities) represents a high proportion of the net worth.

The values of the mortality natural growth $\beta$ is estimated from mortality data for the US cohort born in 1900, assuming no healthcare available. Holding these estimates constant, the healthcare parameters $a$ and $q$ appearing in the efficacy function are calibrated by matching the endogenous mortality curve with mortality data for the US cohort born in 1940.

\subsection{Healthcare Spending}

Empirical studies thoroughly confirm the familiar observation that health spending increases with  age, 
%(see for example \cite{HartmanCatlinLassmanEtAl2008} for a recent survey)
 raising the natural question of whether the model's predictions satisfy this basic property.
The left panel in Figure \ref{fig:cons health} offers an affirmative answer. At age 40, annual healthcare spending is just above 0.2\% of wealth, but it rises quickly to almost 1\% at age 80. Such increase is broadly consistent with the results of \cite{HartmanCatlinLassmanEtAl2008}, who report at age 85 and older health spending between 5.7 and 6.9 times as large as at general working-ages.

In the model, healthcare is unattractive in the young age, as mortality is unlikely, and its potential losses lead to low optimal healthcare spending. As mortality grows exponentially, healthcare spending increases accordingly, at a rate that is not far from mortality growth.

The right panel in Figure \ref{fig:cons health} compares health spending to total spending, including consumption. Although the increases in mortality implies higher spending rate in both categories, the effect is very pronounced for healthcare, as its weight increases from less than 10\% at age 40 to almost 20\% at age 80.

\begin{figure}[t]
\centering
\includegraphics[width=0.9\textwidth]{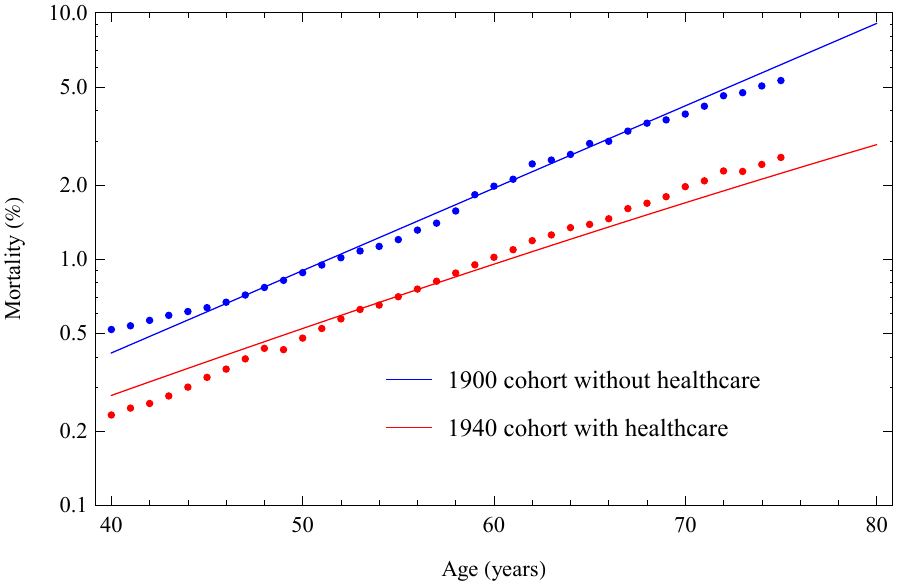}
\caption{ 
Empirical (dots) and model-implied (lines) mortality rates at adult ages for the birth cohorts of 1900 and 1940
 }
   \label{fig:mort_with_out}
\end{figure}

\subsection{Mortality}

A central question for the model at hand is to what extent it can account for the secular decrease in mortality rates. Figure \ref{fig:mort_with_out} attempts to evaluate the model performance under two simplifying assumptions. First, suppose that the cohort with birth in 1900 essentially had no access to healthcare, therefore that its mortality simply grows exponentially with Gompertz' law at the natural rate $\beta$. Second, suppose that the 1940 cohort had full access to healthcare, whence its mortality rate follows the endogenous  growth implied by the model. 

Both assumptions are clearly crude approximations, as healthcare did exist in 1900, though it was certainly less advanced and available than forty years later. Also, being forty-year old in 1940, the 1900 cohort clearly had some access to healthcare. Yet, as mortality rates in the United States are available from 1933, data on the 1900 cohort begins with 33-year olds. Similarly, at the time of writing mortality rates for 75-year-olds are not available for younger cohorts than 1940. Data availability aside, much of the rise in employer-sponsored insurance in the United States developed in the wake of wage controls enacted during World War II, which makes the 1940 cohort a reasonable choice for the our purposes.

Keeping in mind these limitations of the model and the data, Figure \ref{fig:mort_with_out} shows that the calibrated parameters are able to largely explain the decline in mortality between the two cohorts as the result of health spending. In particular, it attests the ability of the model to reproduce declines in mortality rates that are close to the ones observed historically, and that are consistent with the plausible levels of health spending described above.

\section{Conclusion}

The role of healthcare in reducing mortality rather than increasing current utility calls for a different treatment from other forms of consumption. At the same time, the exponential growth of mortality is a leading driver of health spending. This paper combines these features in a model of optimal choice of consumption and health spending, in which savings accrue a constant interest rate.

The model captures a few stylized facts on mortality and health spending. Healthcare leads to an endogenously determined mortality curve that is close to exponential, albeit with a lower growth rate, consistently with the observed secular decline in mortality. Healthcare spending continues to increase at adult ages, outpacing consumption growth, and therefore also its share of total spending increases.

\appendix

\section{Proofs of Main Results}
In this section, we prove the main results in Section~\ref{sec:main results} by verification arguments, relying on Theorem \ref{thm:verification} and \propref{prop:verification}. %dynamic programming. 
Given $f:\R_+\to\R$, consider its Legendre transform $\widetilde{f}(y):=\sup_{x\ge 0}\{f(x)-xy\}$ for $y\in\R_+$. A heuristic derivation as in Section \ref{subsec:heuristics} shows that the Hamilton-Jacobi equation associated with $V(x,m)$ in \eqref{problem} is
\begin{equation}\label{HJB}
\begin{split}
\widetilde{U}(w_x(x,m)) & +m w(\zeta x,m)-\left(\delta+m\right)w(x,m) \\
&+ rxw_x(x,m) +\beta m w_m(x,m)+\sup_{h\ge 0}\left\{ -m w_m(x,m)g(h)-h x w_x(x,m)\right\}=0.
\end{split}
\end{equation}
This is simply \eqref{HJB_risky asset'} without the second-order term, contributed by the added risky asset in Section~\ref{sec:risky}.

\subsection{Neither Aging nor Healthcare ($\beta =0$ and $g \equiv 0$)} \label{subsec:proof beta=0 g=0}
Recall the setup in Section~\ref{subsec:beta=0 g=0}. 
Since $V(x,m)$ is nondecreasing in $x$ by definition, the supremum in the last term of \eqref{HJB} vanishes, leading to the equation 
\begin{equation}%\label{HJB}
\widetilde{U}(w_x(x,m))  +m w(\zeta x,m)-\left(\delta+m\right)w(x,m) + rxw_x(x,m) =0.
\end{equation}
If $V$ is of the form $V(x,m) = \frac{x^{1-\gamma}}{1-\gamma} v(m)$. Then, the above equation reduces to 
\[
v(m)^{1-\frac1\gamma}- c_0(m) v(m)=0,
\]
where $ c_0(m)$ is defined as in \eqref{c_0}. By setting $v(m)=u(m)^{-\gamma}$, we obtain from the above equation 
\begin{equation}
u^2(m) - c_0(m)u(m)=0,
\end{equation}
whence $u(m) =  c_0(m)$. We then prove Proposition~\ref{prop:main 1} by verification.

\begin{proof}[Proof of Proposition~\ref{prop:main 1}]
Set $w(x,m):=\frac{x^{1-\gamma}}{1-\gamma}  c_0(m)^{-\gamma}$. Note that \eqref{assumption_0} implies $ c_0(m) >0$ for all $\gamma>0$, $\gamma\neq 1$. 

{\bf Case I:} $0<\gamma<1$. By \thmref{thm:verification}, it suffices to verify \eqref{t to infty} and \eqref{n to infty} under current context. For any $x\ge 0$, $c\in \cC$, and $n\in\N$, since $0<\gamma<1$, we have
\begin{align*}
0&\le \E\left[e^{-(\delta+m)(t-\tau_n)} w\left(\zeta^n X^{0,x,c}_t,m\right)\ \middle|\ Z_1,...,Z_n\right]\\
 &\le e^{-(\delta+m)(t-\tau_n)} \frac{(\zeta^n X^{0,x,c}_{\tau_n})^{1-\gamma}}{1-\gamma} e^{(1-\gamma)r(t-\tau_n)}  c_0(m)^{-\gamma}\to\  0\ \ \hbox{a.s.}\quad \hbox{as}\ t\to\infty,
\end{align*}
where the convergence follows from \eqref{assumption_0}. This already verifies \eqref{t to infty}. On the other hand, since $\tau_n$ is the sum of $n$ independent, identically distributed exponential random variables with mean $1/m$, 
\begin{equation}\label{tau_n estimate}
\begin{split}
0&\le \E\left[e^{-\delta \tau_n} w\left(\zeta^n X^{0,x,c}_{\tau_n},m\right)\right] \le \zeta^{(1-\gamma)n}\frac{x^{1-\gamma}}{1-\gamma}  c_0(m)^{-\gamma} 
\E\left[e^{-\delta \tau_n}e^{(1-\gamma) r\tau_n} \right]\\
& = \zeta^{(1-\gamma)n}\frac{x^{1-\gamma}}{1-\gamma}  c_0(m)^{-\gamma} \int_0^\infty e^{-(\delta+(\gamma-1)r) t} m^n e^{-mt} \frac{t^{n-1}}{(n-1)!}dt \\
&=  \frac{\left(m\zeta^{(1-\gamma)}\right)^n}{(n-1)!}\frac{x^{1-\gamma}}{1-\gamma}  c_0(m)^{-\gamma} \int_0^\infty e^{-(\delta+m+(\gamma-1)r) t}  t^{n-1} dt\\
&= \left(\frac{m\zeta^{(1-\gamma)}}{\delta+m+(\gamma-1)r}\right)^n \frac{x^{1-\gamma}}{1-\gamma}  c_0(m)^{-\gamma},
\end{split}
\end{equation}
where the third equality requires $\delta+m+(\gamma-1)r>0$, which is true under \eqref{assumption_0}. Finally, noting that \eqref{assumption_0} implies $\frac{m\zeta^{(1-\gamma)}}{\delta+m+(\gamma-1)r}\in(0,1)$, we conclude that $\E\big[e^{-\delta \tau^m_n} w\big(\zeta^n X^{0,x,c}_{\tau^m_n},m\big)\big]\to 0$ as $n\to\infty$, which verifies \eqref{n to infty}.

{\bf Case II:} $\gamma>1$. By \propref{prop:verification}, it suffices to establish \eqref{t to infty'}-\eqref{eps to 0} and show that $\hat c_t \equiv  c_0(m)$ satisfies \eqref{t to infty} and \eqref{n to infty}. For any $\eps>0$, since $g\equiv 0$, the sequence $\{\tau^\eps_n\}_{n\ge 0}$ constructed in Appendix~\ref{sec:appendix} coincides with $\{\tau_n\}_{n\ge 0}$. The counting process $N^\eps$ in \eqref{counting process eps} is therefore the same as $N$ in \eqref{counting process}. It follows that \eqref{eps to 0} trivially holds under current context. Given $x\ge 0$, $c\in \cC$, and $\eps>0$, consider
\begin{equation}\label{c eps}
(c_\eps)_t := \frac{c_t X^{0,x,c}_t}{X^{0,x,c}_t + \eps e^{rt}}\quad \forall t\ge 0.
\end{equation}
By construction, $X^{0,x+\eps,c_\eps}_t = X^{0,x,c_\eps} + \eps e^{rt}$ for all $t\ge 0$. This, together with $\gamma>1$, implies that for any $n\in\N$,
\[
0\ge \E\left[e^{-(\delta+m)(t-\tau_n)} w\left(\zeta^n X^{0,x+\eps,c_\eps}_t,m\right)\ \middle|\ Z_1,...,Z_n\right]\ge e^{-(\delta+m)(t-\tau_n)} \frac{(\zeta^n\eps)^{1-\gamma}e^{(1-\gamma)rt}}{1-\gamma} c_0(m)^{-\gamma}.
\]
Since the right hand side converges to 0 a.s. as $t\to\infty$, the above inequality in particular implies that \eqref{t to infty'} holds. On the other hand, by a calculation similar to \eqref{tau_n estimate},
\begin{align*}
0&\ge  \E\left[e^{-\delta \tau_n} w\left(\zeta^n X^{0,x+\eps,c_\eps}_{\tau_n},m\right)\right] \ge \zeta^{(1-\gamma)n}\frac{\eps^{1-\gamma}}{1-\gamma}  c_0(m)^{-\gamma} \E\left[e^{-\delta \tau_n}e^{(1-\gamma) r\tau_n} \right]\\
&= \left(\frac{m\zeta^{(1-\gamma)}}{\delta+m+(\gamma-1)r}\right)^n \frac{\eps^{1-\gamma}}{1-\gamma} c_0(m)^{-\gamma}.
\end{align*}
Since \eqref{assumption_0} again guarantees that $\frac{m\zeta^{(1-\gamma)}}{\delta+m+(\gamma-1)r}\in(0,1)$, we conclude that $\E[e^{-\delta \tau_n} w(\zeta^n X^{0,x,c}_{\tau_n},m)]\to 0$ as $n\to\infty$, which verifies \eqref{n to infty'}. Now, with $\hat c_t \equiv  c_0(m)$, for any $n\in\N$, 
\[
0\ge e^{-(\delta+m) (t-\tau_n)} w(X^{0,x,\hat c^{}}_{t},m) = \frac{(X^{0,x,\hat c}_{\tau_n})^{1-\gamma}}{1-\gamma}  c_0(m)^{-\gamma} e^{-[\delta+m+(\gamma-1)(r- c_0(m))](t-\tau_n)},\quad \hbox{if $t>\tau_n$}.
\]
Observing that \eqref{assumption_0} implies
\begin{equation}\label{>0}
\delta+m+(\gamma-1)(r- c_0(m))= c_0(m)+ m\zeta^{1-\gamma}>0,
\end{equation}
we conclude that $e^{-(\delta+m) (t-\tau_n)} w(X^{0,x,\hat c^{}}_t,m)\to 0$ a.s. as $t\to\infty$. This shows that $\hat c$ satisfies \eqref{t to infty}. Thanks to \eqref{>0}, a calculation similar to \eqref{tau_n estimate} yields
\begin{align*}
0&\ge  \E\left[e^{-\delta \tau_n} w\left(\zeta^n X^{0,x,\hat c^{}}_{\tau_n},m\right)\right] = \zeta^{(1-\gamma)n}\frac{x^{1-\gamma}}{1-\gamma}  c_0(m)^{-\gamma} \E\left[e^{-\delta \tau_n}e^{(1-\gamma) (r- c_0(m))\tau_n} \right]\\
%&= \left(\frac{m\zeta^{1-\gamma}}{\delta+m+(\gamma-1)(r-\hat c_0(m))}\right)^n \frac{\eps^{1-\gamma}}{1-\gamma} \hat c_0(m)^{-\gamma}\\
&= \left(\frac{m\zeta^{1-\gamma}}{ c_0(m)+ m\zeta^{1-\gamma}}\right)^n \frac{x^{1-\gamma}}{1-\gamma}  c_0(m)^{-\gamma}\to 0\quad \hbox{as}\ n\to\infty,
\end{align*}
which shows that $\hat c$ satisfies \eqref{n to infty}. 
\end{proof}

%%%%%%%%%%%%%%%%%%%%%%%%%%%%%%%%%%%%%%%%%%%%%%%%%%%%%%%%%%%%%%%%%%%%%%%%%%%%%%%%%%%%%%%%%%%%%%%55555555

\subsection{Aging without Healthcare ($g\equiv 0$)} \label{subsec:proof beta>0 g=0}
Recall the setup in Section~\ref{subsec:beta>0 g=0}. The Hamilton-Jacobi equation \eqref{HJB} associated with the value function $V(x,m)$ now becomes
\begin{equation}%\label{HJB}
\widetilde{U}(w_x(x,m))  +m w(\zeta x,m)-\left(\delta+m\right)w(x,m) + rxw_x(x,m)+\beta m w_m(x,m) =0.
\end{equation}
If $V$ is of the form $V(x,m) = \frac{x^{1-\gamma}}{1-\gamma} v(m)$, the above equation turns into
\[
v(m)^{1-\frac1\gamma}- c_0(m) v(m) + \frac{\beta m}{\gamma} v'(m)=0,
\]
where $ c_0(m)$ is given by \eqref{c_0}. Setting $v(m)=u(m)^{-\gamma}$, we obtain from the above equation 
\begin{equation}\label{HJB_u_1}
u^2(m) - c_0(m)u(m)-\beta m u'(m)=0,
\end{equation}
which admits the general solution
\[
u(m) = \beta e^{-\frac{(1-\zeta^{1-\gamma})m}{\beta\gamma}} \left[C \beta m^{\frac{\delta+(\gamma-1)r}{\beta\gamma}}+ \int_1^\infty e^{-\frac{(1-\zeta^{1-\gamma})mu}{\beta\gamma}} u^{-\left(1+ \frac{\delta+(\gamma-1)r}{\beta\gamma}\right)} du\right]^{-1},
\]
where $C\in\R$ is a constant to be determined. Proposition~\ref{prop:main 2} states that taking $C=0$, which turns $u(m)$ into $u_0(m)$ in \eqref{u0}, leads to our value function $V(x,m)$. In the following, we separate the proof of Proposition~\ref{prop:main 2} into two parts.

\begin{lemma}\label{lem:properties of u0}
Under the assumptions of Proposition~\ref{prop:main 2}, $u_0$ defined in \eqref{u0} is a strictly increasing function on $(0,\infty)$ satisfying (a) and (b) in Proposition~\ref{prop:main 2}. 
\end{lemma}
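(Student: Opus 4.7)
The plan is to derive every assertion directly from the closed form \eqref{u0} together with the ODE it satisfies. Abbreviate $\lambda := (1-\zeta^{1-\gamma})m/(\beta\gamma)$ and $\alpha := (\delta+(\gamma-1)r)/(\beta\gamma)$, so that $u_0(m)=\beta/J(m)$ with
\[
J(m) := \int_0^\infty e^{-\lambda y}(y+1)^{-(1+\alpha)}\,dy.
\]
Under either hypothesis of Proposition~\ref{prop:main 2} one has $1-\zeta^{1-\gamma}>0$ and $\alpha>0$, so $J(m)$ is finite and positive for every $m\ge0$. Since the integrand is strictly decreasing in $m$ for each $y>0$, $J$ is strictly decreasing on $[0,\infty)$ and hence $u_0=\beta/J$ is strictly increasing there. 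Setting $m=0$ gives $J(0)=1/\alpha$ and therefore $u_0(0)=\beta\alpha=c_0(0)$, which is the first half of (a).

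Next I would check by straightforward differentiation of $J$ that $u_0$ satisfies the Riccati-type ODE $u_0(u_0-c_0)=\beta m\, u_0'$, which is the same equation as \eqref{HJB_u_1}. The lower bound $u_0>c_0$ on $(0,\infty)$ is then immediate from $u_0>0$ and $u_0'>0$. For the upper bound $u_0<c_0+\beta$, I introduce the barrier $\phi:=u_0-c_0-\beta$ and observe that at any $m_0>0$ with $\phi(m_0)=0$, substituting $u_0(m_0)=c_0(m_0)+\beta$ into the ODE forces $u_0'(m_0)=(c_0(m_0)+\beta)/m_0$. A short computation using $c_0'(m)=(1-\zeta^{1-\gamma})/\gamma$ then yields $\phi'(m_0)=(c_0(0)+\beta)/m_0>0$, so $\phi$ can cross zero only strictly upward. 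Combined with $\phi(0)=-\beta<0$ and $\phi(\infty)=0$, established below, this rules out any crossing and forces $\phi<0$ on $(0,\infty)$.

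For the large-$m$ asymptotics, the key step is a Laplace-type expansion: after the substitution $s=\lambda y$,
\[
J(m)=\frac{1}{\lambda}\int_0^\infty e^{-s}\left(1+\frac{s}{\lambda}\right)^{-(1+\alpha)} ds,
\]
and expanding $(1+s/\lambda)^{-(1+\alpha)}=1-(1+\alpha)s/\lambda+O(1/\lambda^2)$ with the remainder dominated uniformly by the $e^{-s}$-weight yields $J(m)=(1/\lambda)[\,1-(1+\alpha)/\lambda+O(1/\lambda^2)\,]$, hence $u_0(m)=\beta\lambda+\beta(1+\alpha)+O(1/\lambda)$. Since $\beta\lambda=c_0(m)-c_0(0)$ and $c_0(0)=\beta\alpha$, one concludes $u_0(m)-(c_0(m)+\beta)\to0$. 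The slope limit $u_0'(\infty)=(1-\zeta^{1-\gamma})/\gamma$ then follows by passing to the limit in $u_0'=u_0(u_0-c_0)/(\beta m)$, using $u_0-c_0\to\beta$ and $u_0/m\to(1-\zeta^{1-\gamma})/\gamma$.

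The remaining item $u_0'(0+)=\infty$ is the most delicate step and will be the main obstacle, because the naive Taylor expansion of $J$ about $m=0$ fails precisely when the first-moment integral $\int_0^\infty y(y+1)^{-(1+\alpha)}\,dy$ diverges, i.e.\ when $\alpha\le1$. In that regime I would split $\int_0^\infty=\int_0^{1/\lambda}+\int_{1/\lambda}^\infty$ and use $|e^{-\lambda y}-1|\le \lambda y\wedge 1$ to show that $|J(m)-1/\alpha|$ is of order $m^\alpha$, with a logarithmic correction in the borderline case $\alpha=1$. Translating through $u_0=\beta/J$ gives $u_0(m)-c_0(0)\sim C\,m^\alpha$ for some $C>0$, and plugging into the ODE yields $u_0'(m)\sim C'\,m^{\alpha-1}\to\infty$ as $m\to0^+$, as required.
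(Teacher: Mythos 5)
Your overall strategy---reading everything off the closed form \eqref{u0} together with the ODE \eqref{HJB_u_1}---is the same as the paper's, and most steps (strict monotonicity of $J$, the value $u_0(0)=c_0(0)=\beta\alpha$ with $\alpha:=\frac{\delta+(\gamma-1)r}{\beta\gamma}$, the lower bound via $u_0-c_0=\beta m u_0'/u_0>0$, the Laplace expansion at infinity, and the slope limit $u_0'(\infty)=\frac{1-\zeta^{1-\gamma}}{\gamma}$) are correct and in fact more detailed than the paper's, which compresses the expansion at infinity into a single sentence. Where you diverge is the upper bound $u_0<c_0+\beta$: the paper gets it in one line from $y+1<e^{y}$, which gives $(y+1)^{-(1+\alpha)}>e^{-(1+\alpha)y}$ inside the integral, hence $J(m)>(\lambda+1+\alpha)^{-1}$ and $u_0(m)<\beta\lambda+\beta+\beta\alpha=c_0(m)+\beta$ directly. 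Your barrier argument does not close as written: you correctly compute $\phi'(m_0)=(\bar c+\beta)/m_0>0$ at any zero $m_0$ of $\phi:=u_0-c_0-\beta$, so $\phi$ cannot return to zero once positive; but ``$\phi(0)<0$, only upward crossings, $\phi(\infty)=0$'' is jointly consistent with $\phi$ crossing upward once and then decaying to $0^{+}$ from above, which your argument never excludes. To repair it you need the sign of $\phi$ near infinity, i.e.\ one more order of your own expansion (which gives $\phi(m)=-\beta(1+\alpha)/\lambda+O(\lambda^{-2})<0$ for large $m$), or simply the paper's pointwise estimate.

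The more substantive issue is $u_0'(0+)=\infty$. The paper asserts this ``directly from the definition'' with no argument; your analysis is the honest one, but it only establishes the claim when $\alpha\le 1$. When $\alpha>1$ the first-moment integral $\int_0^\infty y(y+1)^{-(1+\alpha)}dy=\frac{1}{\alpha(\alpha-1)}$ converges, dominated convergence makes $J'(0+)$ finite, and one gets $u_0'(0+)=\frac{\alpha}{\alpha-1}\cdot\frac{1-\zeta^{1-\gamma}}{\gamma}<\infty$; this is consistent with the ODE, which admits the finite slope $L=\bar c\,a/(\bar c-\beta)$ at the origin precisely when $\bar c=\beta\alpha>\beta$. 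So your proof has a genuine hole for $\alpha>1$---and indeed the assertion itself fails there, since the hypotheses of Proposition~\ref{prop:main 2} do not bound $\delta+(\gamma-1)r$ by $\beta\gamma$. You should either impose $\delta+(\gamma-1)r\le\beta\gamma$ (which the paper's calibration satisfies) or record that $u_0'(0+)$ is finite in the complementary regime. Your treatment of the regime $\alpha\le1$ (splitting the integral at $y=1/\lambda$, the bound $|1-e^{-\lambda y}|\le \lambda y\wedge 1$, and feeding $u_0(m)-c_0(0)\sim Cm^{\alpha}$ back into the ODE) is sound.
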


\begin{proof}
The definition of $u_0$ in \eqref{u0} directly implies that $u_0$ is strictly increasing, $u_0(0)= \frac{\delta+(\gamma-1)r}{\gamma}$, and $u_0'(0+)=\infty$. Since $u_0$ solves \eqref{HJB_u_1},  for any $m\in(0,\infty)$,
\[
u_0(m) -  c_0(m) = \frac{\beta m u_0'(m)}{u_0(m)} >0,
\]
where the inequality follows from $u_0$ being positive and strictly increasing. On the other hand, using $y+1 < e^y$ for $y > 0$, \eqref{u0} yields 
\begin{align*}
u_0(m) &< \beta \left[\int_0^\infty \exp\left\{-\left(\frac{(1-\zeta^{1-\gamma})m}{\beta\gamma}+1+ \frac{\delta+(\gamma-1)r}{\beta\gamma}\right) y\right\} dy\right]^{-1}\\
& = \beta \left[\frac{\delta+(1-\zeta^{1-\gamma})m+(\gamma-1)r}{\beta\gamma}+1\right] =  c_0(m) +\beta.
\end{align*}
Finally, Taylor's expansion of $u_0(m)$ at infinity shows
\[
u_0(m) =   c_0(m) +\beta + O(1/m).
\]
This implies $u_0(m) - (  c_0(m) +\beta) \to 0$ as $m\to\infty$, and 
\[
\lim_{m\to\infty} u_0'(m) = \lim_{m\to\infty} \frac{u_0(m)}{m} = \frac{1-\zeta^{1-\gamma}}{\gamma}. 
\]
\end{proof}

\begin{proof}[Proof of Proposition~\ref{prop:main 2}]
Properties (a) and (b) are established in Lemma~\ref{lem:properties of u0}. %Proving the strict concavity of $u_0$ requires additional set-up in Section~\ref{subsec:proof beta>0 g>0}, and it will be established in Corollary~\ref{coro:u0 concave}. 
Here, we prove the rest of the claims in Proposition~\ref{prop:main 2}.
Set $w(x,m) := \frac{x^{1-\gamma}}{1-\gamma} u_0(m)^{-\gamma}$. By Lemma~\ref{lem:properties of u0}, it remains to show that $V(x,m) = w(x,m)$ and $\hat c_t := u_0(m e^{\beta t})$, $t\ge 0$, is an optimal control of \eqref{problem_0}. First, we observe that $\hat c$ is an element of $\cC$. Indeed, for any compact subset $K$ of $\R_+$, thanks to $u_0 \le c_0 +\beta$ in Lemma~\ref{lem:properties of u0}, 
\begin{equation}\label{c admissible}
\int_K \hat c_t dt \le \int_K \frac{\delta+(1-\zeta^{1-\gamma}) me^{\beta t}+(\gamma-1)r}{\gamma} + \beta\ dt <\infty.  
\end{equation}
Now we deal with two cases separately.

{\bf Case I:} condition (i) holds. By \thmref{thm:verification}, it suffices to verify \eqref{t to infty} and \eqref{n to infty} under current context. For any $(x,m)\in \R^2_+$, $c\in \cC$, and $n\in\N$, by using $\gamma\in(0,1)$ and $X^{0,x,c}_t \le X^{0,x,c}_{\tau_n}\exp\left(r(t-\tau_n)\right)$ on the set $\{t\ge \tau_n\}$, 
\begin{align*}
0&\le \E\left[\exp\left(-\int_{\tau_n}^t (\delta+ m e^{\beta s}) ds\right) w\left(\zeta^n X^{0,x,c}_t, me^{\beta t}\right)\ \middle|\ Z_1,...,Z_n\right]\\
 &\le e^{-(\delta+m)(t-\tau_n)} \frac{(\zeta^n X^{0,x,c}_{\tau_n})^{1-\gamma}}{1-\gamma} e^{(1-\gamma)r(t-\tau_n)} u_0(me^{\beta t})^{-\gamma}\to 0\ \ \hbox{a.s.}\quad \hbox{as}\ t\to\infty,
\end{align*}
where the convergence follows from  $\delta+ (\gamma-1) r >0$ and $u_0$ being a nondecreasing function by definition. This in particular implies \eqref{t to infty}. On the other hand,
\begin{equation*}%\label{tau_n estimate 2}
\begin{split}
0&\le \E\left[e^{-\delta \tau_n} w\left(\zeta^n X^{0,x,c}_{\tau_n},me^{\beta \tau_n}\right)\right] \le \zeta^{(1-\gamma)n}\frac{x^{1-\gamma}}{1-\gamma}  \E\left[e^{-\delta \tau_n}e^{(1-\gamma) r\tau_n} u_0(me^{\beta \tau_n})^{-\gamma}\right]\\
&\le \zeta^{(1-\gamma)n}\frac{x^{1-\gamma}}{1-\gamma} u_0(m)^{-\gamma} \E\left[e^{-(\delta+(\gamma-1)r) \tau_n}\right]
\le \zeta^{(1-\gamma)n}\frac{x^{1-\gamma}}{1-\gamma} u_0(m)^{-\gamma} \to 0\quad \hbox{as}\quad n\to\infty,
\end{split}
\end{equation*}
where the fourth inequality follows from $\delta+ (\gamma-1) r >0$ and the convergence is due to $\gamma,\zeta\in(0,1)$. Thus, \eqref{n to infty} is satisfied.

{\bf Case II:} condition (ii) holds. By \propref{prop:verification}, it suffices to establish \eqref{t to infty'}-\eqref{eps to 0} and show that $\hat c_t := u_0(me^{\beta t})$ satisfies \eqref{t to infty} and \eqref{n to infty}. For any $\eps>0$, since $g\equiv 0$, the sequence $\{\tau^\eps_n\}_{n\ge 0}$ constructed in Appendix~\ref{sec:appendix} coincides with $\{\tau_n\}_{n\ge 0}$. The counting process $N^\eps$ in \eqref{counting process eps} is therefore the same as $N$ in \eqref{counting process}. Thus, \eqref{eps to 0} trivially holds under current context. Given $(x,m)\in \R^2_+$, $c\in \cC$, and $\eps>0$, consider the consumption policy $c_\eps$ as in \eqref{c eps}, and the associated property $X^{0,x+\eps,c_\eps}_t = X^{0,x,c}_t + \eps e^{rt}$ for all $t\ge 0$. We then deduce from $\gamma>1$ and $u_0$ being an nondecreasing function that for any $n\in\N$,
\begin{align*}
0&\ge \E\left[\exp\left(-\int_{\tau_n}^t (\delta+ m e^{\beta s}) ds\right) w\left(\zeta^n X^{0,x+\eps,c_\eps}_t,me^{\beta t}\right)\ \middle|\ Z_1,...,Z_n\right]\\
&\ge e^{-(\delta+m)(t-\tau_n)} \frac{(\zeta^n\eps)^{1-\gamma}e^{(1-\gamma)rt}}{1-\gamma} u_0(m)^{-\gamma}\to 0\ \ \hbox{a.s.}\quad\hbox{as}\ t\to\infty,
\end{align*}
which in particular implies \eqref{t to infty'}. Since $\gamma>1$ ensures $\delta+(\gamma-1)r>0$, we have
\begin{align*}
0\ge \E\left[e^{-\delta \tau_n} w\left(\zeta^n X^{0,x+\eps ,c_\eps}_{\tau_n},me^{\beta \tau_n}\right)\right] &\ge \zeta^{(1-\gamma)n}\frac{\eps^{1-\gamma}}{1-\gamma} u_0(m)^{-\gamma} \E[e^{-(\delta+(\gamma-1)r)\tau_n}]\\
& \ge \zeta^{(1-\gamma)n}\frac{\eps^{1-\gamma}}{1-\gamma} u_0(m)^{-\gamma}\to 0\quad \hbox{as}\ n\to\infty,
\end{align*} 
where the convergence follows from $\gamma,\zeta>1$. This verifies \eqref{n to infty'}. Now, for any $n\in\N$, applying $\hat c_t := u_0(me^{\beta t})$, $t\ge 0$, yields
\begin{align}
0 &\ge \E\left[\exp\left(-\int_{\tau_n}^t (\delta+ m e^{\beta s}) ds\right) w\left(X^{0,x,\hat c}_t,me^{\beta t}\right)\ \middle|\ Z_1,...,Z_n\right]\nonumber\\
&\ge e^{-[\delta+(\gamma-1)r] (t-\tau_n)}\exp\left( -\int_{\tau_n}^t \left[m e^{\beta s}-(\gamma-1)u_0(me^{\beta s})\right] ds\right) 
 \frac{(X^{0,x,\hat c}_{\tau_n})^{1-\gamma}}{1-\gamma} u_0(m)^{-\gamma}\label{bar c estimate}
\end{align}
on the set $\{t \ge \tau_n\}$. By direct calculation, $u_0(m)$ defined in \eqref{u0} can be expressed as
\[
u_0(m) = \beta\frac{e^{-\frac{m(1-\zeta^{1-\gamma})}{\beta\gamma}}\left(\frac{m(1-\zeta^{1-\gamma})}{\beta\gamma}\right)^{-\frac{\delta+(\gamma-1)r}{\beta\gamma}}}{\overline\Gamma\left(-\frac{\delta+(\gamma-1)r}{\beta\gamma},\frac{m(1-\zeta^{1-\gamma})}{\beta\gamma}\right)},
\]
where $\overline\Gamma(s,z):=\int_z^\infty t^{s-1} e^{-t} dt$ is the upper incomplete gamma function. Recalling the property $\frac{\overline\Gamma(s,z)}{e^{-z}z^{s-1}}\to 1$ as $z\to\infty$, it follows that
\begin{equation}\label{u_0<m}
\lim_{m\to\infty} \frac{m}{\gamma u_0(m)}= \frac{1}{1-\zeta^{1-\gamma}} >1,
\end{equation}
whence $me^{\beta s} > \gamma u_0(m e^{\beta s})$ for $s$ large enough. This, together with $u_0$ being a positive nonincreasing function, shows that $\int_{\tau_n}^t \left[m e^{\beta s}-(\gamma-1)u_0(me^{\beta s})\right] ds\to\infty$ a.s. as $t\to\infty$. We then conclude from \eqref{bar c estimate} that $\hat c$ satisfies \eqref{t to infty}. It remains to show that $\hat c$ satisfies \eqref{n to infty}. Observe that
\begin{align}%\label{tau_n estimate 2}
0&\ge \E\left[e^{-\delta \tau_n} w\left(\zeta^n X^{0,x,\hat c}_{\tau_n},me^{\beta \tau_n}\right)\right] \nonumber\\
&\ge \zeta^{(1-\gamma)n}\frac{x^{1-\gamma}}{1-\gamma} u_0(m)^{-\gamma} \E\left[e^{-(\delta+(\gamma-1)r) \tau_n}  \exp\left(\int_0^{\tau_n} (\gamma-1)u_0(me^{\beta s})ds\right) \right]\nonumber\\
& \ge \zeta^{(1-\gamma)n}\frac{x^{1-\gamma}}{1-\gamma} u_0(m)^{-\gamma} 
\int_0^\infty  \P(\tau_n\in dt)  \exp\left(\int_0^{t} (\gamma-1)u_0(me^{\beta s})ds\right)    dt,\label{esti}
\end{align}
where the third line above follows from $e^{-(\delta+(\gamma-1)r) \tau_n}\le 1$ as $\gamma >1$. 
Note that for each $n\in\N$, $\sum_{i=1}^n Z_i$ has a gamma distribution with the law $\P(\sum_{i=1}^n Z_i\le z) = \frac{1}{\Gamma(n)} \underline\Gamma(n, z)$, where $\Gamma(s):=\int_0^\infty t^{s-1} e^{-t} dt$ is the gamma function and $\underline\Gamma(s,z):=\int_0^z t^{s-1} e^{-t} dt$ is the lower incomplete gamma function.
We then observe from \eqref{tau's beta>0} that
\[
\P(\tau_n\le t) = \P\left( \frac{m}{\beta} (e^{\beta t}-1) \ge \sum_{i=1}^n Z_i \right) = \frac{1}{\Gamma(n)} \underline\Gamma\left(n, \frac{m}{\beta}(e^{\beta t}-1)\right) \quad\ \forall t\ge 0.
\]
It follows that 
\begin{equation}\label{density beta>0}
\P(\tau_n\in dt) = \frac{d}{dt} \P(\tau_n\le t) = \frac{1}{\Gamma(n)} \left(\frac{m}{\beta}\right)^n (e^{\beta t}-1)^{n-1} e^{-\frac{m}{\beta}(e^{\beta t}-1)} \beta e^{\beta t}\quad \forall t\ge 0. 
\end{equation}
Also, for any 
\begin{equation}\label{alpha range}
\alpha\in \left((1-1/\gamma)(1-\zeta^{1-\gamma}),1\right),
\end{equation}
\eqref{u_0<m} yields $\frac{m}{\gamma u_0(m)}> \frac{\alpha}{1-\zeta^{1-\gamma}}$ for $m$ large enough. Thus, there exists $t^*>0$ such that 
\begin{equation}\label{u_0<m'}
\frac{1-\zeta^{1-\gamma}}{\alpha\gamma}m e^{\beta t}> u_0(me^{\beta t}),\quad \hbox{for}\ t\ge t^*. 
\end{equation}
By setting $C(t^*):=\exp\big(\int_0^{t^*} (\gamma-1)u_0(me^{\beta s})ds\big)$, we obtain from  \eqref{esti} that
%By a calculation similar to \eqref{tau_n estimate 2}, the above inequality implies
\begin{align}%\label{tau_n estimate 2}
0&\ge \E\left[e^{-\delta \tau_n} w\left(\zeta^n X^{0,x,\hat c}_{\tau_n},me^{\beta \tau_n}\right)\right]\nonumber \\
&\ge \zeta^{(1-\gamma)n}\frac{x^{1-\gamma}}{1-\gamma} \frac{u_0(m)^{-\gamma}}{\Gamma(n)} \left(\frac{m}{\beta}\right)^n 
\bigg[C(t^*)\int_0^{t^*} (e^{\beta t}-1)^{n-1} e^{-\frac{m}{\beta}(e^{\beta t}-1)} \beta e^{\beta t} dt\nonumber\\ 
 &\hspace{2in}   +\int_{t^*}^\infty  (e^{\beta t}-1)^{n-1} e^{-\frac{m}{\beta}(e^{\beta t}-1)} \beta e^{\beta t}  e^{(1-\frac1\gamma)(1-\zeta^{1-\gamma})\frac{m}{\alpha\beta}(e^{\beta t}-1)} dt\bigg]\nonumber\\
&\ge   \zeta^{(1-\gamma)n}\frac{x^{1-\gamma}}{1-\gamma} \frac{u_0(m)^{-\gamma}}{\Gamma(n)} \left(\frac{m}{\beta}\right)^n 
 \bigg[C(t^*)\int_0^\infty y^{n-1} e^{-\frac{m}{\beta}y}  dy+ \int_0^\infty y^{n-1} e^{-\left[1-\frac{1}{\alpha}(1-\frac1\gamma)(1-\zeta^{1-\gamma})\right]\frac{m}{\beta}y}dy  \bigg]\nonumber\\
&=  \zeta^{(1-\gamma)n}\frac{x^{1-\gamma}}{1-\gamma}u_0(m)^{-\gamma} \left[C(t^*) + \left(1-\frac{1}{\alpha}\left(1-\frac1\gamma\right)\left(1-\zeta^{1-\gamma}\right)\right)^{-n}\right],\label{need gamma<2}
\end{align}
where the second line follows form \eqref{density beta>0} and \eqref{u_0<m'}, the fourth line is due to the change of variable $y=e^{\beta t}-1$, and the last equality holds when $1-\frac{1}{\alpha}\big(1-\frac1\gamma\big)\big(1-\zeta^{1-\gamma}\big)>0$, which is true under \eqref{alpha range}. Noting that \eqref{alpha range} implies $1-\frac1\alpha\big(1-\frac1\gamma\big)\big(1-\zeta^{1-\gamma}\big)> 1-(1-\zeta^{1-\gamma}) =\zeta^{1-\gamma}$, we conclude from \eqref{need gamma<2} that $\E\big[e^{-\delta \tau_n} w\big(\zeta^n X^{0,x,\hat c}_{\tau_n},me^{\beta \tau_n}\big)\big]\to 0$ as $n\to\infty$, i.e. $\hat c$ satisfies \eqref{n to infty}.
\end{proof}

%%%%%%%%%%%%%%%%%%%%%%%%%%%%%%%%%%%%%%%%%%%%%%%%%%%%%%%%%%%%%%%%%%%%%%%%%%%%%%%%%%%%%%

\subsection{Aging with Healthcare}\label{subsec:proof beta>0 g>0}
Recall the setup in Section~\ref{subsec:beta>0 g>0} where mortality increases naturally according to Gompertz' law ($\beta>0$), and at the same time healthcare is available (i.e. $g:\R_+\to\R_+$ is not constantly 0) to slow down the mortality growth. 
For the rest of this section, let Assumption \ref{ass:gfunct} hold, and denote by $I:\R_+\to\R_+$ the inverse function of $g'$, and note that $I$ is strictly decreasing.

If the value function \eqref{problem} is of the form $V(x,m) = \frac{x^{1-\gamma}}{1-\gamma} v(m)$, then \eqref{HJB} yields 
\[
v(m)^{1-\frac1\gamma}- c_0(m) v(m) + \frac{\beta m}{\gamma} v'(m) + \frac{1-\gamma}{\gamma}\sup_{h\ge 0}\left\{-\frac{v'(m)}{1-\gamma}\left(m g(h)+h\frac{(1-\gamma) v(m)}{v'(m)}\right)\right\} =0,
\]
where $ c_0(m)$ is given by \eqref{c_0}. By setting $v(m)=u(m)^{-\gamma}$ and assuming that $u'(m)\ge 0$, the above equation becomes
\begin{equation}\label{HJB_u_2}
u^2(m) - c_0(m)u(m)-\beta m u'(m)+
\begin{cases}
m u'(m) \sup\limits_{h\ge 0}\left\{g(h) - \frac{1-\gamma}{\gamma} \frac{u(m)}{mu'(m)} h\right\}=0, &\hbox{if}\ 0<\gamma<1,\\
m u'(m) \inf\limits_{h\ge 0}\left\{g(h) - \frac{1-\gamma}{\gamma} \frac{u(m)}{mu'(m)} h\right\}=0, &\hbox{if}\ \gamma>1.
\end{cases}
\end{equation}
Since $g$ is a nondecreasing function with $g(0)=0$, the infimum above equals $0$. That is, when $\gamma>1$, the above equation reduces to \eqref{HJB_u_1}, and the associated value function and optimal consumption strategy are as described in Proposition~\ref{prop:main 2}. 

Hence, we focus on the case $0 < \gamma < 1$ in the rest of this section. The equation \eqref{HJB_u_2} is now $\mathcal{L}u(m) = 0$ as in \eqref{HJB_u_2'}. 
In the following, we will employ Perron's method to construct solutions to \eqref{HJB_u_2'}, under the assumption that
\begin{equation}\label{bar c}
\bar c:= \frac{\delta}{\gamma} + \left(1-\frac1\gamma\right) r >0.
\end{equation} 

\begin{definition}\label{def:Pi}
Let $\Pi$ be the collection of $(p,q)$, where $p,q:\R_+\to\R$ are continuous and satisfy
\begin{itemize}
\item [(i)] $ c_0\le p\le q\le c_0 + \beta$ on $(0,\infty)$.
\item [(ii)] $p$ and $q$ are strictly increasing and concave.
\item [(iii)] $p$ (resp. $q$) is a viscosity subsolution (resp. supersolution) to \eqref{HJB_u_2'} on $(0,\infty)$.
\end{itemize}
For any $(p,q)\in\Pi$, let $\mathcal{S}(p,q)$ denote the collection of continuous $f:\R_+\to \R$ such that 
\begin{enumerate}
\item [1.] $p\le f\le q$ on $(0,\infty)$.
\item [2.] $f$ is strictly increasing and concave.
\item [3.] $f$ is a  viscosity supersolution to \eqref{HJB_u_2'} on $(0,\infty)$.
\end{enumerate}
\end{definition}

\begin{remark}\label{rem:not empty}
Under \eqref{bar c} and \eqref{g<beta}, $\Pi\neq\emptyset$. Indeed, $c_0+\beta$ is a supersolution to \eqref{HJB_u_1}, and thus a supersolution to \eqref{HJB_u_2'}. Specifically, for any $m>0$,
\begin{equation}\label{c0+beta supersolution}
\mathcal{L}(c_0+\beta)(m)\ge (c_0(m)+\beta)\beta-\beta m \left(\frac{1-\zeta^{1-\gamma}}{\gamma}\right) = \beta \bar c+\beta^2>0.
\end{equation}
 On the other hand, $c_0$ is a subsolution to \eqref{HJB_u_2'}: for any $m>0$  
\begin{align}
\mathcal{L}&c_0(m) = am\left(\sup_{h\ge 0}\left\{g(h) - \frac{1-\gamma}{\gamma} \frac{\bar c}{a m} h\right\}-\beta\right)\nonumber\\ 
&= am \bigg[g\left(I\left(\frac{1-\gamma}{\gamma} \left[1+\frac{\bar c}{am}\right]\right)\right)
%\nonumber\\	&\hspace{0.75in} 
- \frac{1-\gamma}{\gamma} \left[1+\frac{\bar c}{am}\right] I\left(\frac{1-\gamma}{\gamma} \left[1+\frac{\bar c}{am}\right]\right)-\beta\bigg] <0,\label{c0 subsolution}
\end{align}
where $a:= \frac{1-\zeta^{1-\gamma}}{\gamma}$ and the inequality follows from \eqref{g<beta}. Thus, $\Pi$ contains at least $(c_0,c_0 +\beta)$. Also note that for each $(p,q)\in\Pi$, $\mathcal{S}(p,q)\neq\emptyset$ because by construction $q\in \mathcal{S}(p,q)$.
\end{remark}

We first present a basic result for strictly increasing, concave functions $f$ which are bounded by $c_0$ and $c_0+\beta$. Note that the concavity of $f$ implies $f'(\infty) := \lim_{m\to\infty} f'(m-)$ is well-defined. %The next result shows that this value $f'(\infty)$ is invariant over the choice of $f$.

\begin{lemma}\label{lem:u(m)/m decrease}
Assume $0<\gamma<1$ and \eqref{bar c}. For any nonnegative, strictly increasing, and concave $f:\R_+\to\R$, we have $\frac{f(m)}{m f'(m-)}\ge1$ for all $m\in(0,\infty)$. If $f$ additionally satisfies $c_0\le f\le c_0+\beta$ on $(0,\infty)$, then $f'(\infty)  =\frac{1-\zeta^{1-\gamma}}{\gamma}$ and $\frac{f(m)}{mf'(m-)}\to 1$ as $m\to\infty$.
%\begin{itemize}
%\item [(i)] $f'(\infty)  =\frac{1-\zeta^{1-\gamma}}{\gamma}$;
%\item [(ii)] $\frac{f(m)}{m f'(m-)}>1$ for all $m\in(0,\infty)$, and $\frac{f(m)}{mf'(m-)}\to 1$ as $m\to\infty$.
%\end{itemize}
\end{lemma}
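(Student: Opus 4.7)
The plan is to derive the first inequality directly from the concavity and nonnegativity of $f$, and then to handle the asymptotic claim by sandwiching $f$ between $c_0$ and $c_0+\beta$ and exploiting the monotonicity of $f'$ that comes from concavity.

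For the inequality $f(m)/(mf'(m-))\ge 1$, I would invoke the standard tangent-line bound for concave functions with the left derivative acting as a supergradient: for every $m>0$,
\[
f(0)\ \le\ f(m)+f'(m-)(0-m)=f(m)-mf'(m-).
\]
Since $f\ge 0$, this gives $f(m)\ge mf'(m-)+f(0)\ge mf'(m-)$, which is the desired inequality (with the convention that the ratio is $+\infty$ if $f'(m-)=0$).

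For the second claim, I would first note that $c_0(m)/m\to (1-\zeta^{1-\gamma})/\gamma$ as $m\to\infty$ (this is immediate from the explicit form of $c_0$). Combined with $c_0\le f\le c_0+\beta$, this yields
\[
\lim_{m\to\infty}\frac{f(m)}{m}=\frac{1-\zeta^{1-\gamma}}{\gamma}.
\]
Next, because $f$ is concave, $m\mapsto f'(m-)$ is nonincreasing, so $f'(\infty)=\lim_{m\to\infty}f'(m-)$ exists in $[0,\infty)$. The standard identity that $f(m)/m$ and $f'(m-)$ share the same limit at infinity for a concave function (which follows by squeezing $f'(m-)\le (f(m)-f(m_0))/(m-m_0)\le f'(m_0+)$ and letting first $m\to\infty$ then $m_0\to\infty$) gives $f'(\infty)=(1-\zeta^{1-\gamma})/\gamma$. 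This limit is strictly positive under the running hypotheses $0<\gamma<1$ and $\zeta\in[0,1)$ (the case $\zeta=1$ would trivialize the model).

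Finally, putting these two limits together,
\[
\lim_{m\to\infty}\frac{f(m)}{mf'(m-)}=\frac{\lim_{m\to\infty}f(m)/m}{\lim_{m\to\infty}f'(m-)}=\frac{(1-\zeta^{1-\gamma})/\gamma}{(1-\zeta^{1-\gamma})/\gamma}=1,
\]
completing the proof. The only subtle point, and the one requiring care, is the passage from $f(m)/m\to L$ to $f'(m-)\to L$, which does require concavity (and is false in general); otherwise the argument is routine.
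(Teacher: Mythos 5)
Your proof is correct and follows essentially the same route as the paper's: the first inequality comes from the supergradient/tangent bound at $m$ together with $f(0)\ge 0$, and the asymptotics come from squeezing $f$ between the two parallel lines $c_0$ and $c_0+\beta$, which pins down both $f(m)/m$ and (via concavity) $f'(\infty)$ at the common value $\frac{1-\zeta^{1-\gamma}}{\gamma}>0$. The subtle step you flag — that $f(m)/m\to L$ forces $f'(m-)\to L$ only because $f$ is concave — is exactly the point the paper also relies on (it phrases it as: a limiting slope different from that of $c_0$ would push $f$ out of the band of constant width $\beta$), and your squeeze argument handles it correctly.
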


\begin{proof}
Since $f$ is strictly increasing with $f(0)\ge 0$, the concavity of $f$ implies $\frac{f(m)}{m}\ge f'(m-) >0$, and thus $\frac{f(m)}{m f'(m-)}\ge 1$ for all $m\in(0,\infty)$. Suppose $f$ is additional bounded by $c_0$ and $c_0+\beta$. Since $c_0$ is a linear function with slope $\frac{1-\zeta^{1-\gamma}}{\gamma}$, if $f'(\infty) \neq \frac{1-\zeta^{1-\gamma}}{\gamma}$, then $f(m)\notin [c_0(m), c_0(m)+\beta]$ for $m$ large enough, a contradiction. Moreover, we deduce from $c_0(m)\le f(m)\le c_0(m)+\beta$ and $f'(\infty) = \frac{1-\zeta^{1-\gamma}}{\gamma}>0$ that 
\[
1+ \frac{\delta+(\gamma-1)r}{(1-\zeta^{1-\gamma}) m}\le \frac{f(m)}{m f'(\infty)} \le 1+\frac{\delta+(\gamma-1)r+\beta\gamma}{(1-\zeta^{1-\gamma}) m},\quad m>0.
\] 
This implies $\frac{f(m)}{mf'(\infty)}\to 1$ as $m\to\infty$.
%If $f'(\infty) > \frac{1-\zeta^{1-\gamma}}{\gamma}$, then $f(m)\ge c_0(m)+\beta$ for $m$ large enough, which violates $f(m)\le q(m)\le c_0(m)+\beta$ for all $m\in\R_+$. Similarly, if $f'(\infty) < \frac{1-\zeta^{1-\gamma}}{\gamma}$, then $f(m)- c_0(m)\to -\infty$ as $m\to\infty$. This, together with the fact that $\frac{f(m)}{m}\to f'(\infty)$ as $m\to\infty$,  implies that there exists $m^*>0$ large enough such that for a.e. $m\in (m^*,\infty)$, $f$ is differentiable at $m$ and
%\begin{equation}\label{<0 for m large}
%\frac{f(m)}{m} [f(m) - c_0(m)]+ f'(m)\left(\sup\limits_{h\ge 0}\left\{g(h) - \frac{1-\gamma}{\gamma}  h\right\}-\beta\right)\le 0.
%\end{equation}
%Then, since $\frac{f(m)}{m f'(m)}> 1$, 
%\begin{align*}
%\mathcal{L}f(m) &= m\left[\frac{f(m)}{m}[f(m) - c_0(m)]+ f'(m)\left(\sup\limits_{h\ge 0}\left\{g(h) - \frac{1-\gamma}{\gamma} \frac{f(m)}{mf'(m)} h\right\}-\beta\right)\right]\\
%&< m\left[\frac{f(m)}{m}[f(m) - c_0(m)]+ f'(m)\left(\sup\limits_{h\ge 0}\left\{g(h) - \frac{1-\gamma}{\gamma}  h\right\}-\beta\right)\right]\le 0,
%\end{align*} 
%where the last inequality follows from \eqref{<0 for m large}. This shows that $f$ cannot be a viscosity supersolution to \eqref{HJB_u_2'} on $(0,\infty)$, a contradiction. Thus, we conclude that $f'(\infty)  =\frac{1-\zeta^{1-\gamma}}{\gamma}>0$. The positivity of $f'(\infty)$ and the fact that $\frac{f(m)}{m}\to f'(\infty)$ as $m\to\infty$ then imply $\frac{f(m)}{mf'(m-)}\to 1$ as $m\to\infty$.
\end{proof}

The next result shows that $c_0+\alpha$ is a supersolution to \eqref{HJB_u_2'} on $(0,\infty)$ for $\alpha$ large enough.

\begin{lemma}\label{lem:c0+alpha}
Assume $0<\gamma<1$, \eqref{bar c}, and \eqref{g<beta}. For any $\alpha\in [0,\beta]$, $c_0+\alpha$ is a supersolution to \eqref{HJB_u_2'} on $(0,\infty)$ if and only if $\alpha\in[\beta_g, \beta]$, where $\beta_g$ is defined in \eqref{betag}.
Specifically,
\begin{align*}
&\alpha\in[\beta_g, \beta]\implies \mathcal{L}({c_0+\alpha})(m)>0\ \hbox{for all}\ m>0;\\ 
&\alpha\in[0,\beta_g)\implies \mathcal{L}({c_0+\alpha})(m)\to-\infty\ \hbox{as}\ m\to\infty. 
\end{align*}
%for any $\alpha\in[\beta_g, \beta]$, $1$; for any $\alpha\in [0,\beta_g)$, $\mathcal{L}({c_0+\alpha})(m)\to-\infty$ as $m\uparrow\infty$.
\end{lemma}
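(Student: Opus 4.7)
The approach is direct substitution: plug $u(m) = c_0(m) + \alpha$ into the operator $\cL$ of \eqref{HJB_u_2'} and reduce to a one-parameter analysis. Since $u'(m) = c_0'(m) \equiv a$ with $a := \frac{1-\zeta^{1-\gamma}}{\gamma}$, a direct computation gives
\[
\cL(c_0+\alpha)(m) = (c_0(m)+\alpha)\alpha + am\bigl(\tilde g(\lambda_m)-\beta\bigr),
\]
where $\mu := \frac{1-\gamma}{\gamma}$, $\tilde g(\lambda) := \sup_{h\ge 0}\{g(h)-\lambda h\}$ is the Legendre transform of $g$, and $\lambda_m := \mu(c_0(m)+\alpha)/(am) = \mu(1+y_m)$ with $y_m := (\bar c+\alpha)/(am)$.

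Next, I would use the Inada condition \eqref{Inada} to identify the unique maximizer as $h^*(y) := I(\mu(1+y))$, and factor $am$ out to obtain $\cL(c_0+\alpha)(m) = am\cdot F(y_m)$ with
\[
F(y) := (1+y)\bigl(\alpha-\mu h^*(y)\bigr) + g(h^*(y)) - \beta.
\]
Two identities carry all the weight of the argument. First, using $\beta-\beta_g = g(\bar h) - \mu \bar h$ with $\bar h := I(\mu)$, a one-line calculation gives $F(0) = \alpha - \beta_g$. Second, by the envelope identity $g'(h^*(y)) = \mu(1+y)$, the $y$-dependence of $h^*$ cancels inside $F'$ and leaves $F'(y) = \alpha - \mu h^*(y)$.

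The two conclusions of the lemma then follow from sign analysis of $F$. For $\alpha \in [\beta_g,\beta]$, strict monotonicity of $I$ yields $h^*(y) < \bar h$ for $y > 0$, so $\alpha - \mu h^*(y) > \alpha - \mu \bar h \ge \beta_g - \mu \bar h = \beta - g(\bar h) > 0$ by \eqref{g<beta}; hence $F'(y) > 0$ on $(0,\infty)$, and combined with $F(0) = \alpha - \beta_g \ge 0$ this gives $F(y_m) > 0$ for every $m > 0$, i.e.\ $\cL(c_0+\alpha)(m) > 0$. For $\alpha \in [0,\beta_g)$, continuity of $F$ at $0$ gives $F(y_m) \to \alpha-\beta_g < 0$ as $y_m \to 0^+$ (equivalently $m\to\infty$), and multiplying by $am\to\infty$ yields $\cL(c_0+\alpha)(m) \to -\infty$.

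The only delicate point is the borderline case $\alpha = \beta_g$: here $F(0) = 0$, so strict positivity of $\cL(c_0+\beta_g)$ on $(0,\infty)$ hinges on $F$ being \emph{strictly} increasing for $y > 0$. This is precisely what the well-posedness condition \eqref{g<beta} ensures, via the strict inequality $\beta - g(\bar h) > 0$, which propagates into $F'(y) > 0$ for every $y \ge 0$ when $\alpha = \beta_g$.
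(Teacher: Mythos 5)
Your proof is correct, and the core positivity argument is genuinely different from the paper's. Both proofs begin the same way, by substituting $u=c_0+\alpha$ and arriving at $\mathcal{L}(c_0+\alpha)(m)=(c_0(m)+\alpha)\alpha+am(\ell(m)-\beta)$ (the paper's $\theta(m)$ in \eqref{theta a=max}), and both dispose of $\alpha\in[0,\beta_g)$ by the same limit $\theta(m)\to-\infty$. Where you diverge is the case $\alpha\in[\beta_g,\beta]$: the paper works with $\theta$ itself, proves $\theta''>0$, locates a global minimizer $m^*$ via the boundary behavior $\theta'(0+)=\alpha-\beta<0$ and $\theta(\infty)=+\infty$, and evaluates $\theta(m^*)=b\,[\beta-g(I(\cdot))]>0$ using the first-order condition; it must then treat the endpoints separately ($\alpha=\beta$ via Remark~\ref{rem:not empty}, and $\alpha=\beta_g$ via stability of supersolutions under pointwise infima, i.e.\ \cite[Lemma 4.2]{CrandallIshiiLions1992}, followed by a monotonicity argument). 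You instead rescale by $am$ and reparametrize by $y=b/(am)$, so that $\mathcal{L}(c_0+\alpha)(m)=am\,F(y_m)$ with $F(0)=\alpha-\beta_g$ and, by the envelope identity $\tilde g'(\lambda)=-I(\lambda)$, $F'(y)=\alpha-\mu h^*(y)\ge \beta_g-\mu\bar h=\beta-g(\bar h)>0$ under \eqref{g<beta}. This single monotonicity computation handles all $\alpha\in[\beta_g,\beta]$ uniformly, including the borderline $\alpha=\beta_g$, and avoids both the second-derivative/global-minimum analysis and the viscosity stability lemma; what it costs is nothing beyond the smoothness of $I$ already guaranteed by Assumption~\ref{ass:gfunct}. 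The only caveats, shared with the paper's own proof, are the implicit assumptions $a=\frac{1-\zeta^{1-\gamma}}{\gamma}>0$ (i.e.\ $\zeta<1$) so that the division by $am$ is legitimate, and the smooth/viscosity equivalence for the $C^1$ function $c_0+\alpha$, which is standard.
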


\begin{proof}
For any $a,b>0$, consider the function
\begin{equation}\label{theta}
\theta(m) := \mathcal{L}(am+b) = \left(am+ b\right)\left[\left(a-\frac{1-\zeta^{1-\gamma}}{\gamma}\right) m + (b-\bar c)\right] + am \left(\ell(m)-\beta\right),
\end{equation}
where 
\begin{align}\label{ell}
\ell(m) &:= \sup_{h\ge 0}\left\{g(h)-\frac{1-\gamma}{\gamma}\left[1+\frac{b}{am} \right]h\right\}\nonumber\\
&= g\left(I\left(\frac{1-\gamma}{\gamma}\left[1+\frac{b}{am} \right]\right)\right)-\frac{1-\gamma}{\gamma}\left[1+\frac{b}{am} \right] I\left(\frac{1-\gamma}{\gamma}\left[1+\frac{b}{am} \right]\right).
\end{align} 
By direct calculation, 
\begin{align}
\ell'(m) &= \frac{1-\gamma}{\gamma}\frac{b}{a m^2} I\left(\frac{1-\gamma}{\gamma}\left[1+\frac{b}{am} \right]\right),\label{ell'}\\
\theta'(m) &= 2a\left(a-\frac{1-\zeta^{1-\gamma}}{\gamma}\right)m + a(b-\bar c) + b\left(a-\frac{1-\zeta^{1-\gamma}}{\gamma}\right)\nonumber\\
&\ \ \ \ \ + a \left(\ell(m)-\beta + \frac{1-\gamma}{\gamma}\frac{b}{am} I\left(\frac{1-\gamma}{\gamma}\left[1+\frac{b}{a m}\right] \right)\right)\label{theta'},\\
\theta''(m) &=  2a\left(a-\frac{1-\zeta^{1-\gamma}}{\gamma}\right)  - \left(\frac{1-\gamma}{\gamma}\right)^2\frac{b^2}{a m^3} I'\left(\frac{1-\gamma}{\gamma}\left[1+\frac{b}{a m}\right] \right)>0, \label{theta''}
\end{align}
where the positivity follows from $I'(y) = \frac{d}{dy}(g')^{-1}(y) = \frac{1}{g''((g')^{-1}(y))}= \frac{1}{g''(I(y))}$ and $g''<0$.

For any $\alpha\in[0,\beta]$, $c_0(m)+\alpha = am+b$ with $a=\frac{1-\zeta^{1-\gamma}}{\gamma}$ and $b=\bar c+\alpha$. Then \eqref{theta} reduces to 
\begin{equation}\label{theta a=max}
\theta(m) =  am [\alpha-(\beta-\ell(m))] + \alpha b. 
\end{equation}
Observe from \eqref{ell} that $\beta-\ell(m)\to\beta_g$ as $m\to\infty$. It follows that 
\begin{equation}\label{theta at infinity}
\lim_{m\to\infty} \mathcal{L}(c_0+\alpha)(m) = \lim_{m\to\infty}\theta(m)  = \begin{cases}
-\infty,\quad &\hbox{if}\ \alpha\in[0,\beta_g);\\
+\infty,\quad &\hbox{if}\ \alpha\in(\beta_g,\beta].
\end{cases}
\end{equation}
This already shows that  if $\alpha\in[0,\beta_g)$, $c_0+\alpha$ cannot be a supersolution to \eqref{HJB_u_2'} on $(0,\infty)$.

It remains to show that if $\alpha\in[\beta_g,\beta]$, $c_0+\alpha$ satisfies $\mathcal{L}({c_0+\alpha})(m)>0$ for all $m>0$. This is true for $\alpha = \beta$, as explained in Remark~\ref{rem:not empty}. For any $\alpha\in(\beta_g,\beta)$, using $a=\frac{1-\zeta^{1-\gamma}}{\gamma}$ and $b=\bar c+\alpha$ under current setting and \eqref{ell}, \eqref{theta'} becomes
\begin{align}\label{theta' a=max}
\theta'(m) &=  a\left[\alpha -\beta+g\left(I\left(\frac{1-\gamma}{\gamma}\left[1+\frac{b}{a m} \right]\right)\right)-\frac{1-\gamma}{\gamma} I\left(\frac{1-\gamma}{\gamma}\left[1+\frac{b}{a m} \right]\right)\right],
\end{align}
which implies $\theta'(m)\to \alpha -\beta <0$ as $m\downarrow 0$. This, together with $\lim_{m\to\infty}\theta(m)=\infty$ in \eqref{theta at infinity} and $\theta''(\cdot)>0$ in \eqref{theta''}, shows that $\theta$ must attain a global minimum at some $m^*\in(0,\infty)$. Using $\theta'(m^*)=0$, we obtain from \eqref{theta' a=max} that
\begin{equation}\label{at m^*}
\alpha - \frac{1-\gamma}{\gamma}  I\left(\frac{1-\gamma}{\gamma}\left[1+\frac{b}{a m^*} \right]\right) = \beta-g\left(I\left(\frac{1-\gamma}{\gamma}\left[1+\frac{b}{am^*} \right]\right)\right).
\end{equation}
The global minimum can then be computed as 
\begin{align*}
\theta(m^*) &= am^*[\alpha-\beta + \ell(m)]+ \alpha b\\
&= am^*\bigg[\alpha - \beta+g\left(I\left(\frac{1-\gamma}{\gamma}\left[1+\frac{b}{am^*} \right]\right)\right)\\
&\hspace{0.7in}-\frac{1-\gamma}{\gamma} \left[1+\frac{b}{am^*} \right]I\left(\frac{1-\gamma}{\gamma}\left[1+\frac{b}{am^*} \right]\right)\bigg] +\alpha b\\
&= b\left[\alpha-\frac{1-\gamma}{\gamma}  I\left(\frac{1-\gamma}{\gamma}\left[1+\frac{b}{am^*} \right]\right) \right]\\
&= b\left[\beta-g\left(I\left(\frac{1-\gamma}{\gamma}\left[1+\frac{b}{am^*} \right]\right)\right)\right] >0,
\end{align*}
 where the second equality comes from \eqref{ell}, the third and fourth equalities follow from \eqref{at m^*}, and the final inequality is due to \eqref{g<beta}. We thus conclude that for any $\alpha\in(\beta_g,\beta)$, $\mathcal{L}(c_0+\alpha)(m) = \theta(m) \ge \theta(m^*) >0$ for all $m\in (0,\infty)$. Finally, for $\alpha=\beta_g$, since $c_0+\beta_g$ is the pointwise infimum of the supersolutions $c_0+\alpha$, $\alpha\in(\beta_g,\beta]$, it must also be a supersolution, thanks to \cite[Lemma 4.2]{CrandallIshiiLions1992}. Observe from \eqref{theta' a=max} that $\lim_{m\uparrow\infty}\theta'(m) = a[\beta_g-\beta_g] =0$. This, together with $\theta''>0$ in \eqref{theta''} and $c_0+\beta_g$ being a supersolution, shows that $\theta(m)$ must be strictly decreasing on $(0,\infty)$ with $\lim_{m\uparrow\infty}\theta(m)\ge 0$. This already implies $\mathcal{L}(c_0+\beta_g)(m)=\theta(m)>0$ for all $m>0$.   
\end{proof}

%\begin{remark}
%For each $f\in\mathcal{S}$, since $g$ is nondecreasing and concave with $g'(\infty)=0$, the function
%\begin{equation}\label{sup conti.}
%\theta(m) := \sup\limits_{h\ge 0}\left\{g(h) - \frac{1-\gamma}{\gamma} \frac{f(m)}{mf'(m)} h\right\}\quad \hbox{is continuous on } (0,\infty).
%\end{equation}
%With Lemma~\ref{lem:u(m)/m decrease} (i), we have
%\[
%\theta(m) \ge \sup_{h\ge 0}\left\{g(h) - \frac{1-\gamma}{\gamma} h\right\}\ \forall m>0,\quad \hbox{and}\quad \theta(m) \to \sup\limits_{h\ge 0}\left\{g(h) - \frac{1-\gamma}{\gamma}  h\right\}\ \hbox{as}\ m\to\infty.
%\]
%\end{remark}

\begin{lemma}\label{lem:p<c0+betag}
Assume $0<\gamma<1$, \eqref{bar c}, and \eqref{g<beta}. For any $(p,q)\in\Pi$, $p < c_0+\beta_g$ on $\R_+$, with $\beta_g$ defined in \eqref{betag}.
\end{lemma}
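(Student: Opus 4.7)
The strategy is to first establish the weak inequality $p\le\phi:=c_0+\beta_g$ on $\R_+$, and then upgrade to strict inequality by a touching argument, using the fact (from \lemref{lem:c0+alpha}) that $\phi$ is a \emph{strict} classical supersolution: $\mathcal L\phi(m)>0$ for every $m>0$.

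For the weak inequality, the key reduction is monotonicity of $p-c_0$. Since $p$ is concave and $c_0$ linear, $p-c_0$ is concave on $[0,\infty)$ and bounded in $[0,\beta]$ on $(0,\infty)$. By \lemref{lem:u(m)/m decrease}, $p'(\infty)=\tfrac{1-\zeta^{1-\gamma}}{\gamma}=c_0'$, so concavity of $p$ (i.e.\ $p'$ non-increasing) forces $p'(m)\ge c_0'$ wherever $p'$ exists. Hence $p-c_0$ is non-decreasing on $[0,\infty)$ and admits a finite limit $L:=\lim_{m\to\infty}(p-c_0)(m)\in[0,\beta]$, with $(p-c_0)(m)\le L$ for every $m$. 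It therefore suffices to prove $L\le\beta_g$.

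To bound $L$, I pass to the limit at infinity in the subsolution inequality for $p$. Since $p$ is concave it is differentiable off a countable set, and at each such differentiability point the viscosity inequality $\mathcal L p(m)\le 0$ holds classically; dividing by $p(m)>0$ rewrites it as $p(m)-c_0(m)\le \frac{m p'(m)}{p(m)}\bigl[\beta-\tilde g\bigl(\frac{1-\gamma}{\gamma}\cdot\frac{p(m)}{m p'(m)}\bigr)\bigr]$, where $\tilde g(y):=\sup_{h\ge 0}\{g(h)-hy\}$. Sending $m\to\infty$ along differentiability points, \lemref{lem:u(m)/m decrease} yields $m p'(m)/p(m)\to 1$; by continuity of $\tilde g$ the right-hand side converges to $\beta-\tilde g\bigl(\tfrac{1-\gamma}{\gamma}\bigr)=\beta_g$ (compare \eqref{betag}), while the left-hand side converges to $L$. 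Therefore $L\le\beta_g$, and $p\le\phi$ on $\R_+$.

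For the strict inequality, suppose by contradiction that $p(m_0)=\phi(m_0)$ at some finite $m_0\ge 0$. Then $(p-c_0)(m_0)=\beta_g$; combined with $p-c_0\le L\le\beta_g$ and the monotonicity of $p-c_0$ this forces $p\equiv\phi$ on $[m_0,\infty)$. Pick any $m_1>\max\{m_0,0\}$: on a neighborhood of $m_1$ (contained in $(0,\infty)$) we have $p\equiv\phi$, so $p-\phi$ attains a local maximum at $m_1$. The viscosity subsolution property with the smooth test function $\phi$ then forces $\mathcal L\phi(m_1)\le 0$, directly contradicting $\mathcal L\phi(m_1)>0$ from \lemref{lem:c0+alpha}. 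Hence $p<\phi$ on $\R_+$.

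The main obstacle is the asymptotic step: passing to the limit at infinity in a viscosity subsolution inequality. Concavity of $p$ resolves the viscosity issue (reducing the inequality to a pointwise classical one a.e.), while \lemref{lem:u(m)/m decrease} supplies the crucial asymptotic $mp'/p\to 1$ that pins down the right-hand side; the remaining steps are routine consequences of monotonicity and the strict supersolution property of $c_0+\beta_g$.
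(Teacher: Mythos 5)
Your proof is correct, but it reaches the key weak bound $p\le c_0+\beta_g$ by a genuinely different route from the paper. The paper argues by contradiction and splits into two cases: either $p$ coincides with $c_0+\beta_g$ from some point on (ruled out exactly as in your final touching step, via $\mathcal{L}(c_0+\beta_g)>0$ from Lemma~\ref{lem:c0+alpha}), or $p$ eventually exceeds $c_0+\beta_g$ strictly; in the latter case the paper introduces the family of lines $c_0+\alpha$ with $\alpha$ between $\beta_g$ and $\alpha_0:=\lim_{m\to\infty}(p-c_0)(m)$, proves a lower bound on $\theta^\alpha(m)/m$ that is uniform in $\alpha$, and then selects a line in this family touching $p$ at a large point $m^*$ with $p'(m^*)$ close to $c_0'$, so that $\mathcal{L}p(m^*)>0$ contradicts the subsolution property. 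You instead exploit the fact that a concave viscosity subsolution satisfies $\mathcal{L}p(m)\le 0$ classically at every differentiability point (the tangent line is an admissible test function touching from above), rewrite this as $(p-c_0)(m)\le\frac{mp'(m)}{p(m)}\bigl[\beta-\tilde g(y_m)\bigr]$, and let $m\to\infty$ using $mp'(m)/p(m)\to 1$ from Lemma~\ref{lem:u(m)/m decrease}; this pins down $\lim_{m\to\infty}(p-c_0)(m)\le\beta_g$ directly and, combined with the monotonicity of $p-c_0$ (which is also implicit in the paper's case split, since $p'\ge p'(\infty)=c_0'$ by concavity), yields the weak bound with no uniform-in-$\alpha$ estimate and no careful choice of touching point. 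Both arguments ultimately rest on the same two ingredients --- the asymptotics of Lemma~\ref{lem:u(m)/m decrease} and the strict supersolution property of $c_0+\beta_g$ from Lemma~\ref{lem:c0+alpha} --- but your limit-at-infinity argument is shorter and avoids the $\eps$--$\delta$ bookkeeping, at the modest cost of invoking a.e.\ differentiability of concave functions to pass from the viscosity inequality to a pointwise one.
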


\begin{proof}
By contradiction, suppose ``$p < c_0+\beta_g$ on $\R_+$'' does not hold. If there exists $m_0>0$ such that $p(m)= c_0(m)+\beta_g$ for all $m\ge m_0$, then the subsolution property of $p$ is violated, thanks to Lemma~\ref{lem:c0+alpha}. Thus, it remains to deal with the second case: there exists $m_0>0$ such that $p(m)> c_0(m)+\beta_g$ for $m>m_0$. 

Since $p'(\infty)=\frac{1-\zeta^{1-\gamma}}{\gamma}$ (by Lemma~\ref{lem:u(m)/m decrease} (i)), there must exist $\alpha_0\in (\beta_g,\beta]$ such that $p\le c_0+\alpha_0$  and $(c_0(m)+\alpha)-p_0(m)\downarrow 0$ as $m\to\infty$. Consider the collection of functions $\{c_0+\alpha:\alpha\in (\frac{\alpha_0+\beta_g}{2},\alpha_0)\}$. For each $\alpha\in (\frac{\alpha_0+\beta_g}{2},\alpha_0)$, we let $\theta^\alpha(m):= \mathcal{L}(c_0+\alpha)(m)$, and recall the formula of $\theta^\alpha$ in \eqref{theta a=max}. It shows that 
\[
\frac{\theta^\alpha(m)}{m} = a[\alpha-(\beta-\ell(m))]+\frac{\alpha (\bar c+\alpha)}{m} \to a[\alpha-\beta_g]\quad \hbox{as}\ m\to\infty,
\]
where $a:=\frac{1-\zeta^{1-\gamma}}{\gamma}$. Moreover, in view of \eqref{ell} with $b$ replaced by $\bar c+\alpha$, the above convergence is uniform in $\alpha\in(\frac{\alpha_0+\beta_g}{2},\alpha_0)$. That is, for any $\delta>0$, there exists $M(\delta)>0$ such that for all $m\ge M(\eps)$ and $\alpha\in(\frac{\alpha_0+\beta_g}{2},\alpha_0)$, $|\frac{\theta^\alpha(m)}{m}- a[\alpha-\beta_g]|<\delta$. Taking $\delta:= \frac{a(\alpha_0-\beta_g)}{4}$, we get for any $m>M(\delta)$ and $\alpha\in(\frac{\alpha_0+\beta_g}{2},\alpha_0)$,
\begin{equation}\label{uniform lower bdd}
\frac{\theta^\alpha(m)}{m} > a[\alpha-\beta_g] - \delta > a\left[\frac{\alpha_0+\beta_g}{2}-\beta_g\right] -\delta = a\left[\frac{\alpha_0-\beta_g}{4}\right].
\end{equation}
Fix $\eps\in(0,\frac{a(\alpha_0-\beta_g)}{4\beta})$. We can take $\alpha\in(\frac{\alpha_0+\beta_g}{2},\alpha_0)$ large enough such that $c_0+\alpha$ intersects $p$ at $m^*> M(\delta)$ and $a<p'(m^*)< a+\eps$. It follows that 
\begin{align*}
\mathcal{L}(p)(m^*) &= p(m^*)(p(m^*)-c_0(m^*)) + m^* p'(m^*) \left(\sup_{h\ge 0}\left\{g(h)-\frac{1-\gamma}{\gamma}\frac{p(m^*)}{m^*p'(m^*)}h\right\}-\beta\right)\\
&= \alpha (c_0(m^*)+\alpha) + \sup_{h\ge 0}\left\{g(h)m^*p'(m^*)-\frac{1-\gamma}{\gamma}p(m^*)h\right\}-\beta m^*p'(m^*)\\
&\ge \alpha (c_0(m^*)+\alpha) + \sup_{h\ge 0}\left\{g(h)m^* a -\frac{1-\gamma}{\gamma}(c_0(m^*)+\alpha)h\right\}-\beta m^*(a+\eps)\\
&= \theta^\alpha(m^*) - \beta m^*\eps = m^*\left(\frac{\theta^\alpha(m^*)}{m^*}-\beta\eps\right) >0,
\end{align*}
where the third equality follows from $\theta^\alpha(m^*)= \mathcal{L}(c_0+\alpha)(m^*)$, and the last inequality is due to \eqref{uniform lower bdd} and the choice of $\eps$. This implies that $p$ cannot be a subsolution to \eqref{HJB_u_2'} on $(0,\infty)$, a contradiction. 
\end{proof}

Following Perron's method, we introduce, for each $(p,q)\in\Pi$, the function
\begin{equation}
u^*_{p,q}(m) := \inf_{f\in\mathcal{S}(p,q)}f(m),\quad m\ge 0.
\end{equation}

\begin{proposition}[Supersolution Property]\label{prop:supersolution}
Assume $0<\gamma<1$ and \eqref{bar c}. For any $(p,q)\in\Pi$, $u^*_{p,q}\in\mathcal{S}(p,q)$.
\end{proposition}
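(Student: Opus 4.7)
The strategy is to verify sequentially the three properties defining $\mathcal{S}(p,q)$ for $u^*_{p,q}$: the sandwich bound $p \le u^*_{p,q} \le q$, continuity together with strict monotonicity and concavity, and finally the viscosity supersolution property. The first two are largely structural, while the last is the heart of Perron's construction.

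By Remark~\ref{rem:not empty}, $q \in \mathcal{S}(p,q)$, so the infimum defining $u^*_{p,q}$ is over a non-empty collection, and the bounds $p \le u^*_{p,q} \le q$ on $(0,\infty)$ follow immediately from the same bound for each $f \in \mathcal{S}(p,q)$. Since a pointwise infimum of concave functions is concave, $u^*_{p,q}$ is concave on $\R_+$, and hence continuous on $(0,\infty)$; continuity at $0$ follows from the sandwich between the continuous functions $p$ and $q$. The infimum of strictly increasing functions is at least non-decreasing; combined with concavity and the lower bound $u^*_{p,q} \ge p \ge c_0 \to \infty$, the elementary fact that a non-decreasing concave function which is not strictly increasing must be eventually constant forces strict monotonicity.

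The core step is the viscosity supersolution property. Fix $m_0 \in (0,\infty)$ and a $C^1$ test function $\varphi$ such that $u^*_{p,q} - \varphi$ attains a strict local minimum at $m_0$ with $\varphi(m_0) = u^*_{p,q}(m_0)$. By definition of the infimum, pick $f_n \in \mathcal{S}(p,q)$ with $f_n(m_0) \downarrow u^*_{p,q}(m_0)$. On a closed neighborhood $[m_0 - \delta, m_0 + \delta]$ on which the strict minimum condition holds, each continuous $f_n - \varphi$ attains its minimum at some $m_n$. The uniform bounds $p \le f_n \le q$ together with concavity yield local equicontinuity of $\{f_n\}$, and strictness of the local minimum at $m_0$ rules out accumulation of $m_n$ at the boundary; thus, along a subsequence, $m_n \to m_0$ and $f_n(m_n) \to u^*_{p,q}(m_0)$. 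The supersolution property of $f_n$ applied to the shifted test function $\varphi + (f_n(m_n) - \varphi(m_n))$, which touches $f_n$ from below at $m_n$, yields $\mathcal{L}$ evaluated with $u$-value $f_n(m_n)$ and derivative $\varphi'(m_n)$ being non-negative. Passing to the limit, using continuity of $\varphi, \varphi'$ and of the Hamiltonian in its arguments (reducing to continuity of $I = (g')^{-1}$, guaranteed by Assumption~\ref{ass:gfunct}), delivers the required inequality $\mathcal{L}\varphi(m_0) \ge 0$ with $u$-value $u^*_{p,q}(m_0)$.

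The main obstacle is that \eqref{HJB_u_2'} is a first-order equation, so the usual regularizing effect of a second-order term -- which underpins many textbook proofs of the closure of viscosity super/subsolutions under infima -- is unavailable; as the introduction emphasizes, this absence is precisely what complicates the parallel subsolution analysis later on. Here it is handled by leveraging the built-in regularity imposed by $\mathcal{S}(p,q)$, namely concavity plus the $p,q$ sandwich, which jointly provide the compactness and equicontinuity needed to pass to the limit. A secondary subtlety is confirming $m_n \to m_0$ (rather than settling at a different local minimum of $f_n - \varphi$), which is arranged by the strictness of the local minimum together with the pointwise convergence $f_n(m_0) \to u^*_{p,q}(m_0)$.
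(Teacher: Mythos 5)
Your proof is correct and follows essentially the same route as the paper's: concavity, the bounds, and continuity come for free from the pointwise infimum, the supersolution property is obtained by stability of supersolutions under infima, and strict monotonicity is proved by showing that a non-strictly-increasing concave candidate would be eventually constant. The only differences are cosmetic: the paper simply cites \cite[Lemma 4.2]{CrandallIshiiLions1992} for the infimum-stability step that you reprove directly (your shifted-test-function argument is precisely the proof of that lemma), and for strict monotonicity the paper derives the contradiction from the supersolution inequality $\mathcal{L}u^*_{p,q}(m)=\kappa(\kappa-c_0(m))<0$ for large $m$ rather than from the bound $u^*_{p,q}\ge c_0$ directly; both hinge on $c_0(m)\to\infty$.
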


\begin{proof}
As a pointwise infimum of concave nondecreasing functions bounded by $c_0$ and $c_0+\beta$, $u^*_{p,q}$ is by definition concave, nondecreasing, and bounded by $c_0$ and $c_0+\beta$. The concavity of $u^*_{p,q}$ yields the desired continuity. Then, by \cite[Lemma 4.2]{CrandallIshiiLions1992}, $u^*_{p,q}$, being continuous and a pointwise infimum of viscosity supersolution, is again a viscosity supersolution. It remains to show that $u^*_{p,q}$ is strictly increasing. Suppose to the contrary that $u^*_{p,q}\equiv \kappa>0$ in a neighborhood of some $m^*\in (0,\infty)$. The concavity of $u^*_{p,q}$ then implies that $u^*_{p,q}\equiv\kappa$ on $[m^*,\infty)$. It follows that $\mathcal{L}u^*_{p,q}(m) = \kappa(\kappa-c_0(m))<0$ as $m$ large enough. This contradicts the supersolution property of $u^*_{p,q}$.  
\end{proof}

\begin{proposition}[Subsolution Property]\label{prop:subsolution}
Assume $0<\gamma<1$ and \eqref{bar c}. Fix $(p,q)\in\Pi$. Suppose $u^*_{p,q}$ is strictly concave at $m_0\in (0,\infty)$ in the following sense: 
\begin{equation}\label{strict concave at a point}
\begin{split}
&\hbox{for any $m_1,m_2\in (0,\infty)$ and $\lambda\in(0,1)$ such that $m_0 = \lambda m_1 + (1-\lambda) m_2$},\\ 
&\hspace{1.5in}\hbox{$u^*_{p,q}(m_0) > \lambda u^*_{p,q}(m_1) + (1-\lambda) u^*_{p,q}(m_2)$.}
\end{split}
\end{equation}
Then, $u^*_{p,q}$ is a viscosity subsolution to \eqref{HJB_u_2'} at $m_0$.
\end{proposition}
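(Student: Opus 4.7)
The plan is to argue by contradiction. Assume that $u^*_{p,q}$ fails to be a viscosity subsolution at $m_0$: there exists a $C^1$ test function $\varphi$ with $\varphi(m_0)=u^*_{p,q}(m_0)$, $\varphi\ge u^*_{p,q}$ in a neighborhood of $m_0$, and $\mathcal L\varphi(m_0)>0$. Since $u^*_{p,q}$ is concave, $s:=\varphi'(m_0)$ lies in the superdifferential $\partial^+ u^*_{p,q}(m_0)$, and $s>0$ because $u^*_{p,q}$ is strictly increasing by \propref{prop:supersolution}. The affine supporting line $\ell(m):=u^*_{p,q}(m_0)+s(m-m_0)$ then satisfies $\ell\ge u^*_{p,q}$ on $(0,\infty)$, and since $\mathcal L$ depends only on the point, the value and the first derivative, $\mathcal L\ell(m_0)=\mathcal L\varphi(m_0)>0$.

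Strict concavity at $m_0$ forces $\ell(m)>u^*_{p,q}(m)$ for every $m\ne m_0$, since otherwise $\ell$ would coincide with $u^*_{p,q}$ along a chord through $m_0$, contradicting \eqref{strict concave at a point}. Set $\tilde\varphi:=\ell-\eta$ with $\eta>0$. By continuity of the map $(m,y,z)\mapsto y^2-c_0(m)y+mz\bigl(\sup_{h\ge 0}\{g(h)-\frac{1-\gamma}{\gamma}\tfrac{y}{mz}h\}-\beta\bigr)$ at $(m_0,u^*_{p,q}(m_0),s)$, valid because $s>0$, there exist $r>0$ and $\eta_0>0$ such that $\mathcal L\tilde\varphi(m)>0$ classically on $(m_0-r,m_0+r)$ for every $\eta\in(0,\eta_0)$. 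Setting $\delta:=\min_{|m-m_0|=r}(\ell-u^*_{p,q})>0$ and shrinking further to $\eta<\delta$, one gets $\tilde\varphi>u^*_{p,q}$ at the endpoints $m_0\pm r$, while $\tilde\varphi(m_0)=u^*_{p,q}(m_0)-\eta<u^*_{p,q}(m_0)$. Define $\tilde u$ to equal $u^*_{p,q}\wedge\tilde\varphi$ on $[m_0-r,m_0+r]$ and $u^*_{p,q}$ elsewhere; the endpoint condition makes the two prescriptions match. Let $(a,b)\subset(m_0-r,m_0+r)$ be the maximal open interval on which $\tilde\varphi<u^*_{p,q}$; then $\tilde u=\tilde\varphi$ (linear of slope $s$) on $(a,b)$ and $\tilde u=u^*_{p,q}$ outside.

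One then verifies that $\tilde u\in\mathcal S(p,q)$. Concavity is preserved globally: at $a$, $u^*_{p,q}$ catches up to the linear $\tilde\varphi$ from below, so $u^{*'}_{p,q}(a-)\ge s$; at $b$, $u^*_{p,q}$ falls below $\tilde\varphi$ from above, so $u^{*'}_{p,q}(b+)\le s$; hence the one-sided derivatives of $\tilde u$ are non-increasing across every point. Strict monotonicity follows from $s>0$ and the strict monotonicity of $u^*_{p,q}$. The upper bound $\tilde u\le u^*_{p,q}\le q$ is immediate. For the lower bound $\tilde u\ge p$ on $(a,b)$, we rely on $\ell(m_0)-p(m_0)=u^*_{p,q}(m_0)-p(m_0)>0$: in the degenerate case of equality, a direct touching argument comparing the viscosity subsolution inequality for $p$ and the supersolution inequality for $u^*_{p,q}$ at the common supporting line $\ell$ contradicts $\mathcal L\ell(m_0)>0$; otherwise continuity and further shrinking of $r$ and $\eta$ yield $\tilde\varphi\ge p$ on $(a,b)$. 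Finally, the viscosity supersolution property of $\tilde u$ follows from the classical fact that the pointwise minimum of two viscosity supersolutions is again a viscosity supersolution. Thus $\tilde u\in\mathcal S(p,q)$ with $\tilde u(m_0)<u^*_{p,q}(m_0)=\inf_{f\in\mathcal S(p,q)}f(m_0)$, a contradiction.

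The main obstacle is preserving concavity and strict monotonicity under this Perron-type downward perturbation without the smoothing effect of a second-order term, precisely the difficulty flagged in the introduction. Strict concavity at $m_0$ is exactly the ingredient that makes the scheme go through: it pins the supporting line $\ell$ strictly above $u^*_{p,q}$ away from $m_0$, so that the uniform drop $\ell-\eta$ carves a clean linear dip whose crossing points $a,b$ inherit the correct one-sided slope inequalities from the concavity of $u^*_{p,q}$. The secondary nuisance is the edge case $u^*_{p,q}(m_0)=p(m_0)$ for the lower bound, which is dispatched by a separate touching argument against the common supporting line $\ell$.
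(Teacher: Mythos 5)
Your construction is essentially the paper's Perron perturbation (carve a downward dip with a concave function whose $\mathcal L$-value stays positive, take the minimum with $u^*_{p,q}$, and contradict minimality), and it is sound at points where the supporting line is strictly above $u^*_{p,q}$ away from $m_0$. But there is a genuine gap in the step ``strict concavity at $m_0$ forces $\ell(m)>u^*_{p,q}(m)$ for every $m\ne m_0$, since otherwise $\ell$ would coincide with $u^*_{p,q}$ along a chord through $m_0$.'' Condition \eqref{strict concave at a point} only constrains chords for which $m_0$ is an \emph{interior} point ($\lambda\in(0,1)$ with $m_1\ne m_2$ on opposite sides of $m_0$). It is perfectly compatible with $u^*_{p,q}$ being affine on an interval $[m_0,m_0+\kappa]$ (or $[m_0-\kappa,m_0]$) with slope equal to the one-sided derivative at $m_0$: if $u^*_{p,q}$ is strictly concave on the other side, every chord through $m_0$ still lies strictly below $u^*_{p,q}(m_0)$. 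In that situation $\ell$ coincides with $u^*_{p,q}$ on the whole affine piece, so $\delta:=\min_{|m-m_0|=r}(\ell-u^*_{p,q})=0$, you cannot choose $\eta<\delta$, and the set $\{\tilde\varphi<u^*_{p,q}\}$ extends past the entire affine piece (possibly unboundedly, and in any case outside the region where the continuity argument guarantees $\mathcal L\tilde\varphi>0$). The construction of $\tilde u$ therefore breaks down. You cannot dismiss this case by appealing to strict concavity of $u^*_{p,q}$ everywhere, because that is Proposition~\ref{prop:strict concave}, which is proved \emph{after} and \emph{by means of} the present subsolution property; at this stage $u^*_{p,q}$ is only known to be concave.

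The paper handles exactly this configuration as a separate case: when $(u^*_{p,q})'(m_0-)=(u^*_{p,q})'(m_0+)=s$ and $u^*_{p,q}$ is affine on one side of $m_0$ and strictly concave on the other, it picks differentiability points $\ell_n\downarrow m_0$ inside the strictly concave side, where the already-established subsolution inequality gives $\mathcal L u^*_{p,q}(\ell_n)\le 0$, and passes to the limit using continuity of $u^*_{p,q}$, convergence of $(u^*_{p,q})'(\ell_n)$ to $s$, and continuity of the $\sup_{h}$ term; since the superdifferential at $m_0$ is the singleton $\{s\}$, this yields the subsolution inequality for every test function at $m_0$. Adding such a limiting argument would close the gap in your proof. (Your treatment of the lower bound $\tilde u\ge p$ via the touching argument at a common supporting line is fine, and is in fact more careful than the paper on that particular point.)
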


\begin{proof}
If $u^*_{p,q}$ is strictly concave at $m_0\in (0,\infty)$ as defined above, there are three possibilities: (i) $(u^*_{p,q})'(m_0 -)\neq (u^*_{p,q})'(m_0+)$; (ii) $(u^*_{p,q})'(m_0 -) = (u^*_{p,q})'(m_0+)$, and $u^*$ is strictly concave on the interval $[m_0-\kappa,m_0+\kappa]$ for some $\kappa>0$; (iii) $(u^*_{p,q})'(m_0 -) = (u^*_{p,q})'(m_0+)$, and there exists $\kappa_1,\kappa_2>0$ such that $u^*_{p,q}$ is linear on $[m_0-\kappa_1,m_0]$ and strictly concave on $[m_0,m_0+\kappa_2]$, or strictly concave on $[m_0-\kappa_1,m_0]$ and linear on $[m_0,m_0+\kappa_2]$. 

We assume, by contradiction, that there exists a test function $\psi\in C^1((0,\infty))$ such that $0 = (u^*_{p,q}-\psi)(m_0) > (u^*_{p,q}-\psi)(m)$ for all $m\in (0,\infty)\setminus\{m_0\}$ and $\mathcal{L}\psi(m_0)>0$. For the cases (i) and (ii), we can assume without loss of generality that $\psi$ is strictly increasing and concave on $(0,\infty)$. Take $\delta>0$ small enough such that $\mathcal{L}\psi(m)>0$ for all $m\in (m_0-\delta,m_0+\delta)$. Then, for small enough $\eps>0$, one can take $0<\delta_1\le \delta$ such that for each $0<\eta\le\eps$, $\mathcal{L}(\psi-\eta)(m)>0$ for all $m\in (m_0-\delta_1,m_0+\delta_1)$. Consider the function
\begin{equation}\label{u^eta}
u^\eta (m) := 
\begin{cases}
\min\{u^*_{p,q}(m),\psi(m)-\eta\},\ &\hbox{for}\ m\in[m_0-\delta_1, m_0+\delta_1],\\
u^*_{p,q}(m),\ &\hbox{for}\ m\notin[m_0-\delta_1, m_0+\delta_1].
\end{cases}
\end{equation}
When $\eta$ is small enough, $u^\eta$ by construction is a concave, strictly increasing viscosity supersolution to \eqref{HJB_u_2'} on $(0,\infty)$, and $u^*_{p,q}-\eta\le u^\eta\le u^*_{p,q}$. That is, $u^\eta\in \mathcal{S}(p,q)$ as $\eta$ is small enough. However, by definition $u^\eta< u^*_{p,q}$ in some small neighborhood of $m_0$, which contradicts the definition of $u^*_{p,q}$. 

Now we deal with the case (iii). Set $a:= (u^*_{p,q})'(m_0 -) = (u^*_{p,q})'(m_0+)$. In view of \eqref{HJB_u_2'}, to get the desired subsolution property, it suffices to prove
\begin{equation}\label{for subsolution}
(u^*_{p,q})^2(m_0) - c_0(m_0)u^*_{p,q}(m_0)+ a m_0  \left(\sup\limits_{h\ge 0}\left\{g(h) - \frac{1-\gamma}{\gamma} \frac{u^*_{p,q}(m_0)}{a m_0} h\right\}-\beta\right)\le 0.
\end{equation}
We assume, without loss of generality, that $u^*_{p,q}$ is linear on $[m_0-\kappa_1,m_0]$ and strictly concave on $[m_0,m_0+\kappa_2]$. Take $\{\ell_n\}_{n\in\N}$ in $(m_0,m_0+\kappa_2]$ such that $\ell_n\downarrow m_0$ and $u^*_{p,q}$ is differentiable at $\ell_n$. Then, the subsolution property we established above under case (ii) implies that $\mathcal{L} u^*_{p,q}(\ell_n) \le 0$ for all $n\in\N$. 
Observe that the map 
\begin{equation}\label{sup conti.}
m\mapsto \sup\limits_{h\ge 0}\left\{g(h) - \frac{1-\gamma}{\gamma} \frac{u^*_{p,q}(m)}{m(u^*_{p,q})'(m)} h\right\}\quad \hbox{is continuous around $m_0$}, 
\end{equation}
thanks to $g$ being strictly concave and nondecreasing with $g'(\infty)=0$. As $n\to\infty$, $\mathcal{L} u^*_{p,q}(\ell_n) \le 0$ implies \eqref{for subsolution}, by the continuity of $u^*_{p,q}$ and \eqref{sup conti.}. 
\end{proof}

We next establish the strict concavity of $u^*_{p,q}$. Recall that $I$ denotes the inverse function of $g'$.

\begin{proposition}[Strict Concavity]\label{prop:strict concave}
Assume $0<\gamma<1$, \eqref{bar c}, and \eqref{g<beta}. For any $(p,q)\in\Pi$, $u^*_{p,q}$ is strictly concave on $(0,\infty)$.
\end{proposition}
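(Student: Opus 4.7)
I would argue by contradiction. Suppose that $u^* := u^*_{p,q}$ is not strictly concave on $(0,\infty)$. By concavity, $u^*$ is affine on some maximal nondegenerate interval $[m_1, m_2] \subset (0,\infty]$, with $u^*(m) = Am + B$ there. On $(m_1, m_2)$, $u^*$ is smooth, so Proposition~\ref{prop:supersolution} gives the classical inequality $\mathcal{L}u^*(m) \ge 0$. A preliminary observation is that $A \ge a := (1-\zeta^{1-\gamma})/\gamma$: by Lemma~\ref{lem:u(m)/m decrease}, $(u^*)'(m) \to a$ at infinity, and the concavity of $u^*$ combined with the maximality of $[m_1, m_2]$ forces either $A > a$, or $A = a$ together with $m_2 = \infty$.

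Substituting $u^*(m) = Am + B$ into $\mathcal{L}$ and applying the computations of Lemma~\ref{lem:c0+alpha}, the function $\theta(m) := \mathcal{L}(Am+B)(m)$ satisfies $\theta''(m) > 0$ on $(0,\infty)$: the strict concavity of $g$ yields $I'(y) = 1/g''(I(y)) < 0$, making the second term in \eqref{theta''} strictly positive, while $2A(A - a) \ge 0$. The strict convexity of $\theta$ together with $\theta \ge 0$ on $(m_1, m_2)$ forces $\theta$ to vanish at most at a single interior point, so there exists a subinterval $[m_*, m_{**}] \subset (m_1, m_2)$ on which $\theta \ge \varepsilon_0 > 0$.

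I now use this to construct $\hat u \in \mathcal{S}(p,q)$ with $\hat u < u^*$ somewhere, contradicting $u^* = \inf \mathcal{S}(p,q)$. If $u^* > c_0 + \beta_g$ at some point of $[m_1, m_2]$, take $\hat u := \min(u^*, c_0 + \beta_g)$: $c_0 + \beta_g$ is a strictly increasing concave supersolution (Lemma~\ref{lem:c0+alpha}) and satisfies $p < c_0 + \beta_g$ (Lemma~\ref{lem:p<c0+betag}), so $\hat u$ is itself a concave, strictly increasing supersolution (minimum of two such) sandwiched between $p$ and $u^* \le q$. The borderline scenario $A = a$, $m_2 = \infty$, $B = \bar c + \beta_g$ (so that $u^* \equiv c_0 + \beta_g$ on $[m_1, \infty)$) is eliminated by invoking Proposition~\ref{prop:subsolution} at $m_1$: the strict concavity of $u^*$ at $m_1$ (guaranteed by maximality of the affine interval) yields a subsolution inequality which, evaluated with slope $a$, reads $\mathcal{L}(c_0 + \beta_g)(m_1) \le 0$, contradicting the strict positivity of this quantity from Lemma~\ref{lem:c0+alpha}. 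The remaining case $A > a$ with $u^* \le c_0 + \beta_g$ on $[m_1, m_2]$ is handled by a piecewise-linear perturbation: replace $u^*$ on $[m_*, m_3]$ by a line of slope $A - \eta$ through $(m_*, u^*(m_*))$, choosing $m_3 > m_2$ so that this line meets $u^*$ at $m_3$, which is possible because $(u^*)'(m) \to a < A - \eta$. The resulting $\hat u$ is concave (slopes decrease across both junctions), strictly increasing, and, for $\eta$ small, a classical supersolution on $(m_*, m_3)$ by continuity of the Lemma~\ref{lem:c0+alpha} expression $\theta^\eta$ in $\eta$ together with the bound $\theta \ge \varepsilon_0$ on $[m_*, m_{**}]$.

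The principal technical obstacle is checking the supersolution inequality $\theta^\eta \ge 0$ on the tail $(m_2, m_3)$ of the perturbation interval, where the linear piece extends beyond the original affine region and so the inherited inequality $\theta \ge 0$ is not automatic. This must be secured by exploiting the strict convexity of $\theta^\eta$ together with a judicious choice of the perturbation parameters $m_*, m_{**}, m_3$. As flagged in the introduction, the first-order structure of the equation forbids the standard second-order regularization used in analogous viscosity arguments, and the whole scheme rests on the strict concavity of the efficacy function $g$, which produces the strict convexity of $\theta$ driving both the positivity $\theta \ge \varepsilon_0$ on a subinterval and its stability under the perturbation.
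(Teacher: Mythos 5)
Your overall strategy --- contradiction via a maximal affine piece $u^*=Am+B$, strict convexity of $\theta(m):=\mathcal{L}(Am+B)$ from \eqref{theta''}, and Perron-type perturbations --- is the right one, and two of your building blocks are sound: the reduction $u^*\le c_0+\beta_g$ via $\min(u^*,c_0+\beta_g)$ (legitimate, since Lemmas~\ref{lem:c0+alpha} and~\ref{lem:p<c0+betag} make this min an element of $\mathcal{S}(p,q)$), and the use of Proposition~\ref{prop:subsolution} at an endpoint of the affine interval to get $\theta(m_1)\le 0$ against Lemma~\ref{lem:c0+alpha}. But the workhorse case $A>a$ is not closed, and the obstacle you flag is genuine and not removable by tuning parameters. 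The supersolution property of $u^*$ gives $\theta\ge 0$ only on $(m_1,m_2)$, where $u^*$ coincides with the line; on the tail $(m_2,m_3)$ of your perturbation $\theta$ (hence $\theta^\eta$ for small $\eta$) can be strictly negative --- e.g.\ $\theta$ convex, decreasing on $(m_1,m_2)$ with $\theta(m_2)=0$ dips below zero just past $m_2$ --- and nothing forces your line to be a supersolution there. Worse, no perturbation supported in a compact subinterval of the affine region can work: a concave function lying strictly below the line on $(m_a,m_b)$ and agreeing with it at $m_a,m_b$ cannot exist, so any admissible lowering must extend past $m_2$, precisely into the uncontrolled region. You also never check $\hat u\ge p$ on $[m_*,m_3]$; since $p$ may touch $u^*$ there, the perturbed function can exit $\mathcal{S}(p,q)$.

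The paper's resolution of this case is structurally different and avoids perturbing in the interior altogether: for an interior maximal affine piece it applies the subsolution property at \emph{both} endpoints $m_1,m_2$ (Case IV-2), obtaining $\theta(m_1)\le 0$ and $\theta(m_2)\le 0$, which with $\theta''>0$ forces $\theta<0$ somewhere in $(m_1,m_2)$, contradicting the supersolution inequality $\theta\ge 0$ there; when the affine piece starts at $m=0$ (Cases III, IV-1) it combines the subsolution inequality at the right endpoint with the behavior of $\theta$, $\theta'$ near $0$ and, if needed, a perturbation anchored near the origin where Lemma~\ref{lem:p<c0+betag} (or $p(0)<u^*(0)$) keeps the perturbation above $p$. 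Two further cases are missing from your enumeration: $A=a$ with $B\in[\bar c,\bar c+\beta_g)$, where $\mathcal{L}(c_0+\alpha)(m)\to-\infty$ (Lemma~\ref{lem:c0+alpha}) immediately contradicts the supersolution property on $(m_1,\infty)$; and the globally linear case $u^*\equiv c_0+\beta_g$ on all of $\R_+$, where there is no point of strict concavity at which to invoke Proposition~\ref{prop:subsolution}, and the paper instead builds a steeper supersolution line intersecting $u^*$ near $m=0$ (Case II-2).
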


\begin{proof}
Assume, by contradiction, that $u^*_{p,q}$ is linear, i.e. $u^*_{p,q}(m) =am+b$, on some interval of $\R_+$. Since $u^*_{p,q}\in\mathcal{S}(p,q)$, we deduce from Lemma~\ref{lem:u(m)/m decrease} that $a\ge \frac{1-\zeta^{1-\gamma}}{\gamma}$ and $b\in[\bar c, \bar c+\beta]$. Recall $\theta(m):=\mathcal{L}(am+b)$ in \eqref{theta}. %We deal with different cases separately.
\begin{itemize}[leftmargin=*]
\item{\bf Case I:} $a=\frac{1-\zeta^{1-\gamma}}{\gamma}$ and $b\in[\bar c,\bar c+\beta_g)$.
 %If $b = \bar{c}$, then $u^*_{p,q}= c_0$. By \eqref{c0 subsolution}, this contradicts the supersolution property of $u^*_{p,q}$. 
Then $u^*_{p,q}(m) = am+b = c_0(m)+\alpha$ for $m$ large enough, where $\alpha:= b-\bar c\in[0,\beta_g)$. By Lemma~\ref{lem:c0+alpha}, $\lim_{m\uparrow\infty}\mathcal{L}(u^*_{p,q})(m) = \lim_{m\uparrow\infty}\mathcal{L}(c_0+\alpha)(m)=-\infty$. This contradicts the supermartingale property of $u^*_{p,q}$.

\item{\bf Case II:} $a=\frac{1-\zeta^{1-\gamma}}{\gamma}$ and $b\in[\bar c+\beta_g, \bar c+\beta]$.
		\begin{itemize}[leftmargin=*]
		\item {\bf Case II-1:} $u^*_{p,q}(m) = am+b$ for all $m\ge 0$, with $b\in(\bar c+\beta_g, \bar c+\beta]$.\\  
		Let us write $u^*_{p,q}(m) = c_0(m)+\alpha$, with $\alpha:= b-\bar c\in (\beta_g,\beta]$. For any $\bar \alpha\in (\beta_g,\alpha)$, Lemmas~\ref{lem:c0+alpha} and \ref{lem:p<c0+betag} imply that $c_0+\bar \alpha$ belongs to $\mathcal{S}({p,q})$ and is strictly less than $u^*_{p,q}$, which contradicts the definition of $u^*_{p,q}$. 
		
		\item {\bf Case II-2:} $u^*_{p,q}(m) = am+(\bar c +\beta_g)$ for all $m\ge 0$.\\
We deduce from \eqref{theta} and \eqref{theta' a=max} that $\lim_{m\downarrow 0}\theta(m)= b(b-\bar c)>0$ and $\lim_{m\downarrow 0}\theta'(m)= b-\bar c-\beta <0$. Thus, we can take $m^*>0$ small enough such that $\theta(m^*)>0$ and $\theta'(m^*)<0$. In view of the continuous dependence of $\theta(m^*)$ and $\theta'(m^*)$ on $a,b$ in \eqref{theta} and \eqref{theta'}, there exists $\delta> 0$ small enough such that when $a,b$ are replaced by $\bar a \in (a,a+\delta)$ and $\bar b \in (b-\delta,b)$, $\theta(m^*)>0$ and $\theta'(m^*)<0$ still hold. Take suitable $\bar a \in (a,a+\delta)$ and $\bar b \in (b-\delta,b)$ such that $\bar a m^*+\bar b = u^*_{p,q}(m^*)$ and $\bar a m +\bar b > p(m)$ for $m\in(0,m^*]$ (this is doable thanks to Lemma~\ref{lem:p<c0+betag}). For clarity, let $\bar \theta$ and $\bar \theta'$ denote $\theta$ and $\theta'$ with $a,b$ replaced by $\bar a,\bar b$. Now, we deduce from $\lim_{m\downarrow 0}\bar \theta(m)=\bar b(\bar b-\bar c)>0$ (obtained from \eqref{theta} as above), $\bar\theta(m^*)>0$, $\bar\theta'(m^*)<0$, and $\bar \theta''(m)>0$ for all $m> 0$ (by \eqref{theta''}) that $\bar\theta(m)>0$ for all $m\in(0, m^*)$. Consider the function $\psi(m):= \bar a m+\bar b$. By definition $\mathcal{L}\psi(m)=\bar\theta(m)>0$ for $m\in(0,m^*)$. Thus, $\psi\wedge u^*_{p,q}$ belongs to $\mathcal{S}(p,q)$ and is strictly less than $u^*_{p,q}$ for $m\in (0, m^*)$. This contradicts the definition of $u^*_{p,q}$. 

		\item {\bf Case II-3:} There exists $m_0>0$ such that $u^*_{p,q}(m) = am+b$ for all $m\ge m_0$, and $u^*_{p,q}$ is strictly concave at $m_0$ in the sense of \eqref{strict concave at a point}.\\
 By Proposition~\ref{prop:subsolution}, $u^*_{p,q}$ is a viscosity subsolution to \eqref{HJB_u_2'} at $m_0$. Take $\psi(m) := am+b$, $m\in (0,\infty)$, as a test function of $u^*_{p,q}$ at $m_0$. The subsolution property of $u^*_{p,q}$ yields $\mathcal{L}\psi(m_1)\le 0$. Note that $\psi(m) = c_0(m)+\alpha$ with $\alpha:=b-\bar c\in[\beta_g,\beta]$. Thus, by Lemma~\ref{lem:c0+alpha}, $\mathcal{L}{\psi}(m)>0$ for all $m>0$, a contradiction.
		\end{itemize}
		
\item{\bf Case III:} $a>\frac{1-\zeta^{1-\gamma}}{\gamma}$ and $b=\bar c$. 
Then there exists $m_0>0$ such that $u^*_{p,q}(m) = am+b$ for $m\in[0,m_0]$ and $u^*_{p,q}$ is strictly concave at $m_0$ in the sense of \eqref{strict concave at a point}. By Proposition~\ref{prop:subsolution}, $u^*_{p,q}$ is a viscosity subsolution to \eqref{HJB_u_2'} at $m_0$. 
For all $m\in (0,\infty)$, define 
\begin{align*}%\label{eta}
\eta(m)&:= \left(a+\frac bm\right)\left[\left(a-\frac{1-\zeta^{1-\gamma}}{\gamma}\right) m + (b-\bar c)\right] + a \left(\ell(m)-\beta\right),
\end{align*}
with $\ell$ as in \eqref{ell}. Note that $\theta(m) = m\eta(m)$. By direct calculation and \eqref{ell'}, 
\begin{align}
\eta'(m) &= a\left(a-\frac{1-\zeta^{1-\gamma}}{\gamma}\right) - \frac{b}{m^2}\left[(b-\bar c) - \frac{1-\gamma}{\gamma} I\left(\frac{1-\gamma}{\gamma}\left[1+\frac{b}{a m} \right]\right)\right].\label{eta'}
\end{align}
Since we currently have $b = \bar c$, $\eta'(m) >0$ for all $m\in(0,\infty)$. Now, take $\psi(m) := am+b$, $m\in (0,\infty)$, as a test function of $u^*_{p,q}$ at $m_0$. Then the subsolution property of $u^*_{p,q}$ implies $0\ge \mathcal{L}\psi(m_0)=\theta(m_0)=m_0\eta(m_0)$. We therefore have $\eta(m)<0$ for all $m\in(0,m_0)$. The supersolution property of $u^*_{p,q}$, however, entails $0\le \mathcal{L}u^*_{p,q}(m)=\theta(m)=m\eta(m)$ for all $m\in(0,m_0)$, a contradiction.

\item{\bf Case IV:} $a> \frac{1-\zeta^{1-\gamma}}{\gamma}$ and $b\in(\bar c, \bar c+\beta)$.%\\
		\begin{itemize}[leftmargin=*]
		\item {\bf Case IV-1:} There exists $m_0>0$ such that $u^*_{p,q}(m) = am+b$ for $m\in[0,m_0]$, and $u^*_{p,q}$ is strictly concave at $m_0$ in the sense of \eqref{strict concave at a point}.\\
		We first show that $p(0)$ has to be strictly less than $u^*_{p,q}(0)$. If $p(0)= u^*_{p,q}(0)$, then $\lim_{m\downarrow 0}p'(m) \le a$; otherwise, $p(m)>u^*_{p,q}(m)$ for $m> 0$ small enough, which contradicts $u^*_{p,q}\in \mathcal{S}(p,q)$. By the concavity of $p$, we can take a real sequence $\{\ell_n\}$ such that $\ell_n\downarrow 0$ and $p$ is differentiable at $\ell_n$. The subsolution property of $p$ then implies $\mathcal{L}p(\ell_n)\le 0$ for all $n\in\N$. As $n\to\infty$, we get $p(0)(p(0)-\bar c)\le 0$, thanks to the finiteness of $\lim_{m\downarrow 0}p'(m)$. This shows that $p(0)<\bar c$, a contradiction to $p\ge c_0$.
		
		By Proposition~\ref{prop:subsolution}, $u^*_{p,q}$ is a viscosity subsolution to \eqref{HJB_u_2'} at $m_0$. Take $\psi(m) := am+b$, $m\in (0,\infty)$, as a test function of $u^*_{p,q}$ at $m_0$. Then the subsolution property of $u^*_{p,q}$ implies $0\ge \mathcal{L}\psi(m_0)=\theta(m_0)$. Observe from \eqref{theta} that $\lim_{m\downarrow 0}\theta(m)= b(b-\bar c)>0$. If $\lim_{m\downarrow 0}\theta'(m)\ge 0$, then $\theta''>0$ on $(0,\infty)$ (by \eqref{theta''}) implies that $\theta(m)>\theta(0)>0$ for all $m>0$, which contradicts $\theta(m_0)\le 0$. If $\lim_{m\downarrow 0}\theta'(m)< 0$, then we can follow the argument in Case II-2. Take $0<m^*<m_0$ small enough such that $\theta(m^*)>0$ and $\theta'(m^*)<0$. By the continuous dependence of $\theta(m^*)$ and $\theta'(m^*)$ on $a,b$, there exists $\delta> 0$ such that when $a,b$ are replaced by $\bar a \in (a,a+\delta)$ and $\bar b \in (b-\delta,b)$, $\theta(m^*)>0$ and $\theta'(m^*)<0$ still hold. Choose suitable $\bar a \in (a,a+\delta)$ and $\bar b \in (b-\delta,b)$ such that $\bar a m^*+\bar b = u^*_{p,q}(m^*)$ and $\bar a m +\bar b > p(m)$ for $m\in(0,m^*]$ (this is doable thanks to $p(0)<u^*_{p,q}(0)$). For clarity, let $\bar \theta$ and $\bar \theta'$ denote $\theta$ and $\theta'$ with $a,b$ replaced by $\bar a,\bar b$. Now, we deduce from $\lim_{m\downarrow 0}\bar \theta(m)>0$, $\bar\theta(m^*)>0$, $\bar\theta'(m^*)<0$, and $\bar \theta''>0$ on $(0,\infty)$ that $\bar\theta(m)>0$ for all $m\in(0, m^*)$. Consider the function $\phi(m):= \bar a m+\bar b$. By definition $\mathcal{L}\phi(m)=\bar\theta(m)>0$ for $m\in(0,m^*)$. Thus, $\phi\wedge u^*_{p,q}$ belongs to $\mathcal{S}(p,q)$ and is strictly less than $u^*_{p,q}$ for $m\in (0, m^*)$. This contradicts the definition of $u^*_{p,q}$. 
		\item {\bf Case IV-2:} There exist $m_1, m_2\in (0,\infty)$ such that $u^*_{p,q}(m) = am+b$ for $m\in[m_1,m_2]$, and $u^*_{p,q}$ is strictly concave at $m_1$ and $m_2$ in the sense of \eqref{strict concave at a point}.\\
		By Proposition~\ref{prop:subsolution}, $u^*_{p,q}$ is a viscosity subsolution to \eqref{HJB_u_2'} at both $m_1$ and $m_2$. Now, take $\psi(m) := am+b$, $m\in (0,\infty)$, as a test function of $u^*_{p,q}$ at $m_1$ and $m_2$. Then the subsolution property of $u^*_{p,q}$ implies $0\ge \mathcal{L}\psi(m_1)=\theta(m_1)$ and $0\ge \mathcal{L}\psi(m_2)=\theta(m_2)$. Since $\theta''>0$ on $(0,\infty)$ (by \eqref{theta''}), we must have $\theta(m_3)<0$ for some $m_3\in(m_1,m_2)$. The supersolution property of $u^*_{p,q}$, however, entails $0\le \mathcal{L}\psi(m)=\theta(m)$ for all $m\in(m_1,m_2)$, a contradiction.
		\end{itemize}
\end{itemize}
\end{proof}

\begin{proposition}[Regularity]\label{prop:regularity}
Assume $0<\gamma<1$, \eqref{bar c}, and \eqref{g<beta}. For any $(p,q)\in\Pi$, $u^*_{p,q}$ is a strictly concave classical solution to \eqref{HJB_u_2'} on $(0,\infty)$.
\end{proposition}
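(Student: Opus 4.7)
The plan is to combine the preceding propositions into a viscosity-solution statement, then exploit the implicit structure of the first-order ODE together with the very fact that $u^*_{p,q}$ is an infimum to upgrade to $C^1$ (and ultimately classical) regularity.

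First I would invoke Proposition \ref{prop:strict concave} to note that the strict concavity condition \eqref{strict concave at a point} holds at every $m_0\in(0,\infty)$; Proposition \ref{prop:subsolution} then promotes $u^*_{p,q}$ to a viscosity subsolution on $(0,\infty)$, and combined with Proposition \ref{prop:supersolution} this makes $u^*_{p,q}$ a viscosity solution of \eqref{HJB_u_2'}. Concavity yields one-sided derivatives $(u^*_{p,q})'_\pm$ everywhere with $(u^*_{p,q})'_-\ge (u^*_{p,q})'_+$, and differentiability off a countable set $\mathcal{N}$. Setting
\[
\Phi_{u,m}(v):=u^2-c_0(m)\,u+v\Bigl(\tilde g\bigl(\tfrac{1-\gamma}{\gamma}\tfrac{u}{v}\bigr)-\beta\Bigr),\qquad \tilde g(c):=\sup_{h\ge 0}\{g(h)-ch\},
\]
the viscosity equation reduces at each $m\in(0,\infty)\setminus\mathcal{N}$ to the classical relation $\Phi_{u^*_{p,q}(m),m}\bigl(m(u^*_{p,q})'(m)\bigr)=0$. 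Using $\tilde g(c)=g(I(c))-cI(c)$, I would compute $\partial_v\Phi_{u,m}(v)=g\bigl(I(\tfrac{1-\gamma}{\gamma}\tfrac{u}{v})\bigr)-\beta$, which is strictly increasing in $v>0$ by Assumption~\ref{ass:gfunct}. Hence $\Phi_{u,m}(\cdot)$ is strictly convex with at most two simple roots $v_1(u,m)\le v_2(u,m)$, both smooth in $(u,m)$ by the implicit function theorem wherever $\Phi$ is not tangent to zero.

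The central claim will be that $m(u^*_{p,q})'(m)=v_1(u^*_{p,q}(m),m)$ at every $m\in(0,\infty)\setminus\mathcal{N}$. If instead some $m_0$ realized the larger root, I would solve the classical ODE $m\tilde u'(m)=v_1(\tilde u(m),m)$ with $\tilde u(m_0)=u^*_{p,q}(m_0)$ to obtain a strictly concave smooth solution on a neighborhood of $m_0$ satisfying $\tilde u'(m_0)<(u^*_{p,q})'(m_0)$, hence $\tilde u<u^*_{p,q}$ on a punctured neighborhood. Then $\bar u:=\min(u^*_{p,q},\tilde u)$ inside that neighborhood, extended by $u^*_{p,q}$ outside, would be continuous, strictly increasing, concave, and, as a pointwise minimum of two viscosity supersolutions, itself a viscosity supersolution; for a small enough neighborhood $p\le\bar u\le q$ holds on $(0,\infty)$. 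Thus $\bar u\in\mathcal{S}(p,q)$ would lie strictly below $u^*_{p,q}$ at some point, contradicting $u^*_{p,q}=\inf\mathcal{S}(p,q)$.

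Having established the claim, if a kink $m_0\in\mathcal{N}$ existed, then taking one-sided limits of $m(u^*_{p,q})'(m)=v_1(u^*_{p,q}(m),m)$ as $m\uparrow m_0$ and $m\downarrow m_0$ through $(0,\infty)\setminus\mathcal{N}$, together with continuity of $u^*_{p,q}$ and $v_1$, would force $m_0(u^*_{p,q})'_-(m_0)=v_1(u^*_{p,q}(m_0),m_0)=m_0(u^*_{p,q})'_+(m_0)$, contradicting the kink. Hence $\mathcal{N}=\emptyset$, so $u^*_{p,q}\in C^1((0,\infty))$; a bootstrap through $(u^*_{p,q})'(m)=v_1(u^*_{p,q}(m),m)/m$ and the smoothness of $v_1$ then delivers a classical (in fact $C^\infty$) solution. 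The main obstacle will be verifying that $\bar u$ genuinely lies in $\mathcal{S}(p,q)$: the bound $\bar u\ge p$ is immediate when $p(m_0)<u^*_{p,q}(m_0)$ but requires a separate argument at any potential contact point $p(m_0)=u^*_{p,q}(m_0)$, handled by shrinking the neighborhood and exploiting Lemma~\ref{lem:p<c0+betag} together with the fact that, by strict concavity of both $p$ and $u^*_{p,q}$, any such contact is isolated.
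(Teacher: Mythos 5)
Your overall skeleton matches the paper's: Propositions~\ref{prop:supersolution}, \ref{prop:subsolution}, and \ref{prop:strict concave} give a strictly concave viscosity solution, and the $C^1$ upgrade comes from matching one-sided derivatives at a putative kink via the strict monotonicity of the Hamiltonian in the gradient variable. The paper does exactly your last step in slightly different clothing: it takes sequences $k_n\uparrow m_0$, $\ell_n\downarrow m_0$ of differentiability points, passes to the limit in $\mathcal{L}u^*_{p,q}=0$, and uses the strict monotonicity of $a\mapsto\sup_{h\ge0}\{(g(h)-\beta)a-\frac{1-\gamma}{\gamma}\frac{u^*_{p,q}(m_0)}{m_0}h\}$ to force the two one-sided derivatives to coincide; your computation $\partial_v\Phi_{u,m}(v)=g\bigl(I(\frac{1-\gamma}{\gamma}\frac uv)\bigr)-\beta$ is correct and is precisely the derivative that makes that map injective on the relevant range.

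The gap is in the ``central claim'' and its proof. First, the claim needs no ODE comparison: Lemma~\ref{lem:u(m)/m decrease} already gives $m(u^*_{p,q})'(m)\le u^*_{p,q}(m)$ at every differentiability point (and for the one-sided limits at $m_0$), and for $v\in(0,u]$ one has $\frac{1-\gamma}{\gamma}\frac uv\ge\frac{1-\gamma}{\gamma}$, hence $\partial_v\Phi_{u,m}(v)\le g\bigl(I(\frac{1-\gamma}{\gamma})\bigr)-\beta<0$ by \eqref{g<beta}; so $\Phi_{u,m}$ is strictly decreasing on $(0,u]$, has at most one root there, and the larger root $v_2$ (which lies beyond the minimizer $v_{\min}>u$) is never attainable. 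This also guarantees the root is simple, so your appeal to the implicit function theorem and to continuity of $v_1$ is legitimate. Second, the comparison argument you substitute for this observation would fail as written: with $\tilde u(m_0)=u^*_{p,q}(m_0)$ and $\tilde u'(m_0)<(u^*_{p,q})'(m_0)$ you get $\tilde u<u^*_{p,q}$ only to the \emph{right} of $m_0$ (and $\tilde u>u^*_{p,q}$ to the left), not on a punctured neighborhood; and when you truncate the modification at the right edge of the neighborhood, the glued $\bar u$ jumps back up to $u^*_{p,q}$, destroying continuity and concavity. The analogous construction in Proposition~\ref{prop:subsolution} works only because the test function there touches $u^*_{p,q}$ from above, so the perturbed function re-enters above $u^*_{p,q}$ strictly inside the interval. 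You also assert, without argument, that $\tilde u$ is strictly increasing and concave near $m_0$. Replacing this entire detour by the one-line appeal to Lemma~\ref{lem:u(m)/m decrease} closes the gap and collapses your proof onto the paper's.
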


\begin{proof}
For any  $(p,q)\in\Pi$, Propositions~\ref{prop:supersolution}, \ref{prop:subsolution}, and \ref{prop:strict concave} immediately imply that $u^*_{p,q}$ is a strictly concave viscosity solution to \eqref{HJB_u_2'} on $(0,\infty)$. It remains to show that $u^*_{p,q}$ is differentiable everywhere on $(0,\infty)$. Assume, by contradiction, that there exists $m_0\in (0,\infty)$ such that $a:=(u^*)'(m_0+) < (u^*)'(m_0-)=:b$. Take $\{k_n\}_{n\in\N}$ and $\{\ell_n\}_{n\in\N}$ in $(0,\infty)$ such that $k_n\uparrow m_0$, $\ell_n\downarrow m_0$, and $u^*_{p,q}$ is differentiable at $k_n$ and $\ell_n$ for all $n\in\N$. By the viscosity solution property of $u^*_{p,q}$, $\mathcal{L}u^*(k_n) = \mathcal{L}u^*(\ell_n)=0$, for all $n\in\N$. As $n\to \infty$, we get 
\begin{equation*}
\sup\limits_{h\ge 0}\left\{(g(h)-\beta)a - \frac{1-\gamma}{\gamma} \frac{u^*_{p,q}(m_0)}{m_0 } h\right\} = \sup\limits_{h\ge 0}\left\{(g(h)-\beta)b - \frac{1-\gamma}{\gamma} \frac{u^*_{p,q}(m_0)}{m_0 } h\right\},
\end{equation*}
which implies that $a=b$, a contradiction.
\end{proof}

\begin{proposition}[Verification]\label{prop:uniqueness}
Assume $0 < \gamma < 1$, \eqref{bar c}, and \eqref{g<beta}. If $u:\R_+\to\R_+$ is a nonnegative, strictly increasing, and concave classical solution to \eqref{HJB_u_2'} on $(0,\infty)$, then
\[
V(x,m)=\frac{x^{1-\gamma}}{1-\gamma} u(m)^{-\gamma}\quad \hbox{for all}\ (x,m)\in\R^2_+. 
\]
Furthermore, $(\hat c,\hat h)$ defined by
\[
\hat c_t := u(M_t)\quad \hbox{and} \quad \hat h_t := I\left(\frac{1-\gamma}{\gamma}\frac{u(M_t)}{M_t\cdot (u)'(M_t)}\right),\quad \hbox{for all}\ t\ge 0,
\]
is an optimal control of \eqref{problem}. 
\end{proposition}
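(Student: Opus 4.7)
The plan is to apply the general verification result Theorem~\ref{thm:verification} to the candidate value function $w(x,m):=\frac{x^{1-\gamma}}{1-\gamma}u(m)^{-\gamma}$. Since $u$ is, by hypothesis, a nonnegative, strictly increasing, concave classical solution of \eqref{HJB_u_2'}, a direct substitution shows that $w$ is a classical solution of the Hamilton--Jacobi equation \eqref{HJB}, and the first-order conditions for the two suprema in \eqref{HJB} are attained exactly at $\hat c=u(m)$ and $\hat h=I\!\big(\frac{1-\gamma}{\gamma}\frac{u(m)}{mu'(m)}\big)$. The bounds $c_0\le u\le c_0+\beta$ supplied by Theorem~\ref{thm:main 4}, together with the lower bound on $mu'(m)/u(m)$ from Lemma~\ref{lem:u(m)/m decrease} and the Inada condition~\eqref{Inada}, guarantee that $(\hat c,\hat h)\in\A$.

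What remains is to verify the transversality conditions \eqref{t to infty} and \eqref{n to infty} of Theorem~\ref{thm:verification}, first for an arbitrary $(c,h)\in\A$ (to obtain $V\le w$) and then along $(\hat c,\hat h)$ (to upgrade to equality and optimality). For a general control, $X^{0,x,c,h}_t\le X^{0,x,c,h}_{\tau_n}e^{r(t-\tau_n)}$ and $u\ge \bar c>0$ give, using $0<\gamma<1$,
\begin{equation*}
0\le \E\!\left[e^{-\int_{\tau_n}^{t}(\delta+M_s)ds}\,w\!\left(\zeta^{n}X^{0,x,c,h}_t,M_t\right)\,\Big|\,Z_1,\ldots,Z_n\right]\le e^{-[\delta+(\gamma-1)r](t-\tau_n)}\,\frac{(\zeta^{n}X^{0,x,c,h}_{\tau_n})^{1-\gamma}}{1-\gamma}\,\bar c^{-\gamma},
\end{equation*}
which vanishes as $t\to\infty$ because $\bar c>0$ is exactly $\delta+(\gamma-1)r>0$. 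The analogous estimate at $t=\tau_n$ yields
\begin{equation*}
0\le \E\!\left[e^{-\delta\tau_n}\,w(\zeta^n X^{0,x,c,h}_{\tau_n},M_{\tau_n})\right]\le \zeta^{(1-\gamma)n}\,\frac{x^{1-\gamma}}{1-\gamma}\,\bar c^{-\gamma}\,\E\!\left[e^{-[\delta+(\gamma-1)r]\tau_n}\right].
\end{equation*}
Since every admissible $h$ satisfies $M^h_s\le me^{\beta s}$, the random time $\tau_n$ stochastically dominates the corresponding death time under Gompertz aging without healthcare, and Case I of the proof of Proposition~\ref{prop:main 2} supplies a bound on $\E[e^{-[\delta+(\gamma-1)r]\tau_n}]$ sufficient to force the right-hand side to $0$ as $n\to\infty$ (dominated convergence handles the boundary case $\zeta=1$, since $\tau_n\to\infty$ a.s.).

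The corresponding transversality along $(\hat c,\hat h)$ is more delicate, since the controlled wealth need not admit the crude estimate $X^{0,x,\hat c,\hat h}_t\le xe^{rt}$. Here one instead exploits \eqref{HJB_u_2'} to rewrite $e^{-\int_0^t(\delta+M_s)ds}w(X^{0,x,\hat c,\hat h}_t,M_t)$ in a form suited to explicit asymptotic analysis, and then applies the sharper bracket $u_0^g\le u\le c_0+\beta_g$ of Theorem~\ref{thm:main 4} together with the density estimates in the spirit of \eqref{density beta>0}--\eqref{need gamma<2}, with $\beta$ replaced by the effective growth rate $\beta-g(\hat h)\in[\beta_g,\beta]$. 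This is the main obstacle: the required cancellation between the exponential decay $e^{-\int_0^t M_s ds}$ and the exponential growth of mortality demands a uniform strict lower bound on $\beta-g(\hat h)$, which is precisely the content of the well-posedness condition \eqref{g<beta} (and of the resulting asymptotic $u(m)/m\to(1-\zeta^{1-\gamma})/\gamma$). Once these estimates are in hand, Theorem~\ref{thm:verification} simultaneously yields $V(x,m)=w(x,m)$ and the optimality of $(\hat c,\hat h)$.
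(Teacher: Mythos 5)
Your overall strategy (apply Theorem~\ref{thm:verification} to $w(x,m)=\frac{x^{1-\gamma}}{1-\gamma}u(m)^{-\gamma}$, check admissibility of $(\hat c,\hat h)$, and verify \eqref{t to infty} and \eqref{n to infty}) is exactly the paper's, and your treatment of the general-control transversality is essentially correct --- though simpler than you make it: since $\delta+(\gamma-1)r=\gamma\bar c>0$, the trivial bound $\E[e^{-(\delta+(\gamma-1)r)\tau_n}]\le 1$ already suffices, with the factor $\zeta^{(1-\gamma)n}\to 0$ doing all the work in \eqref{n to infty}; no stochastic-dominance comparison with pure Gompertz death times is needed. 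One point to fix is a circularity: you invoke Theorem~\ref{thm:main 4} for the bounds $c_0\le u\le c_0+\beta$, but Theorem~\ref{thm:main 4} is proved using Theorem~\ref{thm:main 3}, which rests on the present proposition. The hypotheses here only give a nonnegative, strictly increasing, concave classical solution; the lower bound $u\ge c_0(0)=\bar c>0$ must be read off the equation itself (from \eqref{HJB_u_2'}, $\frac{u(m)}{mu'(m)}\ge 1$, and \eqref{g<beta} one gets $u^2-c_0u\ge 0$), and for admissibility of $\hat c$ the linear growth of $u$ (automatic from concavity) is all that is required.

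The more serious problem is that your third paragraph manufactures a difficulty that does not exist and then leaves its resolution as an unexecuted sketch. The wealth dynamics \eqref{wealth} read $dX_s=X_s[r-(c_s+h_s)]ds$ with $c_s,h_s\ge 0$, so the ``crude'' estimate $X^{0,x,c,h}_t\le X^{0,x,c,h}_{\tau_n}e^{r(t-\tau_n)}$ holds for \emph{every} admissible control, including the candidate optimum; combined with $u\ge\bar c>0$ and $0<\gamma<1$ it yields \eqref{t to infty} and \eqref{n to infty} for all $(c,h)\in\A$ simultaneously, so no separate, ``more delicate'' argument along $(\hat c,\hat h)$ is needed. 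The machinery you point to --- the density estimates \eqref{density beta>0}--\eqref{need gamma<2} and an effective mortality growth rate $\beta-g(\hat h)$ --- belongs to Case II ($\gamma>1$) of Proposition~\ref{prop:main 2}, where $w<0$ and one needs \emph{lower} bounds on wealth via the $\eps$-perturbation of Proposition~\ref{prop:verification}; none of that is relevant in the present regime $0<\gamma<1$. As written, the argument for the optimality of $(\hat c,\hat h)$ rests on a false premise and an incomplete program, even though the correct (and much shorter) argument is already contained in your second paragraph.
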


\begin{proof}
Set $w(x,m) := \frac{x^{1-\gamma}}{1-\gamma} u(m)^{-\gamma}$. In view of \thmref{thm:verification}, it suffices to show that \eqref{t to infty} and \eqref{n to infty} hold, and $(\hat c,\hat h)$ belongs to $\A$. Since $u$ is nonnegative, strictly increasing, and concave, Lemma~\ref{lem:u(m)/m decrease} implies that $\hat h_t \le I(\frac{1-\gamma}{\gamma})$ for all $t\ge 0$. Moreover, there exist $a,b>0$ such that $u(m)<am+b$ for all $m\ge 0$. It follows that for any compact subset $K\subset \R_+$, 
\begin{equation*}%\label{c admissible}
\int_K \hat c_t dt \le \int_K aM_t +b\ dt \le \int_K  a m e^{\beta t} +b\ dt <\infty.  
\end{equation*}
This already shows that $(\hat c,\hat h)\in\A$. 

Under \eqref{g<beta}, $u$ being a classical solution to \eqref{HJB_u_2'} implies $ u^2(m) - u(m)  c_0(m) \ge 0$, and thus $ u(m)\ge  c_0(m)$ for all $m\in(0,\infty)$. Now, for any $(x,m)\in \R^2_+$, $(c,h)\in \A$, and $n\in\N$, by using $0<\gamma<1$ and $X^{0,x,c,h}_t \le X^{0,x,c,h}_{\tau_{n}}\exp\left(r(t-\tau_{n})\right)$, we have
\begin{align*}
0&\le \E\left[\exp\left(-\int_{\tau_{n}}^t (\delta+ M^{0,m,h}_s) ds\right) w\left(\zeta^n X^{0,x,c^{},h}_t, M^{0,m,h}_t\right)\ \middle|\ Z_1,...,Z_n\right]\\
 &\le e^{-\delta(t-\tau_n)} \frac{(X^{0,x,c,h}_{\tau_{n}})^{1-\gamma}}{1-\gamma} e^{(1-\gamma)r(t-\tau_{n})} \E[u(M^{0,m,h}_t)^{-\gamma}\mid Z_1,...,Z_n]\\
& \le e^{-(\delta+(\gamma-1)r)(t-\tau_n)} \frac{(X^{0,x,c,h}_{\tau_{n}})^{1-\gamma}}{1-\gamma} (\bar{c})^{-\gamma}\to 0\ \ \hbox{a.s.}\quad \hbox{as}\ t\to\infty,
\end{align*}
where the second line is due to $M^{0,m,h}_t \ge 0$ by definition, and the third line follows from $u$ being strictly increasing with $u(m)> u(0)\ge c_0(0)=\bar c>0$, and the convergence is a consequence of $\delta + (\gamma-1) r = \gamma \bar c>0$. This in particular implies \eqref{t to infty}. On the other hand, for each $n\in\N$,
\begin{equation*}%\label{tau_n estimate 2}
\begin{split}
0&\le \E\left[e^{-\delta \tau_{n}} w\left(\zeta^n X^{0,x,c,h}_{\tau_{n}},M^{0,m,h}_{\tau_{n}}\right)\right]\le \zeta^{(1-\gamma)n}\frac{x^{1-\gamma}}{1-\gamma}  \E\left[e^{-\delta \tau_{n}}e^{(1-\gamma) r\tau_{n}} u(M^{0,m,h}_{\tau_{n}})^{-\gamma}\right]\\
&\le \zeta^{(1-\gamma)n}\frac{x^{1-\gamma}}{1-\gamma} (\bar c)^{-\gamma} \E\left[e^{-(\delta+(\gamma-1)r) \tau_{n}}\right]\le \zeta^{(1-\gamma)n}\frac{x^{1-\gamma}}{1-\gamma} (\bar c)^{-\gamma}\to 0 \quad \hbox{as}\ t\to\infty,
\end{split}
\end{equation*}
where the last inequality is due to $\delta+(\gamma-1)r =\gamma \bar c>0$. The shows that \eqref{n to infty} is also satisfied. 
\end{proof}

Proposition \ref{prop:uniqueness}, together with Propositions~\ref{prop:regularity} and \ref{prop:supersolution}, leads to:

\begin{corollary}\label{coro:indep. of p,q}
Assume $0 < \gamma < 1$, \eqref{bar c}, and \eqref{g<beta}. Then $u^*_{p,q}$ is independent of the choice of $(p,q)\in\Pi$, and it is the unique nonnegative, strictly increasing, and concave classical solution to \eqref{HJB_u_2'} on $(0,\infty)$.
\end{corollary}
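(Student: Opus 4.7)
The proof plan hinges on bootstrapping the three preceding propositions (Propositions~\ref{prop:supersolution}, \ref{prop:regularity}, and \ref{prop:uniqueness}) through the uniqueness of the value function $V(x,m)$.

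First, I would fix any $(p,q)\in\Pi$ (which is nonempty by Remark~\ref{rem:not empty}) and collect what has already been established about $u^*_{p,q}$: by Proposition~\ref{prop:supersolution}, $u^*_{p,q}\in \mathcal{S}(p,q)$, so in particular it is nonnegative, continuous, and strictly increasing; by Proposition~\ref{prop:regularity}, it is in fact a strictly concave classical solution to \eqref{HJB_u_2'} on $(0,\infty)$. Thus $u^*_{p,q}$ satisfies the hypotheses of the verification result Proposition~\ref{prop:uniqueness}.

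Next, applying Proposition~\ref{prop:uniqueness} to $u=u^*_{p,q}$ yields
\begin{equation*}
V(x,m) \;=\; \frac{x^{1-\gamma}}{1-\gamma}\, u^*_{p,q}(m)^{-\gamma}\qquad\text{for all }(x,m)\in\R^2_+,
\end{equation*}
along with the prescribed optimal policies. Because the left-hand side $V(x,m)$ is defined in \eqref{problem} purely from the data of the model and does not depend on the choice of $(p,q)\in\Pi$, solving this identity for $u^*_{p,q}(m)$ gives
\begin{equation*}
u^*_{p,q}(m) \;=\; \bigl((1-\gamma)\,V(1,m)\bigr)^{-1/\gamma},
\end{equation*}
which is independent of $(p,q)$. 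This yields the first claim.

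For uniqueness, I would argue by contradiction only in appearance: suppose $\tilde u$ is any nonnegative, strictly increasing, concave classical solution to \eqref{HJB_u_2'} on $(0,\infty)$. Then Proposition~\ref{prop:uniqueness} applies equally to $\tilde u$, giving $V(x,m)=\frac{x^{1-\gamma}}{1-\gamma}\tilde u(m)^{-\gamma}$. Comparing with the identity for $u^*_{p,q}$ above forces $\tilde u(m)=u^*_{p,q}(m)$ for all $m\in(0,\infty)$, and continuity then extends the equality to $m=0$. Since all the heavy lifting (the Perron construction, the strict concavity argument of Proposition~\ref{prop:strict concave}, and the verification result) has already been done in the preceding propositions, no substantive obstacle remains; the only subtlety is simply to notice that the value function $V$ serves as a common reference point that links $u^*_{p,q}$ for different $(p,q)$ and arbitrary classical solutions together.
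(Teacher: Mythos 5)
Your proposal is correct and is exactly the argument the paper intends: the paper's own "proof" is the one-line remark that Propositions~\ref{prop:supersolution}, \ref{prop:regularity}, and \ref{prop:uniqueness} together yield the corollary, and you have simply spelled out that chain — the verification result identifies every $u^*_{p,q}$, and any other nonnegative, strictly increasing, concave classical solution, with $\bigl((1-\gamma)V(1,\cdot)\bigr)^{-1/\gamma}$, which is a fixed object independent of $(p,q)$.
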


\begin{remark}
Proposition \ref{prop:uniqueness} and Corollary \ref{coro:indep. of p,q} yield Theorem~\ref{thm:main 3}.
\end{remark}

In the following, we will simply denote by $u^*$ the function $u^*_{p,q}$ for any $(p,q) \in\Pi$. 

\begin{corollary}[Strict Concavity of $u_0$]\label{coro:u0 concave}
Assume $0 < \gamma < 1$ and \eqref{bar c}. Then $u_0$, defined in \eqref{u0}, is strictly concave on $(0,\infty)$.
\end{corollary}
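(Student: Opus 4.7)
The plan is to rerun the Perron construction of Propositions \ref{prop:supersolution}--\ref{prop:regularity} in the degenerate case $g \equiv 0$, under which the ODE \eqref{HJB_u_2'} reduces to \eqref{HJB_u_1}, and then identify the resulting strictly concave solution with $u_0$ via the explicit general solution of \eqref{HJB_u_1}.

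When $g \equiv 0$, the supremum in \eqref{HJB_u_2'} is attained at $h = 0$ and equals zero, so $\mathcal{L}u = 0$ coincides with \eqref{HJB_u_1} and $\beta_g = \beta$ in the sense of \eqref{betag}. Direct substitution gives $\mathcal{L}c_0(m) = -\beta m c_0' < 0$ and $\mathcal{L}(c_0+\beta)(m) = \beta(\bar c + \beta) > 0$, so $(c_0, c_0+\beta) \in \Pi$. Setting $u^* := u^*_{c_0, c_0+\beta}$, the arguments of Propositions \ref{prop:supersolution}, \ref{prop:subsolution}, and \ref{prop:regularity} apply unchanged, as do most cases of Proposition \ref{prop:strict concave}: the adapted Lemma \ref{lem:c0+alpha} reads ``$c_0+\alpha$ is a supersolution iff $\alpha = \beta$, with $\mathcal{L}(c_0+\alpha) \to -\infty$ when $\alpha < \beta$''; and in Cases III--IV the polynomial $\theta(m)$ is a strictly convex quadratic whenever $a > c_0'$, which forbids $\theta \equiv 0$ on an interval. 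The only nontrivial adaptation arises in the subcase of Case II where $u^* \equiv c_0+\beta$: one perturbs to $\bar a := c_0' + \epsilon$ and $\bar b := \bar c + \beta - \epsilon m^*$ with $m^* > 0$ chosen small enough that $\bar\theta(m) = (\bar a m + \bar b)(\epsilon m + \beta - \epsilon m^*) - \bar a \beta m$ satisfies $\bar\theta'(0) > 0$; convexity of $\bar\theta$ then forces $\bar\theta > 0$ on $(0, m^*)$, so $\phi(m) := \bar a m + \bar b$ is a strict supersolution there, and $\phi \wedge u^* \in \mathcal{S}(c_0, c_0+\beta)$ is strictly smaller than $u^*$ on $(0, m^*)$, contradicting minimality.

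It remains to identify $u^* = u_0$. The proof of Proposition \ref{prop:main 2} displays the explicit general solution of \eqref{HJB_u_1},
\[
u_C(m) = \beta e^{-\alpha m}\left[C\beta m^b + \int_1^\infty e^{-\alpha m y} y^{-(1+b)}\,dy\right]^{-1},\quad C \in \R,
\]
with $\alpha = (1-\zeta^{1-\gamma})/(\beta\gamma) > 0$ and $b = (\delta+(\gamma-1)r)/(\beta\gamma) > 0$. For $C > 0$, the bracket grows polynomially while $e^{-\alpha m}$ decays exponentially, so $u_C(m) \to 0$ and $u_C$ is not monotone; for $C < 0$, the bracket starts at $1/b > 0$ and tends to $-\infty$, hence vanishes at some finite $m^*$ where $u_C$ blows up. Only $C = 0$, i.e.\ $u^* = u_0$, yields a strictly increasing finite solution on $(0, \infty)$, and strict concavity of $u_0$ follows. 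The main obstacle is the adaptation of Case II of Proposition \ref{prop:strict concave}: because $\ell \equiv 0$ when $g \equiv 0$, the candidate $\theta$ degenerates to a constant under the natural choice $(a,b) = (c_0', \bar c + \beta)$, and a careful two-parameter perturbation is needed to produce the required strict supersolution beneath $u^*$.
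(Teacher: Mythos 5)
Your proof is correct and follows essentially the same route as the paper, whose own proof of this corollary is precisely the one-line instruction to rerun the Perron construction of this section with $g\equiv 0$ (so that \eqref{HJB_u_2'} reduces to \eqref{HJB_u_1}, with much simpler versions of each lemma) and to identify the resulting strictly concave $u^*$ with $u_0$. Your write-up merely makes explicit the two points the paper leaves implicit: the degenerate subcase of Proposition~\ref{prop:strict concave} where $\beta_g=\beta$ turns $\theta$ into a positive constant and the paper's perturbation must be redone (your two-parameter choice of $\bar a,\bar b$ handles it correctly), and the identification $u^*=u_0$ by discarding the $C\neq 0$ branches of the general solution of \eqref{HJB_u_1}.
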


\begin{proof}
With $g\equiv 0$, the equation \eqref{HJB_u_2'} reduces to \eqref{HJB_u_1}, and we can repeat the same arguments in this section (with much simpler proofs) to show that the strictly concave $u^*$ constructed under Perron's method coincides with $u_0$. 
\end{proof}

Now, we are ready to prove Theorem~\ref{thm:main 4}

\begin{proof}[Proof of Theorem~\ref{thm:main 4}]
First, observe that $\beta_g = \beta - g\big(I\big(\frac{1-\gamma}{\gamma}\big)\big) + \frac{1-\gamma}{\gamma} I\big(\frac{1-\gamma}{\gamma}\big)$. Then \eqref{g<beta} implies that $\beta_g>0$.
Since $u_0$ is a solution to \eqref{HJB_u_1} and $u_0'(m)\ge 0$, it is a supersolution to \eqref{HJB_u_2'}. This, together with Lemma~\ref{lem:properties of u0}, Corollary~\ref{coro:u0 concave}, and Remark~\ref{rem:not empty}, shows that $(c_0,u_0)\in\Pi$. It follows that $u^* = u^*_{c_0,u_0}\le u_0$. Similarly, $(c_0,c_0+\beta_g)\in\Pi$ by Lemma~\ref{lem:c0+alpha}, which implies $u^*=u^*_{c_0,c_0+\beta_g}\le c_0+\beta$. This already yields $u^*\le \min\{u_0, c_0+\beta_g\}$. On the other hand, thanks again to Lemma~\ref{lem:properties of u0} and Corollary~\ref{coro:u0 concave} with $\beta$ replaced by $\beta_g$, $u_0^g$ is nonnegative, strictly increasing, concave, and bounded from below and above by $c_0$ and $c_0+
\beta_g$ respectively. Then, Lemma~\ref{lem:u(m)/m decrease} implies $\frac{u^g_0(m)}{m (u^g_0)'(m)}\ge 1$ for all $m>0$. It follows that
\[
\beta -\sup_{h\ge 0}\left\{g(h)-\frac{1-\gamma}{\gamma}\frac{u^g_0(m)}{m (u^g_0)'(m)} h\right\} \ge \beta_g,\quad \forall m>0.
\]  
Since $u_0^g$ is by construction a solution to \eqref{HJB_u_1} with $\beta$ replaced by $\beta_g$, the above inequality gives
\begin{align*}
0 &= (u^g_0(m))^2 - u^g_0(m) c_0(m) - \beta_g m (u^g_0)'(m)\\
&\ge  (u^g_0(m))^2 - u^g_0(m) c_0(m) + m (u^g_0)'(m) \left(\sup_{h\ge 0}\left\{g(h)-\frac{1-\gamma}{\gamma}\frac{u^g_0(m)}{m (u^g_0)'(m)} h\right\} - \beta\right) = \mathcal{L}u^g_0(m),
\end{align*}
for all $m>0$. This shows that $(u^g_0,c_0+\beta_g)\in\Pi$, and thus $u^*=u^*_{u^g_0,c_0+\beta_g}\ge u^g_0$.
\end{proof}

%%%%%%%%%%%%%%%%%%%%%%%%%%%%%%%%%%%%%%%%%%%%%%%%%%%%%%%%%%%%%%%%%%%%%%%%%%%%%%%%%%%%%%%%%%%%%%%%%%%

%\appendix

\section{Verification}\label{sec:appendix}
In this section, we provide a general verification theorem for the value function $V(x,m)$ in \eqref{problem}. 
Given $(c,h)\in\A$, we introduce, for each $n\in\N$, the truncated policies $(c^{(n)},h^{(n)})\in\A$:
\begin{equation}\label{truncated}
c^{(n)}_t := \left(\sum_{k=0}^{n-1} c_k(t) 1_{\{\tau_{k}\le t<\tau_{{k+1}}\}}\right) + c_n(t) 1_{\{t\ge \tau_{{n}}\}},\quad h^{(n)}_t := \left(\sum_{k=0}^{n-1} h_k(t) 1_{\{\tau_{k}\le t<\tau_{{k+1}}\}}\right) + h_n(t) 1_{\{t\ge \tau_{{n}}\}}.
\end{equation}

\begin{theorem}\label{thm:verification}
Let $w\in C^{1,1}(\R_+\times\R_+)$ satisfy \eqref{HJB}. Suppose for any $(x,m)\in\R^2_+$ and $(c,h)\in \A$,
\begin{align}
&\lim_{t\to\infty} \E\bigg[\exp\left(-\int_{\tau_n}^t (\delta+ M^{0,m,h^{(n)}}_s) ds\right)\cdot\nonumber\\
&\hspace{1.5in} w\left(\zeta^n X^{0,x,c^{(n)},h^{(n)}}_t,M^{0,m,h^{(n)}}_t\right)\ \bigg|\ Z_1,...,Z_n \bigg] = 0\quad \forall n\ge 0,\label{t to infty}\\
&\lim_{n\to\infty} \E\left[e^{-\delta\tau_{n}} w\left(\zeta^n X^{0,x,c^{(n)},h^{(n)}}_{\tau_n},M^{0,m,h^{(n)}}_{\tau_n}\right)\right] = 0.\label{n to infty}
\end{align}
\begin{itemize}
\item [(i)] $w(x,m) \ge V(x,m)$ on $\R_+\times\R_+$. 
\item [(ii)] Suppose there exist two measurable functions $\bar c$, $\bar h:\R^2_+\to\R_+$ such that
$\bar c(x,m)$ and $\bar h(x,m)$ are maximizers of 
\[
\sup_{c\ge 0}\left\{U(cx)-cxw_x(x,m)\right\}\quad \hbox{and}\quad \sup_{h\ge 0}\left\{ -w_m(x,m)g(h)-h x w_x(x,m)\right\},
\]
respectively, for all $(x,m)\in\R^2_+$. Let $\bar X$, $\bar M$, $\bar N$ denote the solutions to %the ordinary differential equations  
\begin{align*}
dX_s &= X_s[r - (\bar c(X_s,M_s)+\bar h(X_s,M_s))] ds\quad  X_0 =x,\\
dM_s &= M_s\left[\beta - g(\bar h(\zeta^{N_s} X_s,M_s))\right]ds\quad M_0=m,\\
N_s &= \sum_{k=0}^\infty k 1_{\{T_k\le t <T_{k+1}\}},\quad \hbox{with}\  T_0:=0,\ T_{n+1}:= \inf\left\{t\ge T_n\ \middle|\ \int_{T_n}^t M_s ds\ge Z_{n+1}\right\},\ n\ge 1.
\end{align*}
Define the processes $(\hat c,\hat h)$ by 
\begin{equation}\label{hat c h}
\hat c_t := \bar c(\zeta^{\bar N_t} \bar X_t, \bar M_t)\quad \hbox{and}\quad \hat h_t := \hat h(\zeta^{\bar N_t} \bar X_t,\bar M_t),\quad \hbox{for}\ t\ge 0.
\end{equation}
If $(\hat c,\hat h)\in\A$, then $(\hat c,\hat h)$ is an optimal control of the problem \eqref{problem}, and $w(x,m)=V(x,m)$ on $\R_+\times\R_+$. 
\end{itemize}
\end{theorem}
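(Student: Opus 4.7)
The plan is a classical verification argument adapted to the controlled jump structure of \eqref{problem}: apply Dynkin's identity to $w$ between consecutive death times and iterate. The key simplification is that on $[\tau_n,\tau_{n+1})$ the paths of $(X_s,M_s)$ are deterministic conditional on $Z_1,\ldots,Z_n$, and $\tau_{n+1}$ admits the conditional density $M_s\exp(-\int_{\tau_n}^s M_u\,du)$, so ordinary calculus plus iterated conditioning suffice and no jump-diffusion Ito calculus is required.

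First, \eqref{HJB} rewrites as the pointwise inequality
\[
U(cy) + m w(\zeta y, m) + (r-c-h)\, y\, w_x(y,m) + (\beta-g(h))\, m\, w_m(y,m) - (\delta+m) w(y,m) \le 0,
\]
for all $(y,m,c,h)\in\R_+^4$, with equality at $(c,h)=(\bar c(y,m),\bar h(y,m))$. On $[\tau_n,\tau_{n+1})$, $N_s$ is constant and $(X_s,M_s)$ satisfies smooth ODEs, so the fundamental theorem of calculus applied to $s\mapsto e^{-\int_{\tau_n}^s (\delta+M_u)du} w(\zeta^n X_s,M_s)$, together with the pointwise inequality evaluated at $(\zeta^n X_s,M_s,c_s,h_s)$, yields
\[
\frac{d}{ds}\!\left[e^{-\int_{\tau_n}^s (\delta+M_u)du} w(\zeta^n X_s,M_s)\right] \le -\,e^{-\int_{\tau_n}^s (\delta+M_u)du}\bigl[U(c_s \zeta^n X_s) + M_s w(\zeta^{n+1}X_s,M_s)\bigr].
\]
Integrating from $\tau_n$ to $t$, multiplying by $e^{-\delta \tau_n}$, and taking $\E[\cdot\mid Z_1,\ldots,Z_n]$ -- recognizing $e^{-\int_{\tau_n}^s M_u du}$ and $M_s e^{-\int_{\tau_n}^s M_u du}$ as the conditional survival probability and density of $\tau_{n+1}$ -- produces, on $\{\tau_n\le t\}$,
\[
\begin{aligned}
e^{-\delta\tau_n}w(\zeta^n X_{\tau_n},M_{\tau_n}) \ge \E\Bigl[&\int_{\tau_n}^{\tau_{n+1}\wedge t} e^{-\delta s} U(c_s\zeta^{N_s}X_s)\,ds + e^{-\delta\tau_{n+1}} w(\zeta^{n+1}X_{\tau_{n+1}},M_{\tau_{n+1}})\,1_{\{\tau_{n+1}\le t\}}\\
&+ e^{-\delta t}\,w(\zeta^n X_t,M_t)\,1_{\{\tau_n\le t<\tau_{n+1}\}} \Bigm| Z_1,\ldots,Z_n\Bigr].
\end{aligned}
\]

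Taking unconditional expectations and telescoping from $n=0$ to $N-1$, the death-time terms cancel pairwise and the running-utility terms aggregate; using $\tau_0=0$, this gives
\[
w(x,m) \ge \E\!\left[\int_0^{\tau_N\wedge t} e^{-\delta s}\,U(c_s\zeta^{N_s}X_s)\,ds\right] + \E\!\left[e^{-\delta\tau_N} w(\zeta^N X_{\tau_N},M_{\tau_N})\,1_{\{\tau_N\le t\}}\right] + \sum_{n=0}^{N-1}\E\!\left[e^{-\delta t} w(\zeta^n X_t,M_t)\,1_{\{\tau_n\le t<\tau_{n+1}\}}\right].
\]
On $\{\tau_n\le t<\tau_{n+1}\}$, the trajectory of $(X,M)$ and the value of $\tau_{n+1}$ depend only on $c_0,h_0,\ldots,c_n,h_n$, hence agree with their counterparts under the truncated policy $(c^{(n)},h^{(n)})$ of \eqref{truncated}; extracting the conditional survival factor $e^{-\int_{\tau_n}^t M_u du}$ identifies the $n$-th boundary term with the quantity in assumption \eqref{t to infty}, which vanishes as $t\to\infty$. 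Monotone convergence on the utility integral (using the sign-consistency of $w$ and $U$ for the relevant range of $\gamma$) then lets us pass to $t=\infty$, after which $N\to\infty$ kills the $\tau_N$ term via \eqref{n to infty}, yielding $w(x,m)\ge V(x,m)$ and proving (i). For (ii), with the feedback $(\hat c,\hat h)$ of \eqref{hat c h} plugged in, the pointwise HJB inequality is an equality along $(\bar X,\bar M)$, so the entire chain above becomes an equality chain; combined with (i), this forces $w=V$ and the optimality of $(\hat c,\hat h)$.

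The main obstacle is the bookkeeping in the telescoping step: one must verify that before time $\tau_{n+1}$ the original and truncated controls generate identical $(X,M)$ trajectories \emph{and} the same death time $\tau_{n+1}$, so that the $n$-th boundary term legitimately matches the conditional quantity in \eqref{t to infty}. A uniform domination is also needed to pass the $t\to\infty$ limit inside the finite sum over $n$; this can be obtained directly from the pre-limit inequality, which bounds each conditional boundary contribution by the integrable quantity $e^{-\delta\tau_n}w(\zeta^n X_{\tau_n},M_{\tau_n})$, so that dominated convergence applies term by term.
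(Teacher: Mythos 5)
Your argument is correct and follows essentially the same route as the paper: a pathwise application of the HJB inequality on each inter-death interval, conversion of the resulting deterministic integrals into expectations via the conditional survival probability and density $M_s e^{-\int_{\tau_n}^s M_u\,du}$ of $\tau_{n+1}$, iteration across death times with the truncated-policy identification, and the two transversality conditions \eqref{t to infty} and \eqref{n to infty} to remove the boundary terms. The only (cosmetic) difference is the order of limits --- the paper sends $t\to\infty$ within each one-step conditional inequality before iterating, which sidesteps the dominated-convergence step you need to interchange $t\to\infty$ with the outer expectation in the telescoped sum, but your proposed domination by the integrable quantity $e^{-\delta\tau_n}w(\zeta^n X_{\tau_n},M_{\tau_n})$ handles that correctly.
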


\begin{proof}
(i) Given $(c,h)\in\A$, recall that 
\[
c_t = \sum_{n=0}^\infty c_{n}(t) 1_{\{\tau_{n}\le t<\tau_{{n+1}}\}}\quad \hbox{and}\quad h_t = \sum_{n=0}^\infty h_{n}(t) 1_{\{\tau_{n}\le t<\tau_{n+1}\}}
\]
for some $\{c_{n}\}$, $\{h_{n}\}\in\mathfrak L$. We claim that the following holds for all $n\in \N$:
\begin{equation}\label{claim}
w(x,m)\ge \E\bigg[\int_{0}^{\tau_{{n}}} e^{-\delta t} U\left(c_t \zeta^{N_t}{X}^{0,x,c,h}_t\right)dt\bigg] +\E\left[e^{-\delta\tau_{{n}}}w\left(\zeta^{n} X^{0,x,c,h}_{\tau_{{n}}},M^{0,m,h}_{\tau_{{n}}}\right)\right].
\end{equation}
First, we prove this for $n=1$. %we apply It\^{o}'s rule to $e^{-\int_0^t (\delta+M^{0,m,h^0}_\nu) d\nu} w(X^{0,x,c_0,h_0}_t,M^{0,m,h_0}_t)$.
Since $c_0$ and $h_0$ are deterministic functions and $w$ is a classical solution to \eqref{HJB}, 
\begin{align}\label{w>V}
&e^{-\int_0^t (\delta+M^{0,m,h_0}_\nu) d\nu} w(X^{0,x,c_0,h_0}_t,M^{0,m,h_0}_t)\le w(x,m)\nonumber\\
&\hspace{0.2in} -\int_0^t e^{-\int_0^s(\delta+M^{0,m,h_0}_\nu)d\nu}\left[U\left(c_0(s) X^{0,x,c_0,h_0}_s\right) + M^{0,m,h_0}_s w\left(\zeta X^{0,x,c_0,h_0}_s,M^{0,m,h_0}_s\right)\right]ds,
\end{align}
for all $t\ge 0$. Letting $t\to\infty$ and in view of \eqref{t to infty},
\begin{equation}
\begin{split}\label{1}
w(x,m)\ge & \int_0^\infty e^{-\int_0^s(\delta+M^{0,m,h_0}_\nu)d\nu} U\left(c_0(s) X^{0,x,c_0,h_0}_s\right) ds\\
&+ \int_0^\infty e^{-\int_0^s(\delta+M^{0,m,h_0}_\nu)d\nu} M^{0,m,h_0}_s w(\zeta X^{0,x,c_0,h_0}_s,M^{0,m,h_0}_s) ds.
\end{split}
\end{equation}
Thanks to Fubini's theorem and \eqref{tau>t}, observe that 
\begin{align}
\E\bigg[\int_0^{\tau_{1}} e^{-\delta t}U\left(c_t \zeta^{N_t}{X}^{0,x,c,h}_t\right) dt\bigg] &= \E\left[\int_0^\infty 1_{\{\tau_{1}>t\}} e^{-\delta t}U\left(c_0(t) X^{0,x,c_0,h_0}_t\right) dt\right]\nonumber\\
&=\int_0^\infty e^{-\int_0^t(\delta+M^{0,m,h_0}_\nu)d\nu} U\left(c_0(t) X^{0,x,c_0,h_0}_t\right) dt,\label{2}\\
\E\bigg[e^{-\delta \tau_{1}} w\left(\zeta^{} X^{0,x,c,h}_{\tau_{1}},M^{0,m,h}_{\tau_{1}}\right)\bigg]&= \E\left[e^{-\delta \tau_{1}} w\left(\zeta X^{0,x,c_0,h_0}_{\tau_{1}},M^{0,m_0,h_0}_{\tau_{1}}\right)\right]\nonumber\\
&\hspace{-0.5in}=\int_0^\infty e^{-\int_0^t(\delta+M^{0,m,h_0}_\nu)d\nu} M^{0,m,h_0}_t w\left(\zeta X^{0,x,c_0,h_0}_t,M^{0,m,h_0}_t\right) dt\label{3},
\end{align}
whence \eqref{claim} holds true for $n=1$ in view of \eqref{1}-\eqref{3}. Now, suppose \eqref{claim} holds true for $n=k>1$. That is,
\begin{equation}\label{claim'}
w(x,m)\ge \E\bigg[\int_{0}^{\tau_{{k}}} e^{-\delta t} U\left(c_t \zeta^{N_t}{X}^{0,x,c,h}_t\right)dt\bigg] +\E\left[e^{-\delta\tau_{{k}}}w\left(\zeta^{k} X^{0,x,c,h}_{\tau_{{k}}},M^{0,m,h}_{\tau_{{k}}}\right)\right].
\end{equation}
By writing $x_k = \zeta^{k}X^{0,x,c,h}_{\tau_{k}}$ and $m_k = M^{0,m,h}_{\tau_{k}}$,  
%By applying It\^{o}'s rule to 
%\[
%\exp\bigg(-\int_{\tau^{\te_k}}^t (\delta+M^{\tau^{\te_k},m_k,h_k}_\nu) d\nu\bigg) w\bigg(X^{\tau^{\te_k},x_k,c_k,h_k}_t,M^{\tau^{\te_k},m_k,h_k}_t\bigg),
%\] 
we get \eqref{w>V} with $(0,x,m,c_0,h_0)$ replaced by $(\tau_{k},x_k,m_k,c_k,h_k)$. This, together with \eqref{t to infty}, gives 
\begin{align}\label{180}
%\begin{split}
&w\left(\zeta^{k}X^{0,x,c,h}_{\tau_{{k}}},M^{0,m,h}_{\tau_{{k}}}\right) = w(x_k,m_k)\notag\\
&\hspace{0in}\ge  \E\bigg[\int_{\tau_{k}}^\infty \exp\left(-\int_{\tau_{k}}^t(\delta+M^{\tau_{k},m_k,h_k}_\nu)d\nu\right) U\left(c_k(t) X^{\tau_{k},x_k,c_k,h_k}_t\right) dt\ \bigg|\ Z_1,...,Z_k\bigg]\\
&\hspace{0in}+ \E\left[\int_{\tau_{k}}^\infty \exp\left(-\int_{\tau_{k}}^t(\delta+M^{\tau_{k},m_k,h_k}_\nu)d\nu\right) M^{\tau_{k},m_k,h_k}_t w\left(\zeta X^{\tau_{k},x_k,c_k,h_k}_t,M^{\tau_{k},m_k,h_k}_t\right) dt\ \middle|\ Z_1,...,Z_k\right].\notag
%\end{split}
\end{align}
Using Fubini's theorem and \eqref{tau>t} as in \eqref{2}-\eqref{3}, the above inequality implies that
\begin{align}\label{181}
\E\left[e^{-\delta\tau_{k}}w\left(\zeta^{k}X^{0,x,c,h}_{\tau_{{k}}},M^{0,m,h}_{\tau_{{k}}}\right)\right]\ge\  &\E\bigg[\int_{\tau_{k}}^{\tau_{{k+1}}} e^{-\delta t} U\left(c_t \zeta^{N_t}{X}^{0,x,c,h}_t\right) dt\bigg]\notag\\
&+\E\left[ e^{-\delta \tau_{{k+1}}} w\left(\zeta^{k+1} X^{0,x,c,h}_{\tau_{{k+1}}},M^{0,m,h}_{\tau_{{k+1}}}\right)\right].
\end{align}
This, together with \eqref{claim'}, shows that
\[
w(x,m)\ge \E\bigg[\int_{0}^{\tau_{{k+1}}} e^{-\delta t} U\left(c_t \zeta^{N_t}{X}^{0,x,c,h}_t\right) dt\bigg] +\E\bigg[e^{-\delta\tau_{{k+1}}}w\bigg(\zeta^{k+1} X^{0,x,c,h}_{\tau_{{k+1}}},M^{0,m,h}_{\tau_{{k+1}}}\bigg)\bigg].
\]
The claim \eqref{claim} therefore holds by induction. Letting $n\to\infty$ in \eqref{claim}, by the monotone convergence theorem and \eqref{n to infty}, $w(x,m)\ge \E\big[\int_{0}^{\infty} e^{-\delta t} U\big(c_t \zeta^{N_t}{X}^{0,x,c,h}_t\big) dt\big]$ for all $(c,h)\in\A$. Taking the supremum over $(c,h)\in\A(m)$ leads to $w(x,m)\ge V(x,m)$.

(ii) With $(c,h)=(\hat c,\hat h)$, the inequality \eqref{w>V} turns into an equality, whence \eqref{claim} holds with equality. Sending $n\to\infty$, the monotone convergence theorem and \eqref{n to infty} imply that 
$$w(x,m)= \E\bigg[\int_{0}^{\infty} e^{-\delta t} U\left(\hat c_t \zeta^{N_t} {X}^{0,x,\hat c,\hat h}_t\right)dt\bigg] \le V(x,m).$$
This, together with part (i), shows that $w(x,m)=V(x,m)$ and $(\hat c,\hat h)$ is an optimal control.
\end{proof}

\thmref{thm:verification} can be extended to include the risky asset $S$ in \eqref{stock}. Recall the setup in Section~\ref{sec:risky}, especially $\A'$ in \eqref{A'} and the value function $V$ in \eqref{problem_0_risky asset}. For any $(c,h,\pi)\in \A'$, we can also consider the truncated version $(c^{(n)}, h^{(n)}, \pi^{(n)})\in \A'$ defined as in \eqref{truncated}.

\begin{theorem}\label{thm:verification with stock}
Let $w\in C^{1,1}(\R_+\times\R_+)$ satisfy \eqref{HJB_risky asset}. For any $(x,m)\in\R^2_+$ and $(c,h,\pi)\in \A'$, \eqref{t to infty} and \eqref{n to infty} hold, with $X^{0,x,c^{(n)},h^{(n)}}$ replaced by $X^{0,x,c^{(n)},h^{(n)},\pi^{(n)}}$ and $\E[\ \cdot \mid Z_1,...,Z_n]$ by $\E[\ \cdot\mid \F_{\tau_n}]$.
%\begin{align}
%&\lim_{t\to\infty} \E\bigg[\exp\left(-\int_{\tau_n}^t (\delta+ M^{0,m,h^{(n)}}_s) ds\right)\cdot\nonumber\\
%&\hspace{1.5in} w\left(\zeta^n X^{0,x,c^{(n)},h^{(n)}}_t,M^{0,m,h^{(n)}}_t\right)\ \bigg|\ Z_1,...,Z_n \bigg] = 0\quad \forall n\ge 0,\label{t to infty}\\
%&\lim_{n\to\infty} \E\left[e^{-\delta\tau_{n}} w\left(\zeta^n X^{0,x,c^{(n)},h^{(n)}}_{\tau_n},M^{0,m,h^{(n)}}_{\tau_n}\right)\right] = 0.\label{n to infty}
%\end{align}
\begin{itemize}
\item [(i)] $w(x,m) \ge V(x,m)$ on $\R_+\times\R_+$. 
\item [(ii)] Suppose there exist measurable functions $\bar c$, $\bar h$, $\bar \pi:\R^2_+\to\R_+$, with $\bar c$ and $\bar h$ as described in Theorem~\ref{thm:verification} (ii) and $\bar\pi(x,m)$ being the maximizer of 
\[
\sup_{\pi\in \R}\left\{\pi\mu x w_x(x,m)+ \frac12 \sigma^2\pi^2 x^2 w_{xx}(x,m)\right\},\quad \forall\ (x,m)\in\R^2_+. 
\]
Let $\bar X$, $\bar M$, $\bar N$ denote the solutions to %the ordinary differential equations  
\begin{align*}
dX_s &= X_s[r + \mu \bar\pi(X_s,M_s)- (\bar c(X_s,M_s)+\bar h(X_s,M_s))] ds + \sigma X_s\bar\pi(X_s,M_s) dW_s,\quad  X_0 =x;\\
dM_s &= M_s \left[\beta - g(\bar h(\zeta^{N_s} X_s,M_s))\right]ds,\quad M_0=m;\\
N_s &= \sum_{k=0}^\infty k 1_{\{T_k\le t <T_{k+1}\}},\quad \hbox{with}\ T_0:=0,\ T_{n+1}:= \inf\left\{t\ge T_n\ \middle|\ \int_{T_n}^t M_s ds\ge Z_{n+1}\right\},\ n\ge 1.
\end{align*}
Consider the processes $(\hat c,\hat h, \hat \pi)$, with $\hat c$  and $\hat h$ defined as in \eqref{hat c h} and $\hat \pi_t:=\bar \pi(\zeta^{\bar N_t} \bar X_t, \bar M_t)$ for $t\ge 0$. If $(\hat c,\hat h,\hat\pi)\in\A'$, defined in \eqref{A'}, then $(\hat c,\hat h, \hat \pi)$ is an optimal control of the problem \eqref{problem_0_risky asset}, and $w(x,m)=V(x,m)$ on $\R_+\times\R_+$. 
\end{itemize}
\end{theorem}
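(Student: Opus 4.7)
The plan is to mimic the proof of Theorem~\ref{thm:verification}, substituting the deterministic chain rule on each death interval with It\^o's formula and conditioning on the larger $\sigma$-algebra $\F_{\tau_n}$ rather than on $Z_1,\dots,Z_n$ alone. The key observation is that the product decomposition $\Omega=\Omega_1\times\Omega_2$ still decouples the randomness of deaths from the Brownian motion: identity \eqref{tau>t'} shows that, with $\omega_2\in\Omega_2$ held fixed, the conditional survival function of $\tau_{n+1}$ under $\P_1$ is still $\exp\!\big(-\int_{\tau_n}^t M_s(\omega_2)\,ds\big)$, so once one takes $\E_1$ first the Fubini computation of \eqref{1}--\eqref{3} from the proof of Theorem~\ref{thm:verification} carries over almost verbatim.

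For part (i), fix $(c,h,\pi)\in\A'$ and apply It\^o's formula to $s\mapsto e^{-\int_0^s(\delta+M_\nu)d\nu}w(X_s,M_s)$ on $[0,t]$, where $X=X^{0,x,c_0,h_0,\pi_0}$ and $M=M^{0,m,h_0}$. The drift obtained is precisely the left-hand side of \eqref{HJB_risky asset} evaluated at the running controls; since $w$ satisfies that HJB, the drift is dominated by $-e^{-\int_0^s(\delta+M_\nu)d\nu}\big[U(c_sX_s)+M_s\, w(\zeta X_s,M_s)\big]$. Stopping at a reducing sequence $\sigma_k\uparrow\infty$ kills the stochastic integral $\int_0^\cdot e^{-\int_0^s(\delta+M_\nu)d\nu}\sigma\pi_s X_s w_x(X_s,M_s)\,dW_s$; taking $\E_2$, letting $k\to\infty$, then taking $\E_1$ and applying Fubini with \eqref{tau>t'} as in \eqref{2}--\eqref{3} yields the single-step inequality
\begin{equation*}
w(x,m)\ \ge\ \E\!\left[\int_0^{\tau_1}e^{-\delta t}U(c_tX_t)\,dt\right]\,+\,\E\!\left[e^{-\delta\tau_1}w(\zeta X_{\tau_1},M_{\tau_1})\right].
\end{equation*}
Induction on the number of deaths (using the truncated controls $(c^{(n)},h^{(n)},\pi^{(n)})$ as in \eqref{truncated}) together with the terminal limits provided by \eqref{t to infty}--\eqref{n to infty} then proceed verbatim as in Theorem~\ref{thm:verification}, delivering $w\ge V$. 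For part (ii), under $(\hat c,\hat h,\hat\pi)$ the first-order conditions defining $\bar c,\bar h,\bar\pi$ turn the HJB supersolution inequality into an equality pointwise, and all intermediate inequalities collapse, giving simultaneously optimality of $(\hat c,\hat h,\hat\pi)$ and $w=V$.

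The main technical obstacle will be the localization step and the exchange of the $k\to\infty$ limit with the expectations: for a generic admissible $\pi\in\mathfrak L'$ the integrand $\sigma\pi_s X_s w_x(X_s,M_s)$ need not be globally square-integrable, so the stochastic integral is only a local martingale. Two routes overcome this. First, one may restrict to competitors for which the expressions in \eqref{t to infty}--\eqref{n to infty} are finite, since otherwise the desired bound $w\ge V$ is vacuous for that competitor; dominated convergence applied to the stopped sequence then suffices. Second, in the concrete applications behind Theorem~\ref{thm:main 3} and Theorem~\ref{thm:additive}, the explicit ansatz $w(x,m)=\frac{x^{1-\gamma}}{1-\gamma}u(m)^{-\gamma}$ combined with standard $L^p$-bounds on geometric Brownian motion makes the stochastic integral a genuine (square-integrable) martingale on any finite interval, so no subtlety arises; the same estimates secure the equality in part (ii) at the optimal constant proportion $\hat\pi=\mu/(\gamma\sigma^2)$.
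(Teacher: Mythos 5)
Your proposal is correct and follows essentially the same route as the paper's proof: It\^o's formula replaces the deterministic chain rule, the product-space structure and \eqref{tau>t'} let the Fubini computations of \eqref{1}--\eqref{3} go through under $\E_2$, and the induction over death times together with the transversality conditions \eqref{t to infty}--\eqref{n to infty} completes both parts. Your explicit treatment of the localization of the stochastic integral (nonnegativity of the integrands for $0<\gamma<1$, and the assumed finiteness/limits in \eqref{t to infty}--\eqref{n to infty} otherwise) is a detail the paper leaves implicit, but it does not alter the argument.
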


\begin{proof}
(i) We follow the arguments in \thmref{thm:verification}, with $X^{0,x,c,h}$ replaced by $X^{0,x,c,h,\pi}$ in \eqref{X}. For any $(\hat c,\hat h,\hat\pi)\in\A'$, we now prove \eqref{claim} for all $n\in\N$. For $n=1$, as $w$ is a solution to \eqref{HJB_risky asset}, It\^{o}'s formula  yields \eqref{w>V}, with the left hand side and the second line under the expectation $\E_2$. By \eqref{t to infty}, letting $t\to\infty$ gives \eqref{1}, with the right hand side under the expectation $\E_2$. Fubini's theorem and \eqref{tau>t'} imply that \eqref{2}-\eqref{3}, with their second lines again under the expectation $\E_2$. Thus, \eqref{claim} holds for $n=1$. Now, suppose \eqref{claim} holds for $n=k>1$, i.e. \eqref{claim'} is true. As $w$ is a solution to \eqref{HJB_risky asset} and in view of \eqref{t to infty}, by It\^{o}'s formula \eqref{180} holds, with $\E[\ \cdot \mid Z_1,...,Z_n]$ replaced by $\E[\ \cdot\mid \F_{\tau_n}]$. By Fubini's theorem and \eqref{tau>t'} as above, \eqref{181} follows. This, together with \eqref{claim'}, implies that \eqref{claim} holds for $n=k+1$. Thus, \eqref{claim} follows by induction. Letting $n\to\infty$ in \eqref{claim} and recalling \eqref{n to infty}, the same argument as at the end of the proof of \thmref{thm:verification} (i) yields that $w(x,m)\ge V(x,m)$. 

(ii) This follows from the same argument as in the proof of \thmref{thm:verification} (ii).
\end{proof}

In the sequel we relax the conditions in \thmref{thm:verification}. To this end, for any $(x,m)\in\R^2_+$ and $(c,h)\in\A$, suppose that the household is given additional wealth $\eps >0$ at time $0$, and decides not to spend it at all over time. Imagine that at time $0$ the household deposits $x\ge 0$ in a standard account, and $\eps>0$ in a separate additional account. Then, the household behaves as if there was no additional wealth: at each time $t\ge 0$, the amount it spends in consumption (resp. healthcare) is $c_t$ (resp. $h_t$) multiplied by the standard account balance. The rates of spending in consumption and healthcare therefore become
\begin{equation}\label{eps-policies}
(c_\eps)_t = \frac{c_t X^{0,x,c,h}_t}{X^{0,x,c,h}_t+\eps e^{rt}},\quad (h_\eps)_t = \frac{h_t X^{0,x,c,h}_t}{X^{0,x,c,h}_t+\eps e^{rt}}\quad \forall t\ge 0.
\end{equation}
This new process $h_\eps$ of spending rate in healthcare, different from $h$, changes the moments of deaths.  More precisely, starting from time 0, the household takes 
\begin{equation}\label{eps-version}
h_0^\eps(t)  :=  \frac{h_{0}(t) X^{0,x,c_0,h_0}_t}{X^{0,x,c_0,h_0}_t+\eps e^{rt}}
\end{equation}
as instantaneous spending rates in healthcare. As in \eqref{tau's}, the time of the first death is defined as
\[
\tau^{\eps}_1 := \inf\left\{t\ge 0\ \middle|\ \int_{0}^t M^{0,m,h^\eps_0}_s ds \ge Z_1\right\}\le \tau_{1}.
\]
Starting from time $\tau^{\eps}_1$, the household takes 
\[
h_1^\eps(t)  :=  \frac{h_{0}(t) X^{0,x,c_0,h_0}_t}{X^{0,x,c_0,h_0}_t +\eps e^{rt}}1_{\{t<\tau_1\}} + \frac{h_{1}(t) X^{0,x,c^{(1)},h^{(1)}}_t}{X^{0,x,c^{(1)},h^{(1)}}_t +\eps e^{rt}}1_{\{t\ge \tau_1\}}
\]
as instantaneous spending rates in healthcare. %Note that $h_1^\eps:\Omega\mapsto L^{1,+}_\text{loc}([0,\infty))$ is $\cG_1$-measurable. With the notation $\theta_2^\eps=(m,h^\eps_0,h^\eps_1)$ and 
Set $m^\eps_1:= M^{0,m,h^\eps_0}_{\tau^{\eps}_1}$, the time of the second death is defined as in \eqref{tau's} by
\[
\tau^{\eps}_2 := \inf\left\{t\ge \tau^{\eps}_1\ \middle|\ \int_{\tau^{\eps}_1}^t M^{\tau^{\eps}_1,m^\eps_1,h^\eps_1}_s ds \ge Z_2\right\}\le \tau_2.
\]
In general, for each $n\in \N$, the household, starting from time $\tau^{\eps}_{n}$, takes 
\[
h_n^\eps(t)  :=  \sum_{k=0}^{n-1} \frac{h_{k}(t) X^{0,x,c^{(k)},h^{(k)}}_t}{X^{0,x,c^{(k)},h^{(k)}}_t+\eps e^{rt}} 1_{\{\tau_{{k}}\le t<\tau_{{k+1}}\}} + \frac{h_{n}(t) X^{0,x,c^{(n)},h^{(n)}}_t}{X^{0,x,c^{(n)},h^{(n)}}_t+\eps e^{rt}} 1_{\{t\ge \tau_{{n}}\}}
\]
as instantaneous spending rates in healthcare. %Note that $h_n^\eps:\Omega\mapsto L^{1,+}_\text{loc}([0,\infty))$ is $\cG_n$-measurable. With the notation 
Set $m^\eps_n:= M^{\tau^{\eps}_{n-1},m^\eps_{n-1},h^\eps_{n-1}}_{\tau^{\eps}_n}$, the $(n+1)^{th}$ death moment is defined as in \eqref{tau's} by
\[
\tau^{\eps}_{n+1} := \inf\left\{t\ge \tau^{\eps}_n\ \middle|\ \int_{\tau^{\eps}_n}^t M^{\tau^{\eps}_n,m^\eps_n,h^\eps_n}_s ds \ge Z_{n+1}\right\}\le \tau_{{n+1}}.
\]
As in \eqref{counting process}, we can introduce the counting process 
\begin{equation}\label{counting process eps}
N^\eps_t := n\quad \hbox{for}\ t\in[\tau^\eps_{{n}},\tau^\eps_{{n+1}}).
\end{equation}
Similarly, define for each $n\in\N$,
\[
c_n^\eps(t)  :=  \sum_{k=0}^{n-1} \frac{c_{k}(t) X^{0,x,c^{(k)},h^{(k)}}_t}{X^{0,x,c^{(k)},h^{(k)}}_t+\eps e^{rt}} 1_{\{\tau_{{k}}\le t<\tau_{{k+1}}\}} + \frac{c_{n}(t) X^{0,x,c^{(n)},h^{(n)}}_t}{X^{0,x,c^{(n)},h^{(n)}}_t+\eps e^{rt}} 1_{\{t\ge \tau_{{n}}\}}.
\]
Observe that $\{c_n^\eps\}$, $\{h_n^\eps\}\in\mathfrak L$, and it can be checked that 
\[
(c_\eps)(t)  =  \sum_{k=0}^{\infty} c_n^\eps(t) 1_{\{\tau^\eps_{{k}}\le t<\tau^\eps_{{k+1}}\}},\quad (h_\eps)(t)  =  \sum_{k=0}^{\infty} h_n^\eps(t) 1_{\{\tau^\eps_{{k}}\le t<\tau^\eps_{{k+1}}\}}.
\]
This in particular shows that $(c_\eps,h_\eps)\in\A$. In view of \eqref{eps-policies}, we have the identity
\begin{equation}\label{identity}
{X}^{0,x+\eps,c_\eps,h_\eps}_t = {X}^{0,x,c,h}_t + \eps e^{rt}.
\end{equation}

\begin{proposition}\label{prop:verification}
Let $w\in C^{1,1}(\R_+\times\R_+)$ satisfy \eqref{HJB}. Suppose for any $(x,m)\in\R^2_+$, $(c,h)\in\A$, and $\eps>0$,
\begin{align}
&\lim_{t\to\infty} \E\left[\exp\left(-\int_{\tau^{\eps}_n}^t (\delta+ M^{0,m,h_\eps^{(n)}}_s) ds\right) w\left(\zeta^n X^{0,x+\eps,c_\eps^{(n)},h_\eps^{(n)}}_t,M^{0,m,h_\eps^{(n)}}_t\right)\ \middle|\ Z_1,...,Z_n\right]= 0\quad \forall n\ge 0,\label{t to infty'}\\
%&\lim_{t\to\infty} e^{-\int_0^t (\delta+M^{0,m,h^\eps}_s ) ds} w\left(X^{0,x+\eps,c^\eps,h^\eps}_t,M^{0,m,h^\eps}_t\right)= 0\quad \forall c,h\in L^{1,+}_\text{loc}([0,\infty)),\label{t to infty'}\\
&\lim_{n\to\infty} \E\left[e^{-\delta\tau^{\eps}_n} w\left(\zeta^n X^{0,x+\eps,c_\eps^{},h_\eps^{}}_{\tau^{\eps}_n},M^{0,m,h_\eps^{}}_{\tau^{\eps}_n}\right)\right] = 0,\label{n to infty'}\\
&\lim_{\eps\to 0} \E\left[\int_{0}^{\infty} e^{-\delta t} U\left(c_t \zeta^{N^{\eps}_t} {X}^{0,x,c,h}_t\right)dt\right] = \E\left[\int_{0}^{\infty} e^{-\delta t} U\left(c_t \zeta^{N_t} {X}^{0,x,c,h}_t\right)dt\right]\label{eps to 0}.
\end{align}
\begin{itemize}
\item [(i)] $w(x,m)\ge V(x,m)$ on $\R_+\times\R_+$. 
\item [(ii)] Suppose the measurable functions $\bar h$ and $\bar c$ specified in \thmref{thm:verification} (ii) exist, so that we can define $(\hat c,\hat h)$ as in \eqref{hat c h}. If $(\hat c,\hat h)\in\A$ and satisfies \eqref{t to infty} and \eqref{n to infty}, then $(\hat c,\hat h)$ is an optimal control to the problem \eqref{problem}, and $w(x,m)= V(x,m)$ on $\R_+\times\R_+$.
\end{itemize}
\end{proposition}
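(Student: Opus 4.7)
The strategy is to reduce Proposition~\ref{prop:verification} to Theorem~\ref{thm:verification} via the $\eps$-perturbation $(c_\eps,h_\eps)$, which relocates the argument to a starting wealth $x+\eps$ where the integrability difficulties encountered when $\gamma>1$ (and $w$ blows up at $x=0$) are absent.

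For part (i), fix $(x,m)\in\R_+^2$ and $(c,h)\in\A$. The key observation is that the induction in the proof of Theorem~\ref{thm:verification}(i) only invokes \eqref{t to infty} and \eqref{n to infty} for the single policy under consideration, and that \eqref{t to infty'}--\eqref{n to infty'} are precisely these conditions applied to $(c_\eps,h_\eps)\in\A$. Reproducing that induction with $(x,m),(c,h)$ replaced by $(x+\eps,m),(c_\eps,h_\eps)$ yields
\begin{equation*}
w(x+\eps,m)\ \ge\ \E\left[\int_0^\infty e^{-\delta t}\,U\!\left((c_\eps)_t\,\zeta^{N^\eps_t}\,X^{0,x+\eps,c_\eps,h_\eps}_t\right)dt\right].
\end{equation*}
Now \eqref{identity} combined with the definition of $c_\eps$ in \eqref{eps-policies} gives the identity $(c_\eps)_t\,X^{0,x+\eps,c_\eps,h_\eps}_t=c_t\,X^{0,x,c,h}_t$, so the right-hand side equals $\E\bigl[\int_0^\infty e^{-\delta t}\,U(c_t\,\zeta^{N^\eps_t}\,X^{0,x,c,h}_t)\,dt\bigr]$. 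Sending $\eps\downarrow 0$, continuity of $w$ in $x$ handles the left side, while \eqref{eps to 0} handles the right side, producing $w(x,m)\ge \E\bigl[\int_0^\infty e^{-\delta t}\,U(c_t\,\zeta^{N_t}\,X^{0,x,c,h}_t)\,dt\bigr]$. Taking the supremum over $(c,h)\in\A$ gives $w(x,m)\ge V(x,m)$.

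Part (ii) is a direct application of Theorem~\ref{thm:verification}(ii) to the candidate $(\hat c,\hat h)$ itself. Since by hypothesis $(\hat c,\hat h)\in\A$ satisfies \eqref{t to infty} and \eqref{n to infty}, the induction in the proof of Theorem~\ref{thm:verification}(ii) applies to this one policy and all inequalities turn into equalities because $\bar c,\bar h$ achieve the pointwise suprema in the HJB equation \eqref{HJB}. Thus $w(x,m)=\E\bigl[\int_0^\infty e^{-\delta t}\,U(\hat c_t\,\zeta^{N_t}\,X^{0,x,\hat c,\hat h}_t)\,dt\bigr]\le V(x,m)$, which together with part (i) gives equality and identifies $(\hat c,\hat h)$ as optimal.

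The main obstacle is really the limit $\eps\downarrow 0$ that underlies \eqref{eps to 0}: because $h_\eps<h$ pointwise, the efficacy decreases, mortality grows faster, and the death counts satisfy $N^\eps_t\ge N_t$, so the integrands shift in a nontrivial $\eps$-dependent way. Establishing \eqref{eps to 0} therefore requires concrete model structure (as is done when this proposition is deployed in the proofs of Propositions~\ref{prop:main 1} and~\ref{prop:main 2} for $\gamma>1$), but the virtue of the proposition is that it packages this delicate continuity as a single hypothesis, leaving the verification machinery otherwise untouched. A minor bookkeeping point is to check that $(c_\eps,h_\eps)\in\A$ together with a well-defined counting process $N^\eps$, which is handled by the explicit recursive construction preceding the statement.
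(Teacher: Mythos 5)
Your proposal is correct and follows essentially the same route as the paper: run the induction of Theorem~\ref{thm:verification} for the perturbed data $(x+\eps,c_\eps,h_\eps,\tau^\eps_n,N^\eps_t)$ using \eqref{t to infty'}--\eqref{n to infty'}, rewrite the payoff via the identity $(c_\eps)_t X^{0,x+\eps,c_\eps,h_\eps}_t=c_tX^{0,x,c,h}_t$ from \eqref{identity} and \eqref{eps-policies}, let $\eps\downarrow 0$ using continuity of $w$ and hypothesis \eqref{eps to 0}, and then obtain part (ii) directly from Theorem~\ref{thm:verification}(ii) applied to $(\hat c,\hat h)$. No gaps.
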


\begin{proof}
We carry out the same arguments as in \thmref{thm:verification}. With the aid of \eqref{t to infty'}, we obtain \eqref{claim}, with $(x,c,h,\tau_{n},N_t)$ replaced by $(x+\eps,c_\eps,h_\eps,\tau^{\eps}_n,N^{\eps}_t)$. Letting $n\to \infty$ and using \eqref{n to infty'},
\[
w(x+\eps,m)\ge  \E\left[\int_{0}^{\infty} e^{-\delta t} U\left((c_\eps)_t \zeta^{N^{\eps}_t}{X}^{0,x+\eps,c_\eps,h_\eps}_t\right)dt\right]=\E\left[\int_{0}^{\infty} e^{-\delta t} U\left(c_t \zeta^{N^{\eps}_t} {X}^{0,x,c,h}_t\right)dt\right],
\]
where the equality follows from \eqref{identity} and the definition of $c_\eps$ in \eqref{eps-policies}. Sending $\eps\to 0$, we obtain from \eqref{eps to 0} that $w(x,m)\ge \E\big[\int_{0}^{\infty} e^{-\delta t} U\big(c_t \zeta^{N^{}_t} {X}^{0,x,c,h}_t\big)dt\big]$. Taking the supremum over $(c,h)\in\A$ leads to $w(x,m)\ge V(x,m)$. The proof of (ii) is the same as \thmref{thm:verification} (ii).
\end{proof}

Theorem~\ref{thm:verification with stock} can also be relaxed in a similar fashion. 

\begin{proposition}\label{prop:verification with stock}
Let $w\in C^{1,1}(\R_+\times\R_+)$ satisfy \eqref{HJB_risky asset}. For any $(x,m)\in\R^2_+$, $(c,h,\pi)\in \A'$, and $\eps>0$, suppose \eqref{t to infty'}, \eqref{n to infty'}, and \eqref{eps to 0} hold, with $X^{0,x,c^{(n)}_\eps,h^{(n)}_\eps}$replaced by $X^{0,x,c^{(n)}_\eps,h^{(n)}_\eps,\pi^{(n)}_\eps}$ and $\E[\ \cdot \mid Z_1,...,Z_n]$ by $\E[\ \cdot\mid \F_{\tau_n}]$. 
\begin{itemize}
\item [(i)] $w(x,m)\ge V(x,m)$ on $\R_+\times\R_+$. 
\item [(ii)] Suppose the measurable functions $(\bar c, \bar h, \bar\pi)$ specified in \thmref{thm:verification with stock} (ii) exist, so that we can define $(\hat c,\hat h, \hat \pi)$ therein. If $(\hat c,\hat h,\hat\pi)\in\A'$ and satisfies \eqref{t to infty} and \eqref{n to infty} as specified in \thmref{thm:verification with stock}, then $(\hat c,\hat h,\hat\pi)$ is an optimal control to the problem \eqref{problem_0_risky asset}, and $w(x,m)= V(x,m)$ on $\R_+\times\R_+$.
\end{itemize}
\end{proposition}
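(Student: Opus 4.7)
The plan is to mirror the proof of Proposition~\ref{prop:verification}, adapting each step to accommodate the stochastic wealth dynamics induced by the risky asset. First, given $(c,h,\pi)\in\A'$ and $\eps>0$, I would introduce the modified policies
\begin{equation*}
(c_\eps)_t := \frac{c_t X_t}{X_t+\eps e^{rt}},\quad (h_\eps)_t := \frac{h_t X_t}{X_t+\eps e^{rt}},\quad (\pi_\eps)_t := \frac{\pi_t X_t}{X_t+\eps e^{rt}},
\end{equation*}
where $X_t := X^{0,x,c,h,\pi}_t$, corresponding to keeping an extra deposit of size $\eps$ entirely in the safe asset. A direct It\^o calculation verifies the wealth identity $X^{0,x+\eps,c_\eps,h_\eps,\pi_\eps}_t = X_t + \eps e^{rt}$, from which $(c_\eps)_t X^{0,x+\eps,c_\eps,h_\eps,\pi_\eps}_t = c_t X_t$ for all $t\ge 0$. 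The associated death times $\{\tau^\eps_n\}$ and counting process $N^\eps$ are constructed from $h_\eps$ exactly as in Appendix~\ref{sec:appendix}, and $(c_\eps,h_\eps,\pi_\eps)\in\A'$ by construction.

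For part (i), I would apply It\^o's formula to $e^{-\int_0^s(\delta+M^{0,m,h_\eps}_\nu)d\nu} w(X^{0,x+\eps,c_\eps,h_\eps,\pi_\eps}_s,M^{0,m,h_\eps}_s)$ on each inter-jump interval, use that the admissible triple cannot exceed the supremum in \eqref{HJB_risky asset} to obtain a one-sided bound on the drift, and kill the It\^o term by localization. This reproduces the argument of Theorem~\ref{thm:verification with stock}~(i) applied to the perturbed system, yielding by induction on $n$ the stochastic analog of \eqref{claim}:
\begin{equation*}
w(x+\eps,m)\ \ge\ \E\bigg[\int_0^{\tau^\eps_n} e^{-\delta t}U\big(c_t\zeta^{N^\eps_t}X_t\big)\,dt\bigg] + \E\Big[e^{-\delta\tau^\eps_n} w\big(\zeta^n X^{0,x+\eps,c_\eps,h_\eps,\pi_\eps}_{\tau^\eps_n},M^{0,m,h_\eps}_{\tau^\eps_n}\big)\Big],
\end{equation*}
where the running integrand has been rewritten via the wealth identity, and the conditioning on $\F_{\tau_n}$ replaces that on $(Z_1,\dots,Z_n)$ because of the Brownian component. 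Sending $n\to\infty$ and invoking \eqref{n to infty'} eliminates the terminal term, and sending $\eps\to 0$ via \eqref{eps to 0} gives $w(x,m)\ge\E\big[\int_0^\infty e^{-\delta t}U(c_t\zeta^{N_t}X_t)\,dt\big]$. Taking the supremum over $(c,h,\pi)\in\A'$ yields $w(x,m)\ge V(x,m)$.

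For part (ii), the triple $(\hat c,\hat h,\hat\pi)$ saturates pointwise the supremum in \eqref{HJB_risky asset}, so the one-sided inequality above becomes an equality along the entire trajectory. Since by hypothesis $(\hat c,\hat h,\hat\pi)\in\A'$ already satisfies the decay conditions \eqref{t to infty} and \eqref{n to infty} of Theorem~\ref{thm:verification with stock}, the $\eps$-perturbation is not needed and Theorem~\ref{thm:verification with stock}~(ii) applies directly, giving $w(x,m)=V(x,m)$ and optimality of $(\hat c,\hat h,\hat\pi)$.

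The main obstacle I expect is the careful handling of the stochastic integral terms under localization: one must choose a sequence $\sigma_k\uparrow\infty$ so that the stopped It\^o integrals are genuine martingales while \eqref{t to infty'}--\eqref{n to infty'} survive the subsequent passage $k\to\infty$. A related subtlety is the Fubini-type rearrangement used to reduce inter-jump integrals to unconditional expectations: on the product space $(\Omega_1\times\Omega_2,\P_1\otimes\P_2)$ the death times depend on both coordinates, so \eqref{tau>t'} must be invoked for fixed $\omega_2$ before integrating against $\P_2$.
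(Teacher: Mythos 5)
Your proposal is correct and follows exactly the route the paper intends: the paper gives no explicit proof of this proposition, stating only that Theorem~\ref{thm:verification with stock} ``can also be relaxed in a similar fashion,'' i.e., by combining the $\eps$-perturbation argument of Proposition~\ref{prop:verification} with the It\^o-formula and $\F_{\tau_n}$-conditioning adaptations of Theorem~\ref{thm:verification with stock}. Your choice $(\pi_\eps)_t = \pi_t X_t/(X_t+\eps e^{rt})$ is the right one (it keeps the extra $\eps e^{rt}$ entirely in the safe asset, so the wealth identity and $(c_\eps)_t X^{0,x+\eps,c_\eps,h_\eps,\pi_\eps}_t = c_t X_t$ both hold), and your flagged subtleties (localization of the stochastic integral, Fubini on the product space via \eqref{tau>t'}) are precisely the points the full argument must address.
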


\end{document}